\newcommand{\PP}{./}
\newcommand{\FP}{./}
\newcommand{\EP}{./}
\newcommand{\taubar}{\bar\tau}
\newcommand{\ktruss}{$k$-truss\xspace}
\newcommand{\ignore}[1]{}
\newcommand{\nop}[1]{}
\newcommand{\eat}[1]{}
\newcommand{\kw}[1]{{\ensuremath {\mathsf{#1}}}\xspace}
\newcommand{\stitle}[1]{\vspace{1ex} \noindent{\bf #1}}
\long\def\comment#1{}
\newcommand{\red}[1]{{\color{red}{#1}}}
\newcommand{\LL}[1]{{#1}} 
\newcommand{\TobeDelete}[1]{\eat{}}
\newcommand{\xin}[1]{{#1}}
\newcommand{\add}[1]{\eat{}} 
\newcommand{\cut}[1]{{#1}}
\newcommand{\RV}[1]{{#1}} 
\newcommand{\new}[1]{{#1}} 
\newtheorem{definition}{Definition}
\newtheorem{lemma}{Lemma}
\newtheorem{theorem}{Theorem}
\newtheorem{example}{Example}
\newtheorem{corollary}{Corollary}
\newtheorem{remark}{Remark}
\newtheorem{problem}{Problem}
\newtheorem{observation}{Observation}
\newcommand{\dist}{\kw{dist}}
\newcommand{\diam}{\kw{diam}}
\newcommand{\con}{\kw{connect}}
\newcommand{\LCTC}{\kw{LCTC}}
\newcommand{\MDC}{\kw{MDC}}
\newcommand{\ACC}{\kw{ACC}}
\newcommand{\doc}{\kw{attr }}
\newcommand{\kcs}{\kw{ACS }}
\newcommand{\acs}{\kw{ACS }}
\newcommand{\cs}{\kw{CS }}
\newcommand{\ks}{\kw{KS }}
\newcommand{\tf}{\kw{TF }}
\newcommand{\ktc}{\kw{ATC }}
\newcommand{\atc}{\kw{ATC }}
\newcommand{\ktcs}{\kw{ATCs }}
\newcommand{\keyw}{\kw{f }}
\newcommand{\den}{\kw{coh }}
\newcommand{\com}{\kw{com }}
\newcommand{\VEDalK}{\kw{WDalK }-\kw{Problem}}
\newcommand{\WDalK}{\kw{WDalK }-\kw{Problem}}
\newcommand{\WDK}{\kw{WDK }-\kw{Problem}}
\newcommand{\score}{\kw{score }}
\newcommand{\ktcp}{\kw{ATC}-\kw{Problem}}
\newcommand{\ktcrp}{\kw{ATCr}-\kw{Problem}}
\newcommand{\atcp}{\kw{ATC}-\kw{Problem}}
\newcommand{\prin}{\kw{Principle}}
\newcommand{\basic}{\kw{Basic}}
\newcommand{\invk}{\kw{invA}}
\newcommand{\invA}{\kw{invA}}
\newcommand{\ati}{\kw{AT index}}
\newcommand{\gain}{\kw{gain}}
\newcommand{\LATC}{\kw{LocATC}}
\newcommand{\bulk}{\kw{BULK}}
\newcommand{\kdtruss}{\kw{(k,d)}-\kw{truss}}
\newcommand{\kdtrusses}{\kw{(k,d)}-\kw{trusses}}
\newcommand{\w}{\kw{w}}
\begin{document}

\title{Attribute-Driven Community Search}

\author{
{Xin Huang, Laks V.S. Lakshmanan}%
\vspace{1.6mm}\\
\fontsize{10}{10}\selectfont\itshape
University of British Columbia\\
\fontsize{9}{9}\selectfont\ttfamily\upshape
\{xin0,laks\}@cs.ubc.ca
}

\maketitle

\begin{abstract}
\add{Recently, community search over graphs has gained significant interest. In applications such as analysis of protein-protein interaction (PPI) networks, citation graphs, and collaboration networks, nodes tend to have attributes. Unfortunately, \new{most} previous community search algorithms ignore attributes and result in communities with poor cohesion w.r.t. their node attributes. In this paper, we study the problem of attribute-driven community search, that is, given an undirected graph $G$ where nodes are associated with attributes, and an input query $Q$ consisting of nodes $V_q$ and attributes $W_q$, find the communities containing $V_q$, in which most community members are densely inter-connected and have similar attributes. }

\add{We formulate this problem as finding attributed truss communities (ATC), i.e., finding connected and close k-truss subgraphs containing $V_q$, with the largest attribute relevance score. We design a framework of desirable properties that good score function should satisfy. We show that the problem is NP-hard. However, we develop an efficient greedy algorithmic framework to iteratively remove nodes with the least popular attributes, and shrink the graph into an \atc. In addition, we also build an elegant index to maintain $k$-truss structure and attribute information, and propose efficient query processing algorithms. Extensive experiments on large real-world networks with ground-truth communities show that our algorithms significantly outperform the state of the art and demonstrates their efficiency and effectiveness.  
}
\cut{Recently, community search over graphs has attracted  significant 
attention and many algorithms have been developed for finding dense
subgraphs from large graphs that contain given query nodes. In applications such as analysis of protein protein interaction (PPI) networks, citation graphs, and collaboration networks, nodes tend to have attributes. Unfortunately, \new{most} previously developed community search algorithms ignore these attributes and result in communities with poor cohesion w.r.t. their node attributes. In this paper, we study the problem of attribute-driven community search, that is, given an undirected graph $G$ where nodes are associated with attributes, and an input query $Q$ consisting of nodes $V_q$ and attributes $W_q$, find the communities containing $V_q$, in which most community members are densely inter-connected and have similar attributes. 

We formulate our problem of finding attributed truss communities (ATC), as finding all connected and close k-truss subgraphs containing $V_q$, that are locally maximal and have the largest attribute relevance score among such subgraphs. We design a novel attribute relevance score function and establish its desirable properties. The problem is shown to be NP-hard. 
 However, we develop an efficient greedy algorithmic framework, which finds a maximal $k$-truss containing $V_q$, and then iteratively removes the nodes with the least popular attributes and shrinks the graph so as to satisfy community constraints. We also build an elegant index to maintain the known $k$-truss structure and attribute information, and propose efficient query processing algorithms. Extensive experiments on large real-world networks with ground-truth communities shows the efficiency and effectiveness of our proposed methods.} 
\end{abstract}

\section{Introduction}
\label{sec:intro} 
\eat{ 
\note[Laks]{See comments about motivation. They need to be addressed. Here are some notes for use during my pass about the anticipated flow of the intro. A few general remarks about CS and KS, including review of key literature in these areas. A few lines and an in-line motivational example illustrating the limitation of CS and KS, also saying what the desired ``answer'' is and why CS and KS won't catch it. This example needs to be concrete and not abstract. This is critical. Then we can follow it up with a more detailed discussion as in the current intro. Finally, challenges in semantics, score function, and ranking. Then list of contributions.} 

\note[Xin]{I have added new examples to show the weakness of CS and KS in Figure 1, 2 and 3.}
}

\cut{
Graphs have emerged as a powerful model for representing different types of data. For instance, unstructured data (e.g., text documents), semi-structured data (e.g., XML databases) and structured data (e.g., relational databases) can all be modeled as graphs, where the vertices(nodes) are respectively documents, elements, and tuples, and the edges can respectively be hyperlinks, parent-child relationships, and primary-foreign-key relationships \cite{li2008ease}.  
In these graphs, communities naturally exist as groups of nodes that are densely interconnected. 
Finding communities in large networks has found extensive applications in protein-protein interaction networks, sensor/communication networks, and collaboration networks. Consequently, community detection, i.e., finding all communities in a given network, serves as a global network-wide analysis tool, and has been extensively studied in the literature. Specifically, various definitions of communities based on different notions of dense subgraphs have been proposed and studied: quasi-clique \cite{CuiXWLW13}, densest subgraph \cite{wu2015robust}, $k$-core \cite{sozio2010,li2015influential,cui2014local,barbieri2015efficient}, and k-truss \cite{huang2014,huang2015approximate}. More recently, a related but different problem called community search has generated considerable interest. It is motivated by the need to make answers more meaningful and personalized to the user \cite{mcauley2012learning, huang2014}. For a given set of query nodes, community search seeks to find the communities containing the query nodes.}
\add{Graphs have emerged as a powerful model for representing different types of data, such as protein-protein interaction networks, sensor/communication networks, and collaboration networks. In these graphs, communities naturally exist as groups of nodes (vertices) that are densely interconnected. Consequently, community detection, i.e., finding all communities in a given network, serves as a global network-wide analysis tool, and has been extensively studied in the literature. \cut{Specifically, various definitions of communities based on different notions of dense subgraphs have been proposed and studied: quasi-clique \cite{CuiXWLW13}, densest subgraph \cite{wu2015robust}, $k$-core \cite{sozio2010,li2015influential,cui2014local,barbieri2015efficient}, and k-truss \cite{huang2014,huang2015approximate}.} More recently, a related but different problem called community search has generated considerable interest. It is motivated by the need to make answers more meaningful and personalized to the user \cite{mcauley2012learning, huang2014}. For a given set of query nodes, community search seeks to find the communities containing the query nodes. }

\begin{figure}[t]
\scriptsize
\vskip -0.2in
\centering
\includegraphics[width=0.8\linewidth]{\FP 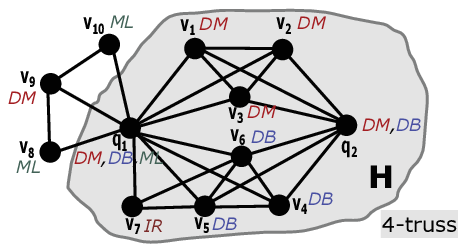}
\vskip -0.1in
\caption{An example attributed graph $G$}\vskip -0.2in
\label{fig.community}
\end{figure}

\eat{
\note[Laks]{Subgraph $H$ should be highlighted.} 
\note[Xin]{Done}
}

\begin{figure*}[t]
\scriptsize
\vspace{-0.6cm}
\centering
{
\subfigure[\textbf{$H_1$}. \small{4-truss community on $V_q=\{q_1, q_2\}$, $W_q=\{DB\}$}]{\includegraphics[width=0.22\linewidth]{\FP 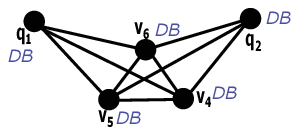} }\hskip 0.10in
\subfigure[$H_2$. \small{4-truss community on $V_q=$ $\{q_1,$$ q_2\}$, $W_q=\{DB, DM\}$}]{\includegraphics[width=0.27\linewidth]{\FP 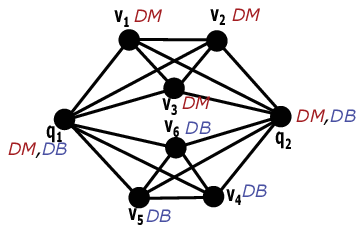} } \hskip 0.10in
\subfigure[$H_3$. \small{4-truss community on $V_q=\{q_1, q_2\}$, $W_q=\{DM\}$}]{\includegraphics[width=0.22\linewidth]{\FP 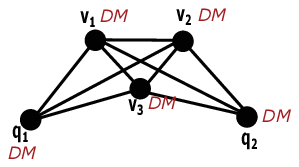} }\hskip 0.10in
\subfigure[$H_4$. \small{3-truss community on $V_q=\{q_1\}$, $W_q=\{ML\}$}]{\includegraphics[width=0.16\linewidth]{\FP 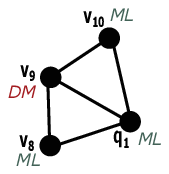} }\hskip 0.15in
}
\vspace{-0.4cm}
\caption{Attribute Communities for queries on different query nodes $V_q$ and query attributes $W_q$.}
\label{fig.subcom}\vspace{-0.4cm}
\end{figure*}

\eat{ 
\note[Laks]{Figure and caption alignment need adjusting. For (a)-(d), query keywords are shown but not query nodes. Let's include $V_q$ as well in each case.}
\note[Xin]{Done} 
} 

In the aforementioned applications, the entities modeled by the network nodes  often have properties which are important for making sense of communities. E.g., authors in collaboration networks have areas of expertise; 
proteins have molecular functions, biological processes, and cellular components as properties. Such networks can be modeled using \emph{attributed graphs} \cite{zhou2009graph} where attributes associated with nodes capture their properties. E.g., Figure \ref{fig.community} shows an example of a collaboration network. The nodes $q_i, v_j, ...$ represent authors. Node attributes (e.g., DB, ML) represent authors' topics of expertise. In finding communities (with or without query nodes) over attributed graphs, we might want to ensure that the nodes in the discovered communities have homogeneous attributes. For instance, it has been found that communities with homogeneous attributes among nodes more accurately predict protein complexes \cite{hu2013utilizing}. Furthermore, we might wish to query, not just using query nodes, but also using query attributes. To illustrate, consider searching for communities containing the nodes $\{q_1, q_2\}$. Based on structure alone, the subgraph $H$ shown in Figure \ref{fig.community} is a good candidate answer for this search, as it is densely connected. However, attributes of the authors in this community are not homogeneous: the community is a mix of authors working in different topics -- DB, DM, IR, and ML. 
\new{Previous  community search methods include those based on $k$-core \cite{sozio2010, li2015influential, cui2014local}, $k$-truss  \cite{huang2015approximate}, 
 and 1.0-quasi-$k$-clique-$\ell$-adjacent community \cite{CuiXWLW13}.
A $k$-core \cite{li2015influential} is a subgraph in which each vertex has at least $k$ neighbors within the subgraph. A $k$-truss  \cite{huang2015approximate} is a subgraph in which each edge is contained in at least $(k-2)$ triangles within
the subgraph. The 1.0-quasi-$k$-clique-$\ell$-adjacent community model \cite{CuiXWLW13} allows two $k$-cliques overlapping in $\ell$ vertices to be merged into one community. In Figure~\ref{fig.community}, for $k=4$ and $\ell=3$, all these community models will report $H$ as the top answer and are thus unsatisfactory. }
\add{Thus, in general, communities found by \new{most} previous community search methods can be hard to interpret owing to the heterogeneity of node attributes. Furthermore, the communities reported could contain smaller dense subgraphs with more homogeneity in attributes, which are missed by \new{most} previous methods. 
\new{A recent work \cite{FangCLH16} proposed an attribute community model. A detailed comparison of \cite{FangCLH16} with our model appears in Section \ref{sec:related}.} }
\cut{The subgraph $H_2$ obtained from $H$ by removing node $v_7$ with unique attribute IR, is a more homogeneous community than $H$ and is just as densely connected (see Figure~\ref{fig.subcom}(b)). Intuitively, it is a better answer than $H$. Thus, in general, communities found by \new{most} previous community search methods can be hard to interpret owing to the heterogeneity of node attributes. Furthermore, the communities reported could contain smaller dense subgraphs with more homogeneity in attributes, which are missed by \new{most} previous methods. \new{A recent work \cite{FangCLH16} proposed an attribute community model. A detailed comparison of \cite{FangCLH16} with our model can be found in Section \ref{sec:related}.}}  Consider now querying the graph of Figure~\ref{fig.community} with query nodes $\{q_1, q_2\}$ {\sl and} attributes (i.e., keywords) \{DB, DM\}. We would expect this search to return subgraph $H_2$ (Figure~\ref{fig.subcom}(b)). On the other hand, for the same query nodes, if we search with attribute \{DB\} (resp., \{DM\}), we expect the subgraph $H_1$ (resp., $H_3$) to be returned as the answer (Figure~\ref{fig.subcom}(a)\&(c)). Both $H_1$ and $H_3$ are dense subgraphs where all authors share a common topic (DB or DM). 

Given a query consisting of nodes and attributes (keywords), one may wonder whether we can filter out nodes not having those attributes and then run a conventional community search method on the filtered graph. To see how well this may work, consider querying the graph in Figure~\ref{fig.community} with query node $q_1$ and query attribute ML. Filtering out nodes without attribute ML and applying community search yields the chain consisting of $v_{10}, q_1, v_8$, which is not densely connected. On the other hand, the subgraph induced by $\{q_1, v_8, v_9, v_{10}\}$ is a 3-truss in Figure~\ref{fig.subcom}(d). Even though it includes one node without ML it is more densely connected than the chain above and is a better answer than the chain as it brings out denser collaboration structure among the authors in the community. Thus, a simple filtering based approach will not work. \cut{As some denser subgraphs may be less homogeneous in their node attributes than some sparser ones and a careful balance has to be struck between density and attribute homogeneity. }

\begin{figure}[t]
\scriptsize
 \vspace{-0.2cm}
\centering
{
{\includegraphics[width=0.35\linewidth]{\FP 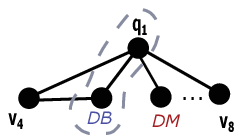} }
\hskip 0.15in
}
\vspace{-0.6cm}
\caption{Keyword Search with query $W_q=\{q_1, DB\}$}
\label{fig.ks}\vspace{-0.5cm}
\end{figure}


Another topic related to our problem is keyword search over graphs, which has been extensively studied~\cite{dbxplorer02,hristidis2002discover,hristidis2003efficient,bhalotia2002keyword,kacholia2005bidirectional,ding2007finding}. A natural question is whether we can model the information suitably and leverage keyword search to find the right communities. We could model authors' attributes also as nodes and directly connect them to the author nodes and query the resulting graph with the union of the author id's and the keywords. Figure~\ref{fig.ks} illustrates this for a small subgraph of Figure~\ref{fig.community} and a query. Keyword search finds answers corresponding to trees or subgraphs with minimum communication cost that connect the input keywords/nodes, where the communication cost is based on diameter, query distance, weight of spanning tree or steiner tree. On this graph, if we search for the query node $q_1$ and attribute DB, we will get the single edge connecting $q_1$ and DB as the answer as this is the subgraph with minimum communication cost connecting these two nodes. Clearly, this is unsatisfactory as a community. 

In sum, attributed graphs present novel opportunities for community search by combining dense structure of subgraphs with the level of homogeneity of node attributes in the subgraph. \new{Most} previous work in community search fails to produce satisfactory answers over attributed graphs, while keyword search based techniques do not find dense subgraphs. 
\cut{The main problem we study in this paper is finding top-$r$ communities from attributed graphs, given a community search query consisting of query nodes and query attributes. This raises the following major challenges. }
\add{This raises the following major challenges for community search in attributed graphs.} 
Firstly, how should we combine dense connectedness with the distribution of attributes over the community nodes? We need a community definition that promotes dense structure as well as attribute homogeneity. However, there can be tension between these goals: as illustrated in the example above, some denser subgraphs may be less homogeneous in their node attributes than some sparser ones. Secondly, the definition should capture the intuition that the more input attributes that are covered by a community, the better the community. 
\add{Finally, the query processing algorithm should be efficient for large graphs.} 
\cut{Finally, we need to find the answer communities from large input graphs in an efficient manner. }

\eat{ 
\note[Xin]{I make the modifications and fill the contributions as follow.}
} 

\note[Laks]{We need to justify why for query nodes we require that all of them be contained whereas for query attributes we do not strictly require this. Why not require containment for both or why not require containment for query attributes and relax it for query nodes? What will go wrong?} 
\note[Xin]{The query nodes may not form a community under every attribute. That's why we relax the condition. Since our problem is user-query style, it is more easy(familiar) for user to give the query nodes than give attributes. That's why we consider all query nodes are included into a community. It is a hard question. I will continue updating it if I have a better answer.}

To tackle these challenges, we propose an attributed truss community (\atc) model. Given a query $Q =(V_q, W_q)$ consisting of a set of query nodes $V_q$ and a set of query attributes $W_q$, a good community $H$ must be a dense subgraph which contains all query nodes and attributes $W_q$ must be contained in numerous nodes of the community. The more nodes with attribute $w\in W_q$, the more importance to $w$ commonly accorded by the community members.  Additionally, the nodes must share as many attributes as possible. Notice that these two conditions are not necessarily equivalent. 
\note[Laks]{See if the two lines above are what you intended.} 
\eat{
Capturing these intuitions, we define an attribute score function that strikes a balance between attribute homogeneity and coverage and define an \atc as a $k$-truss containing all query nodes and query attributes, such that $k$ is as large as possible and the attribute score is maximized, subject to a constraint on the distance of community nodes from query nodes. We make the following contributions. 
}
Capturing these intuitions, we define an attribute score function that strikes a balance between attribute homogeneity and coverage. Moreover, as a qualifying cohesive and tight structure, we define a novel concept of \kdtruss for modeling a densely connected community. 
\eat{ 
\note[Laks]{Is \kdtruss novel? Wasn't it already defined in our previous paper? Just checking.}  
\note[Xin]{We haven't define it before.}
} 
A \kdtruss is a connected $k$-truss containing all query nodes, where each node has a distance no more than $d$ from every query node. This inherits many nice structural properties, such as bounded diameter, $k$-edge connectivity, and hierarchical structure. Thus, based on attribute score function and \kdtruss, we propose a novel community model as \textbf{attributed truss community} (\atc), which is a \kdtruss with the maximum attribute score. In this paper, we make the following contributions. 

\note[Laks]{Given that there has been previous work on attributed CD, shouldn't we compare our desiderata and our def. with those used in previous attributed CD papers?} 
\note[Xin]{I am not sure where is best to talk the difference. Here is my answer. Attribute community detection targets all communities in the whole network and usually applies a global criterion to find communities, like graph clustering methods(Zhou et al.). On the other hand,  attribute community detection finds a community with homogeneous attributes, but they treat each attribute equally. In contrast, attribute community search provides personalized community detection not only for given query nodes but also for perfering attributes.}

\begin{itemize} 
\vspace{-0.2cm}
\item We motivate the problem of attributed community search, and identify the desiderata of a good attributed community \eat{using four criteria -- participation, cohesiveness, communication cost, and attribute coverage and correlation} (Section \ref{sec:prelims}). 
\eat{
\item We make a comprehensive comparison with related work on several topics, such as community search, keyword search, team formation and attributed community detection. (Section \ref{sec:related}) 
\note[Laks]{Remove the previous bullet on related work as it's not a technical contribution.} 
\note[Xin]{Agree.}
} 
\vspace{-0.15cm}
\item \cut{We propose a novel dense and tight subgraph, \kdtruss, and design an attribute score function satisfying the desiderata set out above. Based on this, we propose a community model called attributed truss community (\atc), and formulate the problem of attributed community search as finding \atc (Section \ref{sec:prob}). }
\add{We propose a novel dense and tight subgraph, \kdtruss, and design an attribute score function satisfying the desiderata set out above. Based on this, we propose \atc community model, and formulate the problem (Section \ref{sec:prob}).}
\vspace{-0.15cm}
\item We analyze the structural properties of \atc and show that it is non-monotone, non-submodular and non-supermodular, which signal huge computational challenges. We also formally prove that the problem is NP-hard (Section \ref{sec:analysis}). 
\vspace{-0.15cm}
\item We develop a greedy algorithmic framework to find an \atc containing given query nodes w.r.t. given query attributes. It first \eat{globally} finds a maximal \kdtruss, and then iteratively removes nodes with smallest attribute score contribution. For improving the efficiency and quality, we design a revised attribute marginal gain function and a bulk removal strategy for cutting down the number of iterations (Section \ref{sec:algo}). 
\vspace{-0.15cm}
\item \add{For further improving efficiency, we construct a novel index called \ati. Based on \ati, we develop an algorithm for efficiently exploring the local neighborhood of query nodes to search for an \atc.} \cut{For further improving efficiency, we explore the local neighborhood of query nodes to search an \atc. This algorithm first generates a Steiner tree connecting all query nodes, and then expands the tree to a dense subgraph with the insertion of carefully selected nodes, that have highly correlated attributes and densely connected structure  (Section \ref{sec:index}). }
\vspace{-0.15cm}
\item We conduct extensive experiments on 7 real datasets, and show  that our attribute community model can efficiently and effectively find  ground-truth communities and social circles over real-world networks, significantly outperforming previous work (Section \ref{sec:exp}). 
\note[Laks]{If the real datasets include biological and sensor networks too, what ground-truth communities are we talking about here?} 
\note[Xin]{In biological, the community can be a group of protein with similar molecular functions or cellular components. Try to incldue the protein datasets.}
\vspace{-0.15cm}
\end{itemize}

We discuss  related work in Section \ref{sec:related}, and conclude the paper with a summary in Section \ref{sec:conc}.

\eat{ 
\note[Laks]{Xin, can you fill in these contributions, preferably along with a reference to the section in which each contribution is made?} 
} 

\note[Xin]{Do we need to give a defintion of k-truss, that is, every edge has triangle no less than k-2 in this k-truss.}

\eat{ 
\xin{For example, on the DBLP network, for one query $Q=(V_q, W_q)$ where $V_q=\{q\}$ and $W_q=\{DB, DM\}$. Three quried communities are respectively in Figure \ref{fig.subcom} (a), (c) and (c).  Two different communities containing $q$  in Figure \ref{fig.subcom} (a), (c) respectively working on $DM$ and $DB$, it is easily to accept that Figure \ref{fig.subcom} (a) and (c) are good answers. On the other hand, the discovered comunity in Figure \ref{fig.subcom} (b) involves a lot of person works on the mixture of $DM$ and $DB$, which seems to be a better answer. Therefore, how to listed all possible answers and ranking them relevance with attributes is a big challenge. Assume that, we ask a new query $Q=(V_q, W_q)$ where $V_q=\{q\}$ and $W_q=\{DB\}$, obviously Figure \ref{fig.subcom} (a) should have higher score than Figure \ref{fig.subcom} (c). Because, in terms of structure, Figure \ref{fig.subcom} (a) and (c) are the same, but Figure \ref{fig.subcom} (a) have more realvant attributes of $DB$ than Figure \ref{fig.subcom} (c), thus the designed rank function on $DB$ should perfer Figure \ref{fig.subcom} (a) than Figure \ref{fig.subcom} (c). }
}  

\eat{ 
\note[Laks]{Need to add an example showing that filtering nodes of Figure \ref{fig.community} for desired keywords and then running standard community search will also not work well.} 
} 

\eat{ 
Previous work on community search is confined to the densely connected structure of the discovered subgraph. However, this fails to guarantee that the community nodes have any level of homogeneity in their attributes, as we illustrate with an example in Section~\ref{sec:motiv}.  

This raises two major challenges. Firstly, how should we define the semantics of community search over attributed graphs?  

However, given a set of query nodes and attributes, it is pratically hard to apply the community detection approaches for online finding communities containing the input query nodes and attributes, due to the global cretira of community detection and the inefficiency drawbacks for highly dynamic graphs \cite{huang2014}. In this paper, we study the problem of attribute-driven community search(\acs) that is to locally detect meaningful community containing query-related nodes, in terms of cohesive structure and homogeneous attribute. As the communities for different vertices and attributes in a network may have very different characteristics, this user-centered personalized search has a wide of application in social networks, biological networks, citation networks and so on. \xin{For instance, in a social network, the community formed by a person's high school classmates can be significantly different from the community formed by her family members, in terms of the attributes of education history and family names, which in turn can be quite different from the one formed by her colleagues in terms of attributes on workpalce and salary. One more application of Protein-Protein Interaction(PPI) networks, where proteins are nodes and the interaction between protein is an edge. \cite{hu2013utilizing} shows three categories of attributes associated on the protein, such as, biological processes, molecular functions, and cellular components. Communities in PPI networks are considered as protein complexs of biomolecules that contain a number of proteins interacting with each other to perform different cellular functions, e.g., \emph{exosome complex} and \emph{nuclear pore complex}.} 

In addition, these nodes always contain attribute information, such as text contexts or keywords. Such graphs with attributes are usually regarded as attributed graphs \cite{zhou2009graph}.  

Given a set of query nodes, community search on a graph is to find densely connected subgraph containing query nodes. The discovered communities are concentrate on the structures, regardless of attribute contexts.
Various models based on different dense subgraphs have been proposed recently, such as quasi-clique\cite{CuiXWLW13}, densest subgraph\cite{wu2015robust}, k-core\cite{sozio2010,li2015influential,cui2014local,barbieri2015efficient} and k-truss\cite{huang2014,huang2015approximate}.  Thus, all these state-of-the-art community search algorithms with query nodes focus on maximizing the structure cohesiveness, and ignore the partical attributes realted queries. As a consequence, the discovered communities are hard to explain without specific context environments. Even more, small communities with interesting attributes are hard to detect using these methods. \red{Figure \ref{fig.community} shows an illustrating example of a DBLP collaboration network $G$, where a vertex represents an author and an edge  inidicates the collaboration relationship between two authors. In addition, each author is associated with an author name(e.g. $q_1$, $q_2$) and  primary research topics. As we can see, $q_1$ work on DB(Database), DM(Data Mining) and ML(Machine Learning). Given query nodes $Q=\{q_1\}$, the community search methods, such as 4-core \cite{sozio2010,li2015influential,cui2014local} 
and 1.0-quasic-4-clique-3-adjacent community \cite{CuiXWLW13} both regard the upper part subgraph(Figure \ref{fig.subcom}(b) + the node $v_7$ with ``IR'') as a community. \note[xin]{If need, a new figure can be drawn to support.} The community is densely connected in structure, but the author topics are from 4 different areas, which are not  homogeneous . If users are interested in the community on $DB$, obviously, this community is not better than Figure \ref{fig.subcom}(a), in which all authors work on $DB$ and has closely and densely connections.   }

As another most related work with ours, keyword search over a graph finds a substructure of the graph containing all or partial of the input keywords or attribute values. There exists three disadvantages for apply them approaches on the community search. First of all, most of keyword search algorithms in graphs are to find trees or subgraphs satisfying the keyword constraints and minimizing the communication cost(diameter, query distance, the weight of spanning tree, the weight of steniter tree), however this simple function is instability: a slight change in the graph may result in a radical change in the solution, due to the weak connectivity. Thus, the models regardless of structural connectiviting is not properly fit for the community models. Secondly, keyword search queries always have no query nodes, which is different our problem setting. Most of their methods need to look over the whole graphs, leading to poor efficiency. Because only attributes are given, all nodes satisfied the given attributes can serve as protenical answers. Commuinty search focus on the local neighborhood of given query-related nodes, which enjoys the high effienciy of lcoal search. Thridly, all these methods tends to satisfying the minimal attributes constraints, i.e., for each the input attribute, there exists a node associated with it in the answer. However, for a given attribute, community search tends to find all nodes that are densely connected with the query nodes and also have given attributes. In addition, some hub nodes that have no related attributes, but playing a key role of connecting a large amount of query-related nodes are also our interests.  \xin{For example, consider th graph in Figure \ref{fig.community} and the attribute search query $W_q=\{q_1, DB\}$, here we regard the author name as a keyword in the query, and represent the relationship of an author working on a topic as an edge in Figure \ref{fig.ks}. keyword search\cite{bhalotia2002keyword,qin2009querying,li2008ease} would find with the minimum total distance cost to link these two attributes into a subgraph or tree. Obviously, in Figure \ref{fig.ks}, the path of $(q_1, DB)$ would be one best answer, which only cost the total distance of 1. Different with keyword search, our attribute community search using this query $W_q$ would regard the community in Figure \ref{fig.subcom}(a) as a good anwer. As we can see, all these five vertices are densely and closely conncted into a 4-truss. }
} 

\eat{ 
In this paper, we proposed an attribute truss community(\atc) model. Given a query $Q =(V_q, W_q)$ with a set of input nodes $V_q$ and a set of attributes $W_q$, a good community $H$ must contain all query nodes and meanwhile it hold one vertex satisfying at least one attribute $\lambda \in W_q$ (\red{a basic attribute condition}). \red{A perfessional attribute condition} should be that most of nodes in the community share at least one of the same attributes. \xin{For example, on the DBLP network, for one query $Q=(V_q, W_q)$ where $V_q=\{q\}$ and $W_q=\{DB, DM\}$. Three quried communities are respectively in Figure \ref{fig.subcom} (a), (c) and (c).  Two different communities containing $q$  in Figure \ref{fig.subcom} (a), (c) respectively working on $DM$ and $DB$, it is easily to accept that Figure \ref{fig.subcom} (a) and (c) are good answers. On the other hand, the discovered comunity in Figure \ref{fig.subcom} (b) involves a lot of person works on the mixture of $DM$ and $DB$, which seems to be a better answer. Therefore, how to listed all possible answers and ranking them relevance with attributes is a big challenge. Assume that, we ask a new query $Q=(V_q, W_q)$ where $V_q=\{q\}$ and $W_q=\{DB\}$, obviously Figure \ref{fig.subcom} (a) should have higher score than Figure \ref{fig.subcom} (c). Because, in terms of structure, Figure \ref{fig.subcom} (a) and (c) are the same, but Figure \ref{fig.subcom} (a) have more realvant attributes of $DB$ than Figure \ref{fig.subcom} (c), thus the designed rank function on $DB$ should perfer Figure \ref{fig.subcom} (a) than Figure \ref{fig.subcom} (c). }
}

\eat{ 
The challenge of this problems lies on a subgraph with dense structure may not have similar related query attributes. On the other hand, a set of node with homogenous attributes may have loosely connections in structure. Thus, how to efficiet detect an \atc is interesting and challenging. 
} 

\section{Preliminaries and Desiderata} 
\vspace{-0.3cm}
\label{sec:prelims}

\eat{ 
\note[Laks]{Table 1: It may turn out that in our final formalization, we may model some of these as an objective function and constrain others. So referring to everything as a constraint may be confusing.} 
\note[Xin]{How about the ``Condition'' in Table 1.}
}

\subsection{Preliminaries} 
We consider an undirected, unweighted simple graph $G=(V,$ $ E)$ with $n=|V(G)|$ vertices and $m=|E(G)|$ edges. We denote the set of neighbors of a vertex $v$ by $N(v)$, and the degree of $v$ by $d(v)=|N(v)|$. We let  \LL{$d_{max} = \max_{v\in V}d(v)$} denote the maximum vertex degree in $G$. W.l.o.g. we assume that the graphs we consider are connected. Note that this implies that $m\geq n-1$.  We consider attributed graphs and denote the set of all attributes in a graph by $\mathcal{A}$.
Each node $v\in V$ contains a set of zero or more attributes, denoted by $\doc(v)\subseteq \mathcal{A}$. The multiset union of attributes of all nodes in $G$ is denoted $\doc(V)$. Note that $|\doc(V)|=\sum_{v\in V}|\doc(v)|$. We use $V_{w} \subseteq V$ to denote the set of nodes having attribute $w$, i.e., $V_{w} = \{v\in V \mid w \in \doc(v)\}$. 
\note[Laks]{Need to define notions like $k$-core, $k$-truss.}  

\eat{Table \ref{tab:notations} summarizes the frequently used notations in the paper.}

\eat{In the following, we introduce a generlization of \textbf{A}ttribute-driven \textbf{C}ommunity \textbf{S}earch (\kcs) problem.} 




\subsection{Desiderata of a good community}
Given a query $Q=(V_q, W_q)$ with a set of query nodes $V_q \subseteq V$ and a set of query attributes $W_q$, the attributed community search (\acs) problem is to find a subgraph $H\subseteq G$ containing all query nodes $V_q$, where the vertices are densely inter-connected, cover as many query attributes $W_q$ as possible and share numerous attributes. In addition, the communication cost of $H$ should be low.  We call the query $Q=(V_q, W_q)$  an \kcs query. 
\new{Before formalizing the problem, we first identify the commonly accepted desiderata of a good attributed community. }

\eat{ 
\note[Laks]{Turn the following ``def.'' into desiderata. Formalize cohesiveness, community, attribute score etc. Then define the problem.} 
\note[Xin]{You means the def. to be our models? It need and can make use of several definitions in Problem section. Here, I would like to show the general definition of KCS problem, and  following works by other person maybe be  like to refer this one.}

\begin{definition}[Attribute-Driven Community Search] 
Given a graph $G(V,E)$ and a \kcs query $Q=(V_q, W_q)$, the attributed community search is to find all communities such that each community as a connected subgraph $H=(V(H), E(H))\subseteq G$ satisfies
\begin{enumerate}
\item  (Participant Condition) $H$ contains all query nodes $Q$, i.e., $Q\subseteq V(H)$;
\item  (Attribute Condition) A attribute function $\keyw(H, W_q)$ that measures the attributes of vertices in $H$ correlated with $W_q$,  is maximized or exceeding a given threshold. 
\item  (Cohesiveness Condition) A cohesiveness function $\den(H)$ that measures the cohesive structure of $H$, is maximized or exceeding a given threshold. 
\item  (Communication Condition) A communication cost function $\com(H)$ that measures the  distance of vertices in $H$, should be minimized or no greater than a given threshold. 
\end{enumerate}
\label{def.kcs}
\end{definition}
} 

\noindent 
{\bf Criteria of a good attributed community}: Given a graph $G(V,E)$ and a \kcs query $Q=(V_q, W_q)$, an attributed community is a connected subgraph $H=(V(H), E(H))\subseteq G$ that satisfies: 
\begin{enumerate} 
\vspace{-0.1cm}
\item (Participation) $H$ contains all query nodes as $V_q\subseteq V(H)$;
\vspace{-0.11cm}
\item (Cohesiveness) A cohesiveness function $\den(H)$ that measures the cohesive structure of $H$ is high. 
\vspace{-0.1cm}
\item  (Attribute Coverage and Correlation) An attribute score function $\keyw(H, W_q)$ that measures the coverage and correlation of query attributes in vertices of $H$ is high. 
\vspace{-0.1cm}
\item  (Communication Cost) A communication cost function $\com(H)$ that measures the distance of vertices in $H$ is low. 
\vspace{-0.1cm}
\end{enumerate}

The participation condition is straightforward. The cohesiveness condition is also straightforward since communities are supposed to be densely connected subgraphs. One can use any notion of dense subgraph previously studied, such as $k$-core, $k$-truss, etc. The third condition captures the intuition that more query attributes covered by $H$, the higher $\keyw(H, W_q)$; also more attributes shared by vertices of $H$, the higher $\keyw(H, W_q)$. This motivates designing functions $\keyw(.,.)$ with this property. Finally, keeping the communication cost low helps avoid irrelevant vertices in a community. This is related to the so-called free rider effect, studied in \cite{huang2015approximate,wu2015robust}. Intuitively, the closer the community nodes to query nodes, subject to all other conditions, the more relevant they are likely to be to the query. Notice that sometimes a node that does not contain query attributes may still act as a ``bridge'' between other nodes and help improve the density. A general remark is that other than the first condition, for conditions 2--4, we may either optimize a suitable metric or constrain that the metric be above a threshold (below a threshold for Condition 4). We formalize this intuition in Section~\ref{sec:prob} and give a precise definition of an attributed community\cut{and formally state the main problem studied in the paper} . 

\eat{
Second, the attribute constraint can help detecting the community $H$ that is closely realted with the input attributes. Futhermore, it can distinguish different attribute relavance communities for ranking. One basic function $\keyw(H, W_q)$ can require that for each attribute $w\in W_q$, there exists one node $v\in V(H)$ with $w\in \doc(v)$. Third, the cohesiveness constraint can make a gurantee of the discovered community is densely connected. The cohesiveness function $\den(H)$ can be formulated the minimum degree of vetices in $H$. Last but not least, the communication contraints, which can control the size of community  for avoiding the Free Rider Effect. The communication cost of $H$ should be either minimized or no greater than a given threshold, such the graph size $|H|$, the diameter, the query distance, the cost of Stenier tree and so on. 
Practically, we can unifie two or more constraints into one. 
} 

\note[Laks]{Xin, I have changed the def. to a list of desirable conditions or properties. It's not yet a def. I think the formal def. of \kcs problem will follow in the next section or so.

Also, I'm commenting out the detailed comparison you had here with KS, TF, and CS. Some of that material can be reused in related work. Perhaps related work section can be moved up here instead of being postponed to the end of the paper?} 

\eat{ 
\subsection{A Comparison and Reduction of Problems} 

\note[Laks]{Is it good to make this comparison so early? If we are convinced that it is, could it be moved to related work? Unless what we have to say is technical, this might be a good option.} 
\note[Xin]{We can move it to related work, or introduce the three representative works of each topic in technical.}

In the following, we will discuss three different problems, which are closely realted with the attributed community search, as community search, keyword search and team formation. In addition, for each problem, we respectively list three its representative works and make a comparision with our attribute-driven community search in Table \ref{tab.cmp}, in terms of four constraints in Definition \ref{def.kcs}. 

\stitle{Keyword search.} In the database community, given a graph and a set of attribute values $W_q$, keyword search finds a substructure (tree or subgraph) of the graph containing all or partial of the input attribute values.  There are three typical work \cite{li2008ease,qin2009querying, kargar2011keyword} to find subgraphs instead of trees as keyword search answers over graphs. Li et. al. proposed EASE that finds $r$-radius Steiner graphs containing all given attribute values. Since the method finds $r$-radius graphs by indexing them regardless of the input attributes, if some highly ranked $r$-radius Steiner graphs are included in other larger graphs, this method might miss them. Moreover, this approach do not consider the duplicate free. Since \cite{li2008ease} study to find single-center graphs, \cite{qin2009querying} finds multi-center graphs. For each produced subgraph, there exists some center nodes. And there exists at least a single path between each center node and each content node such that the distance is less than $r_{max}$. Parameter $r_{max}$ is used to control the size of the community. \cite{kargar2011keyword} finds a $r$-clique as a group of nodes that cover all the input attributes and the distance between each two nodes are no greater than $r$. This work will miss the intermidate nodes among the shortest path of context nodes. All above three works do not conisder the cohesive structure involving the query nodes and attributes.

\stitle{Team formation.} In a professional social network, given a set of skills $W_q$ required in a task, team formation is to find a group of individuals satisfying all skilled required in a task with low communication cost. \cite{gajewar2012multi} studied two  problems of finding a team that satisfies multiple skill requriements and respectively achieves either the densest structure(Cohesiveness constraint) or the minimum diameter (Communication constraint). Wheares, \cite{gajewar2012multi} do not consider both constraints together. 

\stitle{Community Search.} Given a graph and a set of query nodes $V_q$, the community search is to find all meaningful communities with cohesive structure containing $V_q$. 

As a results, most of works on attribute search and team formation do not consider the participant and cohesiveness constraints. Therefore, our attribute-driven community search in Definition \ref{def.ktruss} can genrilizes to the problems of attribute search and team formation by considering the query $Q=(V_q, W_q)$ with $V_q=\emptyset$ and ingoring the cohesiveness constraint.  And most of current community search work do no consider the attribute constraints. Certainly, our problem can also can genrilizes to the problem of community search by considering the query $Q=(V_q, W_q)$ with $W_q=\emptyset$.
}

\section{Related Work}
\label{sec:related} 

Work related to this paper can be classified into community search, keyword search, team formation, and community detection in attributed graphs. Table \ref{tab.cmp} shows a detailed comparison of representative works on these topics. 

\begin{table}[t!]
\vskip -0.4cm
\centering
\scriptsize
\begin{tabular}{|l|c|c|c|c|c|c|}
\hline
\multirow{2}{*}{Method}  & \multirow{2}{*}{Topic}  & Participation  & Attribute  & Cohesiveness  & Communication  \\
       &  & Condition  & Function  & Constraint  & Cost  \\
\hline \hline
\cite{bhalotia2002keyword} & \ks  &  $\upchi$  & \checkmark  & $\upchi$   & \checkmark\\ \hline
\cite{ding2007finding}  & \ks &  $\upchi$  & \checkmark  & $\upchi$   & \checkmark\\ \hline
\cite{li2008ease} & \ks &  $\upchi$  & \checkmark  & $\upchi$   & \checkmark\\ \hline
\cite{lappas2009finding} & \tf &  $\upchi$ & \checkmark  &  $\upchi$  & \checkmark\\ \hline%
\cite{gajewar2012multi} &  \tf &  $\upchi$  & \checkmark  & \checkmark   & \checkmark\\ \hline
\cite{kargar2011discovering}  & \tf &  $\upchi$ & \checkmark  &  $\upchi$  & \checkmark\\ \hline%
\cite{sozio2010} & \cs  & \checkmark &  $\upchi$  & \checkmark  & \checkmark\\ \hline
\cite{CuiXWLW13} & \cs  & \checkmark &  $\upchi$  & \checkmark  &  $\upchi$\\ \hline
\cite{huang2015approximate} & \cs  & \checkmark &  $\upchi$  & \checkmark  & \checkmark\\ \hline
\cite{FangCLH16} & \acs  & \checkmark &  \checkmark & \checkmark  & $\upchi$\\ \hline
Ours & \acs  & \checkmark &   \checkmark  & \checkmark  & \checkmark\\ \hline
\end{tabular}
\vskip -0.2cm
\caption{A comparison of representative works on keyword search (\ks), team formation (\tf), community search (\cs) and attributed community search (\acs). }\label{tab.cmp}
\vskip -0.5cm
\end{table}

\stitle{Community Search.} Community search on a graph aims to find densely connected communities containing query nodes, and has attracted a great deal of attention recently. Various models based on different dense subgraphs have been proposed and studied: quasi-clique \cite{CuiXWLW13}, densest subgraph \cite{wu2015robust}, k-core \cite{sozio2010,cui2014local,barbieri2015efficient} and k-truss \cite{huang2014,huang2015approximate}. All these works focus on the structure of the community while ignoring node attributes. This can result in communities with poor cohesion in the attribute sets of the community nodes. In particular, while \cite{huang2014,huang2015approximate} use $k$-truss as the basis structure of communities, the $k$-truss communities they find are not guaranteed to have high cohesion in the attribute sets of the nodes.

\stitle{Keyword Search.} Keyword search in relational databases has been extensively studied. Most of the works focus on finding minimal connected tuple trees from a relational database \cite{dbxplorer02,hristidis2002discover,hristidis2003efficient,bhalotia2002keyword,kacholia2005bidirectional,ding2007finding}.  There are two basic approaches:  DBXplorer \cite{dbxplorer02} DISCOVER-I \cite{hristidis2002discover}, and  DISCOVER-II \cite{hristidis2003efficient} use SQL to find tuple-trees. The other approach materializes a relational database as a graph, and finds trees from the graph: e.g., see BANKS-I  \cite{bhalotia2002keyword} and BANKS-II \cite{kacholia2005bidirectional}. Keyword search over graphs finds a substructure containing all or a subset of the input keywords. The works  \cite{li2008ease,qin2009querying}  report  subgraphs instead of trees as keyword search answers. However, keyword search does  not consider the cohesive structure involving the query nodes and keywords. As illustrated in the introduction, keyword search cannot return the right communities over attributed graphs. 
\eat{As a result, keyword search can be viewed as a restricted version of  community search over attributed graphs.} 


\stitle{Team Formation.} Lappas et al. \cite{lappas2009finding} introduced the problem of discovering a team of experts from a social network, that satisfies all attributed skills required for a given task with low communication cost. Kargar and An \cite{kargar2011discovering} study the team formation problem with a team leader who communicates with each team member to monitor and coordinate the project.  Most of the team formation studies focus on a tree substructure, as opposed to densely connected subgraph required by community search. Gajewar and Sarma \cite{gajewar2012multi} extend the team formation problem to allow for potentially more than one member possessing each required skill, and use  maximum density measure or  minimum diameter as the  objective. Compared with our problem, these studies do not consider both dense structure and distance constraint at the same time, and also have no constraint on query nodes. 

\note[Laks]{Has anyone studied the TF problem as forming a team containing a given set of members?}

\stitle{Community Detection in Attributed Graphs. } Community detection in attributed graphs is to find all densely connected components with homogeneous attributes \cite{zhou2009graph,  cheng2012clustering, ruan2013efficient}. Zhou et al.\cite{zhou2009graph} model the community detection problem as graph clustering, and combine structural and attribute similarities through a unified distance measure. When high-dimensional attributed communities are hard to interpret or discover, \cite{huang2015dense,gunnemann2011db} consider subspace clustering on high-dimensional attributed graphs. A survey of clustering on attributed graphs can be found in \cite{bothorel2015clustering}. \cut{Community detection in attributed graphs is to find all communities of the entire graph, which is clearly different from our  goal of query-based community search. \xin{Moreover, it is practically hard and inefficient to adapt the above community detection approaches \cite{zhou2009graph, huang2015dense, ruan2013efficient} for online attributed community search: community detection is inherently global and much of the work involved may be irrelevant to the community being searched.}}
\add{It is practically hard and inefficient to adapt the above community detection approaches \cite{zhou2009graph, yang2013overlapping, huang2015dense, ruan2013efficient} for online attributed community search: community detection is inherently global and much of the work involved may be irrelevant to the community being searched.}
\eat{due to the global criteria of community detections and the inefficiency computations on highly dynamic massive graphs.}

Recently, Yang et al.\cite{FangCLH16} have proposed a model for community search over attributed graphs based on $k$-cores. The key distinction with our work is as follows. (1) Our community model is based on $k$-trusses, which have well-known advantages over $k$-cores such as denser structure. A connected $k$-core has no guarantee to be $2$-edge-connected, even with a large core value $k$. (2) Our search supports  multiple query nodes whereas theirs is limited to a single query node. (3) Their approach may miss useful communities. E.g., \new{consider the example graph in Figure~\ref{fig.community} with query node $\{q_2\}$ {\sl and} attributes \{DB, DM\}, and parameter $k=3$. \LL{Their model will return the subgraphs $H_1$ (Figure~\ref{fig.subcom}(a)) and $H_3$ (Figure~\ref{fig.subcom}(c)) as answers.} However, the subgraph $H_2$ (Figure~\ref{fig.subcom}(a)) will not be discovered, due to their strict homogeneity constraints. }
\cut{(4) Furthermore, unlike them, we minimize the query distance of the community which has the benefit of avoiding the free rider effect.} (5) Finally, unlike them, we validate our model with experiments over datasets with ground-truth communities. 

\eat{
\note[Laks]{We need to say something about whether it's easily possible to adapt some of the works in this paragraph to solve attributed community search. Also, why can't one apply attribute community detection and then search through all found communities to find the communities of interest, instead of doing community search?} 
\note[Xin]{Please refer to the above.}
}

\section{Attributed Community Model}
\label{sec:prob} 
\eat{ 
\note[Laks]{Somewhere rationalize the design choices made here for structure, cohesiveness, and attibute score.} 
\note[Xin]{See the following descriptions.}

In this section, we define basic definitions in graph theory and design three detailed constraints of cohesiveness, communication and attributes.  Based on these three constraints, we propose our attribute-driven truss community(\ktc) model. \xin{First, we introduce the structural bias of our model, that is built on the dense subgraph concept of k-truss. Then, we design the communication metric based on the definition of query distance, which measures the maximum distance from community member to query nodes. In addition, we will formulate the attibute functions that trend to perfer the community with a large number of homogeneous attributes.}
} 

In this section, we develop a notion of attributed community by formalizing the 
the desiderata discussed in Section~\ref{sec:prelims}.  We focus our discussion on conditions 2--4.

\subsection{(k, d)-truss}

In the following, we introduce a novel definition of dense and tight substructure called \kdtruss by paying attention to cohesiveness and communication cost. 

\eat{Before that, we respectively describe the bisas to satisfy the criterion of cohesiveness and communication cost.} 

\stitle{Cohesiveness.} While a number of definitions for dense subgraphs have been proposed over the years, we adopt the $k$-truss model, proposed by Cohen~\cite{cohen2008}, which has gained popularity and has been found to satisfy nice properties. 

A subgraph $H\subseteq G$ is a $k$-core, if every vertex in $H$ has degree at least $k$. 
A \emph{triangle} in $G$ is a cycle of length 3.  We denote a triangle involving vertices $u, v, w \in V$ as $\triangle_{uvw}$.  The \emph{support} of an edge $e(u,v)\in E$ in $G$, denoted $sup_{G}(e)$, is the number of triangles containing $e$, i.e., $sup_{G}(e) = |\{\triangle_{uvw}: w\in V\}|$. When the context is obvious, we drop the subscript and denote the support as $sup(e)$. \eat{Based on the definition of $k$-truss  \cite{cohen2008, WangC12}, we define a connected $k$-truss below.} \new{Since the definition of $k$-truss  \cite{cohen2008, WangC12} allows a $k$-truss to be disconnected, we define a connected $k$-truss below.}

\eat{
\vskip -0.1in
\begin{table}[t]
\begin{center}\vspace*{-0.45cm}
\scriptsize
\caption[]{\textbf{Frequently Used Notations}}\label{tab:notations}
\begin{tabular}{|c|c|}
\hline
Notation & Description \\ \hline \hline
$G = (V(G), E(G))$ &  An undirected and connected simple graph $G$\\ \hline
$n;m$ & The number of vertices/edges in $G$ \\ \hline
$N(v)$	& The set of neighbors of $v$\\ \hline
$\sup_H(e)$	& The \emph{support} of edge $e$ in $H$ \\ \hline
$\tau(H)$	& Trussness of graph $H$  \\ \hline
$\tau(e)$	& Trussness of edge $e$ \\ \hline
$\tau(v)$	& Trussness of vertex $v$ \\ \hline
$\taubar(S)$  & The maximum trussness of connected graphs containing $S$ \\ \hline
$\diam(H)$ & The diameter of graph $H$ \\ \hline
$\dist_H(v, u)$ & The shortest distance between $v$ and $u$ in $H$ \\ \hline
$\dist_H(R, Q)$ & $\dist_H(R, Q) = \max_{v\in R, u\in Q} \dist_H(v,u)$ \\ \hline
\end{tabular}\vspace*{-0.6cm}
\end{center}
\end{table}
\vskip -0.15in
}

\begin{definition}
 [Connected K-Truss] Given a graph $G$ and an integer $k$, a connected $k$-truss is a connected subgraph $H \subseteq G$,  such that $\forall e\in E(H)$, $sup_{H}(e)$ $\geq (k-2)$. 
\end{definition} \label{def.ktruss}

Intuitively, a connected $k$-truss is a connected subgraph in which each connection (edge) $(u,v)$ is ``endorsed'' by $k-2$ common neighbors of $u$ and $v$ \cite{cohen2008}. A connected $k$-truss with a large value of $k$ signifies strong inner-connections between members of the subgraph. In a $k$-truss, each node has degree at least $k-1$, i.e., it is a $(k-1)$-core\add{.}
\cut{, \xin{and a connected $k$-truss is also $(k-1)$-edge-connected, i.e., it remains connected if fewer than $(k-1)$ edges are removed \cite{BatageljZ03}. }}


\eat{ 
\note[Laks]{$k$-core not defined so far.} 
\note[Xin]{$k$-core is defined above.}
} 


\begin{example}
Consider the graph $G$ (Figure~\ref{fig.community}). The edge $e(v_1, v_2)$ is contained in three triangles $\triangle_{q_1v_1v_2}$, $\triangle_{q_2v_1v_2}$ and $\triangle_{v_3v_1v_2}$, thus its support is $sup_{G}(e)=3$. Consider the subgraph $H_3$ of $G$ (Figure \ref{fig.subcom}(c)). Every edge of $H_3$ has support $\geq 2$, thus $H_3$ is a 4-truss. Note that even though the edge $e(v_1, v_2)$ has support 3, there exists no 5-truss in the graph $G$ in Figure \ref{fig.community}. 
\end{example}

\note[Laks]{Where are nodes $q_2$ and $v_5$ in Fig. 1? Need to check the correctness of the claims in the above example.} 
\note[Xin]{Will do.}

\stitle{Communication Cost.}  For two nodes $u, v \in G$,  let  $\dist_{G}(u, v)$ denote the length of the shortest path between $u$ and $v$ in $G$, where $\dist_{G}(u, v) = +\infty$ if $u$ and $v$ are not connected. The diameter of a graph $G$ is the maximum length of a shortest path in $G$, i.e., $\diam(G) =  \max_{u,v \in G} \{\dist_G(u,v)\}$. 
\eat{Obviously, if a graph is disconnected, the diameter is $+\infty$.}   
We make use of the notion of graph query distance in the following.

\eat{ 
\note[Laks]{Should we cite our or previous papers where these notions were first defined?} 
\note[Xin]{Done.}
} 

\begin{definition}
[Query Distance \cite{huang2015approximate}] Given a graph $G$ and query nodes $V_q \subseteq V$, the vertex query distance of vertex $v\in V$ is the maximum length of a shortest path from $v$ to a query node $q\in V_q$ in $G$, i.e., $\dist_G(v, V_q) = \max_{q\in V_q} $ $ \dist_G(v, q)$. Given a subgraph $H\subseteq G$ and $V_q\subseteq V(H)$, the graph query distance of $H$ is defined as $\dist_H(H, V_q) = $ $\max_{u\in H} \dist_{H}(u, V_q) $ $ =\max_{u\in H, q\in V_q}\dist_H(u, q).$
\end{definition}  \label{def.maxdis}

\cut{Given a subgraph $H\subseteq G$ and $V_q \subseteq V(H)$, the query distance $\dist_H(H, V_q)$ measures the communication cost between the members of $H$ and the query nodes. A good community should have a low communication cost with small $\dist_H(H, V_q)$. }
\add{Given a subgraph $H\subseteq G$, the query distance $\dist_H(H, V_q)$ measures the communication cost between the members of $H$ and the query nodes. A good community should have a low communication cost with small $\dist_H(H, V_q)$. }

For the graph $G$ in Figure \ref{fig.community} and query nodes $V_q=\{q_1, q_2\}$, the vertex query distance of $v_7$ is $\dist_G(v_7, V_q)= $ $\max_{q\in V_q}$ $\{\dist_G(v_7, q)\}$ $= 2$. Consider the subgraph $H_1$ in Figure~\ref{fig.subcom}(a). Then graph query distance of $H_1$ is $\dist_{H_1}(H_1,V_q)= \dist_{H_1}(q_1, q_2) = 2$. The diameter of $H_1$ is $\diam(H_1) = 2$. 
%

\note[Laks]{Make sure the claims in the example above are correct.} 
\note[Xin]{Will do.}

\stitle{(k, d)-truss.} We adapt the notions of $k$-truss and query distance, and propose a new notion of \kdtruss capturing dense cohesiveness and low communication cost. 

\begin{definition}
[\kdtruss] Given a graph $H$, query nodes $V_q$, and  numbers $k$ and $d$, we say that $H$ is a \kdtruss iff $H$ is a connected $k$-truss containing $V_q$ and $\dist_H(H, V_q)\leq d$. 
\end{definition}

By definition, the cohesiveness of a \kdtruss increases with $k$, and its proximity to query nodes increases with decreasing $d$. For instance, the community $H_1$ in Figure \ref{fig.subcom} (a) for $V_q=\{q_1, q_2\}$ is a \kdtruss with $k=4$ and $d=2$.

\subsection{Attribute Score Function} 
\eat{ 
\note[Laks]{Do you think it's a good idea to state the 2-3 principles guiding the choice of a score function, that I wrote on the board in my office when we were discussing?} 
\note[Xin]{Yes, it is a good idea. Consider the following principles.}
} 

We first identify key properties that should be obeyed by a good attribute score function for a community. Let $\keyw(H, W_q)$ denote the attribute score of community $H$ w.r.t. query attributes $W_q$. We say that a node $v$ of $H$ covers an attribute $w\in W_q$, if $w \in \doc(v)$. We say that a node of $H$ is irrelevant to the query if it does not cover any of the query attributes. \\ 
\underline{\prin1}: The more query attributes that are covered by some node(s) of $H$, the higher should be the score $\keyw(H, W_q)$. The rationale is obvious. \\ 
\underline{\prin2}: The more nodes contain an attribute $w\in W_q$, the higher the contribution of $w$ should be toward the overall score $\keyw(H, W_q)$. The intuition is that attributes that are covered by more nodes of $H$ signify homogeneity within the community w.r.t. shared query attributes. \\  \underline{\prin3}: The more nodes of $H$ that are irrelevant to the query, the lower the score $\keyw(H, W_q)$. 

\eat{ 
\xin{Obviously, there exists several basic principle to be satisfied by attribute function for $\keyw(H, W_q)$. First of all, (\prin1), for a given $H$, the more input attributes covered by somenodes in $H$, the better the overall score. Therefore, the attribute function needs consider the coverage of input attributes $W_q$ and the a total of contributions by every attribute attribute $w\in W_q$ in $H$.  Thus, it can be modeled as  $\keyw(H, W_q) = \sum_{w\in W_q} \score(H, w)$, where $\score(H, w)$ is the relevance score of $H$ on input attribute $w \in W_q$.  Second, (\prin2), for each attribute $w$, the more nodes in $H$ contains $w$, better the score for $w$, i.e., $\score(H, w)$ should be positively correlated with $|V(H)\cap V_{w}|$. In addition, (\prin3), the more nodes irrelavent with input attributes, the worse the total score, due to the homogenous attributes in one community.}
} 

We next discuss a few choices for defining $\keyw(H, W_q)$ and analyze their pros and cons, \LL{before presenting an example function that satisfies all three principles.} \LL{Note that the scores $\keyw(H, W_q)$ are always compared between subgraphs $H$ that meet the same structural constraint of \kdtruss.} An obvious choice is to define $\keyw(H, W_q) := \sum_{w\in W_q} \score(H, w)$, where $\score(H, w)$, the contribution of attribute $w$ to the overall score, can be viewed as the relevance of $H$ w.r.t. $w$. This embodies \prin1 above. Inspired by \prin2, we could define $\score(H,w) := |V(H)\cap V_{w}|$, i.e., the number of nodes of $H$ that cover $w$. Unfortunately, this choice suffers from some limitations by virtue of treating all query attributes alike. Some attributes may not be shared by many community nodes while others are and this distinction is ignored by the above definition of $\keyw(H, W_q)$. To illustrate, consider the community $H_1$ in Figure \ref{fig.subcom}(a) and the query $Q=(\{q_1\}, \{DB\})$; $H_1$ has 5 vertices associated with the attribute $DB$ and achieves a score of 5. The subgraph $H$ of the graph $G$ shown in Figure \ref{fig.community} also has the same score of 5. However, while the community in Figure \ref{fig.subcom}(a) is clearly a good community, as all nodes carry attribute $DB$,  the subgraph $H$ in Figure \ref{fig.community} includes several irrelevant nodes without attribute $DB$. Notice that both $H_1$ and $H$ are 4-trusses so we have no way of discriminating between them, which is undesirable. 

An alternative is to define $\score(H, w)$ as $\frac{| V_{w} \cap V(H)|}{|V(H)|}$ as this captures the popularity of attribute $w$. Unfortunately, this fails to reward larger commumities. For instance, consider the query $Q = (\{q_1,v_4\}, \{DB\})$ over the graph $G$ in Figure~\ref{fig.community}. The subgraph $H_1$ in Figure~\ref{fig.subcom}(a) as well as its subgraph obtained by removing $q_2$ is a 4-truss and both will be assigned a score of 1. 

In view of these considerations, we define $\keyw(H, W_q)$ as a weighted sum of the score contribution of each query attribute, where the weight reflects the popularity of the attribute. 

\eat{ 
Specifically, we define $\keyw(H, W_q) = \sum_{w\in W_q} \theta(H,w)\cdot score(H,w)$, where the weight $\theta(H,w) = \frac{|V(H)\cap V_w|}{|V(H)|}$ is the fraction of nodes of $H$ covering attribute $w$, i.e., $w$'s popularity.

is clearly not taking the number of $| V_{w} \cap V(H)|$ into account, resulting in the weaknees of distinguish a big community and a small one. Countinue the above example, the community in Figure \ref{fig.subcom}(c) without node $q_2$  as $H-\{v\}$ also has the score of 1 equaling to the score of the entire community $H_3$ in Figure \ref{fig.subcom}(c). 

 One simple solution defines $\score(H, w)$ as the number of vertices with attribute $w$, i.e., $|\{v: v\in H, w\in \doc(v)\}| = |V(H)\cap V_{w}|$. The rationale is, that the more number of vertice with attribute $w$ that $H$ have, the higher score  is. \xin{However, this mertic of $\sum_{w\in W_q}]| V_{w} \cap V(H)|$ has one limitation by treating each attribute equally, because some communities may not be related with all input attributes. In other words, the community members may only simlarily share a subset of attributes, not the full set of attributes.  Consider the community $H_1$ in Figure \ref{fig.subcom}(a) and the query $Q=(\{q_1\}, \{DB\}) $, the graph has 5 vertices associated with the attribute $DM$ and achieve the score of 5, whereas the entire graph in Figure \ref{fig.community} also has the same score of 5, which has no differences. However, the community in Figure \ref{fig.subcom}(c) is clearly a good communiy, since all nodes carry attribute $DB$, on the contary the whole graph in Figure \ref{fig.community} includes a lot of irrelevant nodes without attribute $DB$. Another choice of $\score(H, w)$ as $\frac{| V_{w} \cap V(H)|}{|V(H)|}$ is clearly not taking the number of $| V_{w} \cap V(H)|$ into account, resulting in the weaknees of distinguish a big community and a small one. Countinue the above example, the community in Figure \ref{fig.subcom}(c) without node $q_2$  as $H-\{v\}$ also has the score of 1 equaling to the score of the entire community $H_3$ in Figure \ref{fig.subcom}(c).} 
}

\eat{ 
\note[Laks]{The issue explained above is not clear.} 

\note[Laks]{Similarly, what is the limitation with defining $\score(H,w)$ as $\frac{| V_{w} \cap V(H)|}{|V(H)|}$?} 
\note[Xin]{Please consider the above discussion.}

To accurately determine the extent of weight for an attribute $w \in W_q$ , we invoke a majority vote mechanism: if a large portion of vertices within $H$ share the same value of the certain attribute $w$, it means that the vertices in $H$ under the context environment of $w$ has a good clustering tendency. On the other hand, if vertices within $H$ have a very random distribution on $w$, then $w$ is not a good attribute to clustering $H$ into a community. Thus, we define the weight of attribute $w$ as the proportion of vertices with $w$ in $H$, denote by $\theta(H, w) = \frac{|V(H)\cap V_w|}{|V(H)|}$. There, we formulate the score function as follow.

\note[Laks]{We need to discuss a few choices and motivate the following choice here. We should do this by identifying the limitations of other choices. Also, what are the properties that we would like our scoring function to satisfy? Are there other ways of achieving those properties? Can we show that our results and techniques apply to a whole class of score functions, as long as they satisfy those properties?} 
\note[Xin]{I have discussed a little bit on other attribute function choices and the properties of our attribute funciton in Section 4.2. We can meet and discuss which place will be better?}
} 

\begin{definition}[Attribute Score]
Given a subgraph $H\subseteq G$ and an attribute $w$, the weight of an attribute $w$ is $\theta(H,w) = \frac{| V_{w} \cap V(H)|}{|V(H)|}$, i.e., the fraction of nodes of $H$ covering $w$. For a query $Q=(V_q,W_q)$ and a community $H$, the attribute score of $H$ is defined as $\keyw(H,W_q) = \sum_{w\in W_q} \theta(H,w)\times \score(H,w)$, where $\score(H,w) = |V_w\cap V(H)|$ is the number of nodes covering $w$. 
\eat{ 
with $w$ is defined as $\score(H, w)$ $= |V_{w} \cap V(H)|$ $\cdot$ $\theta(H, w)$ $=\frac{| V_{w} \cap V(H)|^2}{|V(H)|}$, where $|V_{w} \cap V(H)| = |\{v: v\in V(H), w \in \doc(v)\}|$ is the number of vertices in $H$ associated with the attribute $w$. } 
\end{definition}

The contribution of an attribute $w$ to the overall score is $\theta(H,w)\times  \score(H,w) = \frac{| V_{w} \cap V(H)|^2}{|V(H)|}$. This depends not only on the number of vertices covering $w$ but also on $w$'s popularity in the community $H$. This choice discourages vertices unrelated to the query attributes $W_q$ which decrease the relevance score, without necessarily increasing the cohesion (e.g., trussness). \LL{At the same time, it permits the inclusion of essential nodes, which are added to a community to reduce the cost of connecting query nodes. 
They act as an important link between nodes that are related to the query, leading to a higher relevance score. We refer to such additional nodes as \emph{steiner nodes}.} E.g., consider the query $Q = (\{q_1\}, \{ML\})$ on the graph $G$ in Figure~\ref{fig.community}. As discussed in Section~\ref{sec:intro}, the community $H_4$ in Figure \ref{fig.subcom}(d) is preferable to the chain of nodes $v_8, q_1, v_{10}$. 
\LL{Notice that it includes $v_9$ with attribute $DM$ (but not $ML$); $v_9$ is thus a steiner node.} 
It can be verified that $\keyw(H_4,W_q) = \frac{9}{4}$ which is smaller than the attribute score of the chain, which is $3$. However, $H_4$ is a 3-truss whereas the chain is a 2-truss. It is easy to see that any supergraph of $H_4$ in Figure~\ref{fig.community} is at most a 3-truss and has a strictly smaller attribute score. 


\eat{
\note[Laks]{How is this?} 

\note[Laks]{Xin, leaving it to you to fill out these BLAHs.} 
\note[Xin]{The community including steiner nodes without attribute will decrease the score, but not increase. The role of steiner nodes is to help building a cohesive and close struture, meahwhile by sacrifying a little attribute score.
}
} 

\eat{
includes node $v_3$ is the Steniter node with attribute $DM$ by linking all authors working on $ML$ into a densely conncted community of 3-truss. Thus, we define the voted attribute fuction as follow.


\begin{definition}[Voted attribute Function]
\label{def.kwf}
Given a subgraph $H\subseteq G$ and a set of attributes $W_q$, the attribute funcation of $H$ and $W_q$ is defined as $\keyw(H, W_q)= $ $\sum_{w\in W_q} \score(H, w)=$ $ \sum_{w \in W_q} $ $\frac{| V_{w} \cap V(H)|^2}{|V(H)|}$.
\end{definition}
} 


The more query attributes a community has that are shared by more of its nodes, the higher its attribute score. For example, consider the query $Q=(\{q_1\}, \{DB, DM\})$ on our running example graph of Figure~\ref{fig.community}. The communities $H_1, H_2, H_3$ in Figure~\ref{fig.subcom} are all potential answers for this query. We find that $\keyw(H_1, W_q) = 5\cdot 1 + 2\cdot \frac{2}{5}=5.8$; by symmetry, $\keyw(H_3, W_q) = 5.8$; on the other hand, $\keyw(H_2, W_q) = 5\cdot \frac{5}{8} + 5\cdot \frac{5}{8}=6.25$. Intuitively, we can see that $H_1$ and $H_3$ are mainly focused in one area (DB or DM) whereas $H_2$ has 5 nodes covering DB and DM each and also has the highest attribute score. 

\eat{ 
for the community $H_1$ in Figure \ref{fig.subcom}(a) and query $Q=(\{q_1\}, \{DB, DM\})$, $\keyw(H_1, W_q) = 5\cdot 1 + 2\cdot \frac{2}{5}=5.8$. Notice that vertices $q_1, q_2$ have attributes $DB$ and $DM$. The community $H_1$ mostly focuses on the DB area. By contrast, community $H_2$ } 
\eat{Because there exist 5 nodes with $DB$ and 2 nodes with $DM$, we have the score value of 5 for the attribute $DB$ and $\frac{4}{5}$ for the attribute $DM$.} 

\eat{ 
\xin{When the context is obvious, we drop the script $W_q$ of $\keyw(H, W_q)$ and denote the attribute function as $\keyw(H)$. }This voted attribute function perfers to find a community with sevearl homogeneous attributes. Mathematically, this functions is positively relative with the size of homogeneous query attributes within $H$. For a community $H$ with a fixed size, the more relevant attributes $v\in V(H)$ have, the higher score $\keyw(H, W_q)$ is.  For example, consider the commuinty $H_1$ in Figure \ref{fig.subcom}(a) for the query $Q=(\{q_1\}, \{DB, DM\})$, $\keyw(H_1, W_q) = 5\cdot 1 + 2\cdot \frac{2}{5}=5.8$. Because there exist 5 nodes with $DB$ and 2 nodes with $DM$, we have the score value of 5 for the attribute $DB$ and $\frac{4}{5}$ for the attribute $DM$. 

\eat{ 
\note[Laks]{There is no node $q$ in Figure~\ref{fig.subcom}(a).} 
\note[Xin]{Should be $q_1$}
} 

As we can see that, this community mianly focus on the DB area. Also, the community $H_3$ in Figure \ref{fig.subcom}(c) has $\keyw(H_3, W_q) = 5.8$, which works on the DM area. For a comparison with the community $H_2$ in Figure \ref{fig.subcom}(b), there exists 5 nodes with $DM$ and 5 nodes with $DB$, the relevance score is $\keyw(H_2, W_q) = 5\cdot \frac{5}{8} + 5\cdot \frac{5}{8}=6.25> 5.8= \keyw(H_1, W_q) = \keyw(H_3, W_q)$, which shows the community in Figure \ref{fig.subcom}(b) the best one among these three communities. 
} 

\eat{ 
\note[Laks]{I don't see 5 DB nodes. Also, we should discuss or at least mention the score of every single community before concluding which one is the best.} 
\note[Xin]{We make a comparison among these three communites, and use $H_1$, $H_2$, $H_3$ to represent the graphs in Figure 3. }

We discuss several natural candidates for attribute functions in Section \ref{} and provide a rationale for our design decisions.
} 


\begin{remark}
\LL{We stress that the main contribution of this subsection is the identification of key principles that an attribute score function must satisfy in order to be effective in measuring the goodness of an attributed community. Specifically, these principles capture the important properties of high attribute coverage and high attribute correlation within a community and minimal number of nodes irrelevant to given query. Any score function can be employed as long as it satisfies these principles. The algorithmic framework we propose in Section \ref{sec.basic} is flexible enough to  handle an \atc community model equipped with any such score function. 

We note that a natural candidate for attribute scoring is the entropy-based score function, defined as $\kw{f_{entropy}}(H, W_q)= $ $ \sum_{w \in W_q} $  $ - \frac{| V_{w} \cap V(H)|}{|V(H)|}$  $ \log{\frac{| V_{w} \cap V(H)|}{|V(H)|}}$. It measures homogeneity of query attributes very well. However, it fails to reward larger communities, specifically violating Principle 1. E.g., consider the query $Q = (\{q_1,v_4\}, \{DB\})$ on the graph $G$ in Figure~\ref{fig.community}. The subgraph $H_1$ in Figure~\ref{fig.subcom}(a) and its subgraph obtained by removing $q_2$ are both 4-trusses and both are assigned a score of 0. Clearly, $H_1$ has more nodes containing the query attribute DB. 
} 
\eat{
\new{The important contributions of our novel score function are identifying three key principles that should be obeyed by a commonly accepted score function of a good attributed community. Specifically, these  principles capture the intuition of a good community by supporting the high attribute coverage, high attribute correlation, and low popularity of attribute irrelevance. Thus, following the three principles, other score functions can be also well developed. Moreover, our proposed greedy algorithmic framework in Section \ref{sec.basic} is flexible to  handle \atc community model equipped with any score functions. 
Specifically, it first finds a maximal candidate graph satisfying structure constraint, and then iterative deletes irrelevant nodes with small score contributions and maintain the remaining graph satisfying structure constraints. Finally, it returns a best answer.
\\}
\new{Here, we consider another entopy-based score function, denoted by $\kw{f_{entropy}}(H, W_q)= $ $ \sum_{w \in W_q} $  $ - \frac{| V_{w} \cap V(H)|}{|V(H)|}$  $ \log{\frac{| V_{w} \cap V(H)|}{|V(H)|}}$. This function measures the homogeneity of query attributes in $H$ very well. However, $\kw{f_{entropy}}(H, W_q)$  has a limitation on measuring the attribute coverage like $\score(H, w)$, which fails to reward larger commumities. For example,  consider the query $Q = (\{q_1,v_4\}, \{DB\})$ on graph $G$ in Figure~\ref{fig.community}. The subgraph $H_1$ in Figure~\ref{fig.subcom}(a) and its subgraph generated by removing $q_2$ is a 4-truss and both get a score of 0. } 
}  
\eat{ 
\new{The important contributions of our novel score function are identifying three key principles that should be obeyed by a commonly accepted score function of a good attributed community. Specifically, these  principles capture the intuition of a good community by supporting the high attribute coverage, high attribute correlation, and low popularity of attribute irrelevance. Thus, following the three principles, other score functions can be also well developed. Moreover, our proposed greedy algorithmic framework in Section \ref{sec.basic} is flexible to  handle \atc community model equipped with any score functions. 
Specifically, it first finds a maximal candidate graph satisfying structure constraint, and then iterative deletes irrelevant nodes with small score contributions and maintain the remaining graph satisfying structure constraints. Finally, it returns a best answer.
\\}
\new{Here, we consider another entopy-based score function, denoted by $\kw{f_{entropy}}(H, W_q)= $ $ \sum_{w \in W_q} $  $ - \frac{| V_{w} \cap V(H)|}{|V(H)|}$  $ \log{\frac{| V_{w} \cap V(H)|}{|V(H)|}}$. This function measures the homogeneity of query attributes in $H$ very well. However, $\kw{f_{entropy}}(H, W_q)$  has a limitation on measuring the attribute coverage like $\score(H, w)$, which fails to reward larger commumities. For example,  consider the query $Q = (\{q_1,v_4\}, \{DB\})$ on graph $G$ in Figure~\ref{fig.community}. The subgraph $H_1$ in Figure~\ref{fig.subcom}(a) and its subgraph generated by removing $q_2$ is a 4-truss and both get a score of 0. } 
} 
\end{remark}

\subsection{Attributed Truss Community Model}

\eat{
On the basis of the definitions of connected $k$-truss and graph query distance, we define the $d$-close-$k$-truss as follow.

\begin{definition}
[$d$-close-$k$-truss] Given a graph $H$, and two parameters $k$ and $d$, $H$ is a $d$-close-$k$-truss iff $H$ is a connected $k$-truss and $\dist_H(H, V_q)\leq d$. 
\end{definition}
}

Combining the structure constraint of \kdtruss and the attribute score  function $\keyw(H, W_q)$, we define an  \emph{attributed truss community} (\atc) as follows. 

\begin{definition}\label{def.community} 
[Attribute Truss Community] Given a graph $G$ and a query $Q = (V_q, W_q)$ and two numbers $k$ and $d$,   
$H$ is an attribute truss community (\atc), if $H$ satisfies the following conditions:
\begin{enumerate}
\vspace{-0.2cm}
  \item $H$ is a \kdtruss containing $V_q$.
  \vspace{-0.12cm}
  \item $H$ has the maximum attribute score $\keyw(H, W_q)$ among subgraphs satisfying condition (1). 
    \vspace{-0.22cm}
\end{enumerate}
\end{definition}

In terms of structure and communication cost, condition (1) not only requires that the community containing the query nodes $V_q$ be densely connected, but also that each node be close to the query nodes. In terms of query attribute coverage and correlation, condition (2) ensures that as many query attributes as possible are covered by as many nodes as possible.

\begin{example}
For the graph $G$ in Figure \ref{fig.community}, and query $Q=(\{q_1, q_2\}, \{DB, DM\})$ with $k=4$ and $d=2$, $H_2$ in \ref{fig.subcom}(b) is the corresponding \atc, since $H_2$ is a $(4,2)$-truss with the largest score $\keyw(H, W_q) = 6.25$ as seen before. 
\end{example}

The \atcp studied in this paper can be formally formulated as follows.

\stitle{Problem Statement:} Given a graph $G(V,E)$, query $Q = (V_q, W_q)$ and two parameters $k$ and $d$, find an \atc $H$, such that $H$ is a \kdtruss with the maximum attribute score $\keyw(H, W_q)$.

We remark that in place of the \kdtruss with the highest attribute score, we could consider the problem of finding the $r$ \kdtrusses with the highest attribute score. Our technical results and algorithms easily generalize to this extension. 

\eat{ 
Note that, even we focus on \atcp in this paper, all theoritical results, techniques and algorithms can be easily extend to slove the \atc ranking problem, that is, finding all \kdtrusses ranked by the attribute score. Meanwhile, a large number of users are commonly interested in the top-$r$ answers, that is, given a number $r$, finding $r$ \kdtrusses with the highest attribute scores. This problem is denoted by \ktcrp. Certainly, the special case of \ktcrp with $r=1$ is the problem of \atcp. 
}

\eat{
 Condition (2) requires that the community is related with at least one input attribute for the positive attribute socre, which is a basic constraint. In addition, the maximal condition allows the community involving a large number of members. Furthermore, the locally largest attribute score profits the discovered community achieving the most relevance with input attributes. Unifying these two condition of maximal and locally largest attribute score, it can also avoid a lot of redundant communities by achieving similar scores. For example, for $Q=(\{q_1\},\{DB\})$ with $k=4$ and $d=2$, the community $H_1$ in Figure \ref{fig.subcom}(a) is a \ktc. Since there exists only 5 vertices with DB in graph $G$ and they are all in $H_1$, meanwhile no nodes without DB exist in $H_1$, thus $\keyw(H_1)$ achieves the largest score. Also, it is easy to check that any subgraph $H'\subset H_1$, $\keyw(H', W_q) \leq \keyw(H_1, W_q)$, and any supergraph $H''\supset H_1$, $\keyw(H', W_q)< \keyw(H_1, W_q)$.

\begin{definition}
[attribute Truss Community] Given a graph $G$, a query $Q = (V_q, W_q)$, and two parameters $k$ and $d$,   
$H$ is an attribute truss community (KTC), if $H$ satisfies the following two conditions:
\begin{description}\label{def.community}
\vspace{-0.2cm}
  \item[(1) $d$-close-$k$-truss.] $H$ is a $d$-close-$k$-truss containing $V_q$.
  \item[(2) Locally Largest Attribute Relevance.] $H$ is the maximal one with the locally largest attribute score of positive value, i.e., $\keyw(H, W_q) >0$, $\nexists$ such $d$-close-$k$-trusses containing $Q$ $H'\subset H$ with $\keyw(H', W_q)> \keyw(H, W_q)$ and also $\nexists$ a $d$-close-$k$-truss containing $Q$  $H'' \supset H$ with $\keyw(H'', W_q)\geq \keyw(H, W_q)$.
    \vspace{-0.2cm}
\end{description}
\end{definition}

Condition (1) requires that the closest community containing the query nodes $Q$ be densely connected. 
In addition, Condition (1) makes sure that each node is as close as possible to every \sout{query} node in the community, which also make a shortcut of distance to every other nodes via query nodes. \xin{Condition (2) requires that the community is related with at least one input attribute for the positive attribute socre, which is a basic constraint. In addition, the maximal condition allows the community involving a large number of members. Furthermore, the locally largest attribute score profits the discovered community achieving the most  relevance with input attributes. Unifying these two condition of maximal and locally largest attribute score, it can also avoid a lot of redundant communities by achieving similar scores. For example, for $Q=(\{q_1\},\{DB\})$ with $k=4$ and $d=2$, the community $H_1$ in Figure \ref{fig.subcom}(a) is a \ktc. Since there exists only 5 vertices with DB in graph $G$ and they are all in $H_1$, meanwhile no nodes without DB exist in $H_1$, thus $\keyw(H_1)$ achieves the largest score. Also, it is easy to check that any subgraph $H'\subset H_1$, $\keyw(H', W_q) \leq \keyw(H_1, W_q)$, and any supergraph $H''\supset H_1$, $\keyw(H', W_q)< \keyw(H_1, W_q)$.
}

To find the highly attribute relevant community, we now formally propose the attribute truss community search problem for identifying the most relevant community with acceptable cohesive structure and communication cost to answer attribute-driven community search quries over graph data.

\stitle{The Attribute Truss Cmmunity Search Problem:} Given a graph $G(V,E)$, a query $Q = (V_q, W_q)$  and two parameters $k$ and $d$, find all attribute truss communities $H$ containing $V_Q$, ranked by relevancy with $\keyw(H, W_q)$. 

Since a large number of users are commonly interested in the top-$r$ answers, we focus on discussing how to identify $r$ communities with the highest scores $\keyw(H, W_q)$ in the rest of this paper. The problem is denoted as \ktcrp as follow. 

\note[Laks]{It may be better to define KTCr in general and then mention KTC1 inline as a special case.} 
\note[Xin]{Revised.}

\begin{problem} 
[\ktcrp] Given a graph $G(V,E)$, a query $Q = (V_q, W_q)$  and numbers  $k$, $d$ and $r$, find $r$ \ktcs  with the highest scores. 
\end{problem}\label{def.pro}

The problem \ktcrp for $r=1$ is to indentify one \ktc with the maximum attribute score $\keyw(H, W_q)$, denoted by \ktcp. 


\note[Laks]{I did not check the details of the attribute scores claimed in the example below. Need to check their correctness.} 

\begin{example}
Consider the graph in Figure \ref{fig.community}, for the query $Q=(\{q_1, q_2\}, \{DB, DM\})$, $k=4$ and $d=2$, the top-1 community with the highest scores $\keyw(H, W_q)$ is $H_2$ in Figure 
\ref{fig.subcom} (b) with $\keyw(H_2, W_q) = 6.25$.
\end{example}

}

\section{Problem Analysis}
\label{sec:analysis} 
\eat{In this section, we discuss the properties of our community model \ktc, in terms of bias structure and attribute function. In addition,, we discuss the variant of keyword functions and the parameter setting.  Finally, we also analyze the hardness of \ktcp, and show its \emph{NP}-hard. } 

\eat{ 
There is a structural component and an attribute score component to our definition of \ktc. In this section, we analyze both of them and establish some useful properties which will be exploited by our algorithms. Furthermore, we also prove that the \ktcp is NP-hard. 
} 

In this section, we analyze the complexity of the problem and show that it is NP-hard. We then analyze the properties of the structure and attribute score function of our problem. Our algorithms for community search exploit these properties. 


\subsection{Hardness}


\eat{ 
In this section, we will introduce the definition of edge density and a variant of densest subgraph problems. Based on the edge density, we proposed a new density function called vertex-edge density, and give a theorem showing the NP-hard problem of computing the maixmum vertex-edge density with at least $k$ vertices (\VEDalK-problem) in Theorem \ref{theorem.wden}. Then, we reduce the \VEDalK-problem to \ktcp, which shows the NP-hardness of \ktcp in Theorem \ref{theorem.kcs}.
} 

Our main result in this section is that the \ktcp is NP-hard (Theorem~\ref{theorem.kcs}). The crux of our proof idea comes from the hardness of finding the densest subgraph with $\ge k$ vertices \cite{khuller2009finding}. Unfortunately, that problem cannot be directly reduced to our \ktcp. To bridge this gap, we extend the notion of graph density to account for vertex weights and define a helper problem called \WDalK\ -- given a graph, find the subgraph with maximum ``weighed density'' with at least $k$ vertices. We then show that it is NP-hard and then reduce the \WDalK to our problem. 

\note[Laks]{I think overall the logic of the hardness proof is sound. However, the presentation is complex and hard to follow. Here is what I think we need to show. Set up a new problem called VEDalK (need a better name) and show that finding if a given graph $G$ contains a subgraph with $\ge K$ vertices whose ``density'' is greater than a given threshold is NP-hard. This is achieved in two stages: first we establish this for VEDK (exactly K) and then modify the proof for VEDalK. 

Next, we reduce this problem to that of finding a subgraph with maximum kw score. More precisely, given $G$, $K$, and some threshold for density, we construct an instance $G'$ with the same $K$ and a slightly modified threshold', s.t. $G$ contains a subgraph with $\ge$ $K$ vertices whose density is $\ge$ threshold iff $G'$ contains a subgraph with $\ge K$ vertices whose attribute score is $\ge$ threshold'. I don't see why the last para is needed.} 

\note[Xin]{Totally agree. Using this framework, we can make the proof simpler. Reduce the problem that a given graph $G'$ contains a subgraph with $\ge K$ vertices whose ``density'' is greater than a given threshold $\gamma=(K-1)/2$, to VEDK. }

\stitle{Weighted Density.} Let $G=(V, E)$ be an undirected graph. Let $\w(v)$ be a non-negative weight associated with each vertex $v\in V$. Given a subset $S\subseteq V$, the subgraph of $G$ induced by $S$ is $G_{S}=(S, E(S))$, where $E(S)= \{(u,v)\in E\mid u, v\in S\}$. For a vertex $v$ in a subgraph $H\subseteq G$, its degree is $\deg_{H}(v)=|\{(u,v)\mid (u,v)\in E(H)\}|$. Next, we define: 

\begin{definition}
[Weighted Density.] Given a subset of vertices $S\subseteq V$ of a weighted graph $G$, the weighted density of subgraph $G_S$ is defined as $\chi(G_S)= \sum_{v\in S}\frac{\deg_{G_S}(v)+\w(v)}{|S|}$. 
\label{def.wden}
\end{definition}

Recall that traditional edge density of an induced subgraph $G_S$ is $\rho(G_S)=\frac{|E(S)|}{|S|} $ $=\sum_{v\in S} \frac{\deg_{G_S}(v)}{2|S|} $ \cite{khuller2009finding, bahmani2012densest}. That is, $\rho(G_S)$ is twice the average degree of a vertex in $G_S$. 
Notice that in Definition \ref{def.wden}, if the weight of $v$ is $\w(v)=0$,  $\forall v$, then the weighted density $\chi(G_S) = 2 \rho(G_S)$. It is well known that finding a subgraph with the maximum edge density can be solved optimally using parametric flow or linear programming relaxation \cite{bahmani2012densest}. However, given a number $k$, finding the maximum density of a subgraph $G_S$ containing at least $k$ vertices is NP-hard \cite{khuller2009finding}. 

Define the weight of a vertex $v$ in a graph $G$ as its degree in $G$, i.e., $\w(v)=\deg_{G}(v)$. Then, $\chi(G_S) = \sum_{v\in S} \frac{\deg_{G_S}(v)+\deg_{G}(v)}{|S|} = 2\rho(G_S) +\sum_{v\in S} \frac{\deg_{G}(v)}{|S|}$. 
\eat{ 
We use $\chi^*_{\geq k}(G) =  \max_{S\subseteq V, |S|\geq k} \{\chi(G_S)\}$  to denote the maximum weighted density of a subgraph $G_S$ containing at least $k$ vertices.
} 
We define a problem, the \WDalK, as follows: given a graph $G$ with weights as defined above, and a density threshold $\alpha$, check whether $G$ contains an induced subgraph $H$ with at least $k$ vertices such that $\chi(H)\ge \alpha$. We show it is NP-hard in Theorem \ref{theorem.wden}.
\cut{To establish this, we first show that the \WDK, i.e., finding whether $G$ has a subgraph $H$ with \emph{exactly} $k$ vertices with weighted density at least $\alpha$, i.e., $\chi(H) \ge \alpha$, is NP-hard.\footnote{Notice that the hardness of finding the maximum density subgraph with $\geq k$ vertices does \emph{not} imply hardness of \WDK for a specific weight function over the vertices and thus it needs to be proved.} We then extend this result to the hardness of the \WDalK.}

\cut{
\eat{ 
To prove it, we firstly show the NP-hard of \WDK, that is, given number $\alpha$, checking whether there exists a subgraph $G_S$ containing exactly $k$ nodes with weighted density no less than $\alpha$, i.e., $\chi^*_{= k}(G) =  \max_{S\subseteq V, |S|= k} \{\chi(G_S)\}\geq \alpha$. We show \WDK as NP-hard as follow. 
} 

\begin{lemma}\label{lemma.wden}
\WDK is NP-hard.
\end{lemma}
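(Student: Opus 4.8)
The plan is to prove the lemma by a polynomial-time reduction from \textsc{Clique} restricted to regular graphs, which is NP-hard. The natural starting point is the identity already derived in the text: when every vertex weight is set to its degree, $\w(v)=\deg_G(v)$, then for any $S\subseteq V$ with $|S|=k$,
\[
\chi(G_S)=\frac{2|E(S)|+\sum_{v\in S}\deg_G(v)}{k}.
\]
The term $\sum_{v\in S}\deg_G(v)$ is exactly what blocks a direct correspondence between high weighted density and many induced edges, since a set of high-degree vertices could score well even when it induces few edges. The key observation I would exploit to neutralize this term is that on a $d$-regular graph it collapses to the constant $dk$, independent of which $k$-subset $S$ is chosen.

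First I would record the reduction core. Assume $G$ is $d$-regular. Then $\sum_{v\in S}\deg_G(v)=dk$ for every $k$-subset $S$, so
\[
\chi(G_S)=\frac{2|E(S)|}{k}+d .
\]
Because $|E(S)|\le\binom{k}{2}$ with equality if and only if $S$ induces a clique, this yields $\chi(G_S)\le (k-1)+d$, with the bound attained exactly when $S$ is a $k$-clique. Hence, setting the density threshold $\alpha=(k-1)+d$, the graph $G$ contains an induced subgraph on exactly $k$ vertices with $\chi(G_S)\ge\alpha$ if and only if $G$ contains a clique on $k$ vertices. Checking both directions is immediate from the inequality above, so this is a correct polynomial-time reduction from \textsc{Clique} on regular graphs to \WDK, establishing NP-hardness.

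The remaining and main obstacle is to justify that the source problem, deciding whether a \emph{regular} graph has a $k$-clique, is itself NP-hard, since the reduction crucially uses regularity to kill the degree term. I would handle this in one of two ways: either invoke the known fact that \textsc{Clique} remains NP-complete on regular graphs, or give a self-contained regularization that turns an arbitrary \textsc{Clique} instance $(G,k)$ with maximum degree $\Delta$ into a regular graph $G'$ whose cliques of size $\ge 3$ coincide with those of $G$. The construction I would use takes several disjoint copies of $G$ and, for each original vertex $v$, adds a bipartite (hence triangle-free) ``deficiency'' gadget among the copies of $v$ so that every vertex attains the common degree. Since distinct original vertices taken from distinct copies are never adjacent and the deficiency gadgets are triangle-free, no new clique of size $\ge 3$ is introduced, so for $k\ge 3$ the $k$-cliques of $G'$ correspond exactly to those of $G$. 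The only delicate point is a parity condition needed for the regular deficiency gadget to exist, which can always be met by using an even number of copies. Combining this regularization with the reduction core above completes the proof that \WDK is NP-hard.
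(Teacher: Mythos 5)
Your proof is correct, but it routes the hardness through a different source problem than the paper does. Both arguments rest on the same identity $\chi(G_S)=2\rho(G_S)+\tfrac{1}{|S|}\sum_{v\in S}\deg_G(v)$ and on the need to neutralize the degree-sum term; the difference is where that neutralization happens. The paper reduces from unrestricted CLIQUE and regularizes \emph{inside the target instance}: it pads $G$ with $n-\deg_G(v)$ pendant dummy vertices per vertex $v$, so that every original vertex has degree exactly $n$ in the padded graph $G'$, sets $\alpha=n+k-1$, and then argues that the degree-$1$ dummies can never occur in a subset reaching the threshold, so any witness lies in $V$ and must be a $k$-clique. You instead leave the graph untouched and regularize the \emph{source problem}, reducing from CLIQUE restricted to regular graphs, where the degree term collapses to the constant $d$ and the threshold $\alpha=d+k-1$ works immediately; the price is that you must separately establish NP-hardness of CLIQUE on regular graphs, either by citation (e.g., Independent Set on cubic graphs plus complementation, noting that the complement of a regular graph is regular) or by your copies-plus-bipartite-gadget construction, which is in effect a second reduction. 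Your gadget argument is sound: gadget edges join only copies of the same vertex, copies of distinct vertices in distinct copies are non-adjacent, and the gadgets are triangle-free, so for $k\ge 3$ every $k$-clique lives inside a single copy of $G$, and polynomially many copies suffice. In short, the paper's route buys self-containment in one construction with a one-line dismissal of the dummies, while yours buys a cleaner, instance-preserving core reduction at the cost of an external hardness fact or an extra regularization step.
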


\begin{proof} 
We reduce the well-known NP-complete problem, CLIQUE, to \WDK. Given a graph $G = (V,E)$ with $n$ vertices and a number $k$, construct a graph $G' = (V\cup V', E\cup E')$ as follows. For each vertex $v\in V$, add $n-deg_G(v)$ new dummy vertices. $G'$ contains an edge connecting each $v\in V$ with each of its associated dummy vertices. Notice that the maximum degree of any node in $G'$ is $n$. In particular, every vertex in $V$ has degree $n$ in $G'$ whereas every dummy vertex in $V'$ has degree $1$. So for any $S\subseteq V\cup V'$, $\chi(G'_S) = 2\rho(G'_S) + \frac{\sum_{v\in G'_S}deg_{G'}(v)}{|S|} \le 2\rho(G'_S) + n$. Set $\alpha = n + k - 1$. We claim that $G$ contains a clique of size $k$ iff $\chi(G') \ge \alpha$. 

\noindent 
$(\Rightarrow)$: Suppose $H\subset G$ is a $k$-clique. Since each vertex $v$ of $G$ has degree $n$ in $G'$ and $2\rho(H)$ is the average degree of a vertex in $H$, we have $\chi(H) = 2\rho(H) + n = k-1+n$. 

\noindent 
$(\Leftarrow)$: Suppose $G'$ contains an induced subgraph $G'_S$ with $|S| = k$ and with $\chi(G'_S) \ge n+k-1$. It is clear that for any $S$ with $S\cap V'\ne \emptyset$, $\chi(G'_S) < n+k-1$. The reason is that vertices in $V'$ have degree $1 < n$ in $G'$. Thus, we must have $S\subset V$. Now, for any $S\subseteq V\cup V'$,  $\chi(G'_S)$ is upper bounded by $n+k-1$. Thus, $\chi(G'_S) = n+k-1$, and we can infer that $2\rho(G'_S) = k-1$, implying $G'_S = G_S$ is a $k$-clique.  
%
\eat{ 
We reduce the well-known NP-hard problem of Maximum Clique (decision version) to the \WDK. Given an arbitary graph $G(V,E)$ with $|V|=n$ vertices and number $k\geq 3$, the Maximum Clique Decision problem is to check whether $G$ contains a clique of size $k$. From this, let $\alpha= n+k-1$, we construct an instance of check whether the maximum weighted density $\chi^*_{= k}(G') \geq $ $\alpha $ on a new graph $G'$. The graph $G'=(V\cup V', E \cup E')$ is built on $G$ as follow. For each vertex $v\in V$, we create $n-\deg_{G}(v)$ new dummy nodes and add an edge between $v$ with each of $n-\deg_{G}(v)$ dummy nodes into $G'$.  In graph $G'$, for each node $v\in V$, $v$ has the equal maximum degree as $n$, i.e., $\deg_{G'}(v)=n$. Meanwhile, for each dummy node $u\in V'$, it has smallest degree as $\deg_{G'}(u)=1$. For any subset $S\subseteq V \cup V'$, $\chi(G'_S) = \sum_{v\in S} \frac{\deg_{G'_S}(v)}{|S|} +\sum_{v\in S} \frac{\deg_{G'}(v)}{|S|} \leq \sum_{v\in S}\frac{ \deg_{G'_S}(v)}{|S|} + n = 2\rho(G'_{S})+n$. Only when $S\subseteq V$, the equation achieves as $\chi(G'_S) = 2\rho(G'_{S})+n$.

Now, we show that the instance of the Maximum Clique Decision problem is a YES-instance iff the corresponding instance of $\chi^*_{= k}(G')\geq n+k-1$ is a YES-instance. First, if there exists a $k$-clique $H$ in $G$, we have $V(H)\subseteq V$ and $H\subseteq G'$, then $\chi(H) = 2\rho(H)+n$. Since $\rho(H)= \frac{k-1}{2}$ for $H$ as a $k$-clique, $\chi(H) = n+k-1$. Thus,$\chi^*_{= k}(G')$ $ \geq \chi(H) \geq n+k-1$, due to $|V(H)|=k$. 

On the other hand,  if there exists a $S^* \subseteq V\cup V'$ with $\chi(G'_{S^*}) \geq n+k-1$ and $|S^*|=k$, then we can obtain $\rho(G'_{S^*}) \geq \frac{k-1}{2}$ due to $\chi(G'_{S^*}) \leq 2\rho(G'_{S^*})+n$. Since $|S^*|=k$ and $\rho(G'_{S^*}) \geq \frac{k-1}{2}$, $G'_{S^*}$ is a $k$-clique. In addition, $S^*\subseteq V$, because all dummy nodes have degree of $1$ and will not present in $k$-clique. Overall, $G'_{S^*}$ is a $k$-clique in $G$. 
} 
\end{proof}
}

\begin{theorem}\label{theorem.wden}
\WDalK is NP-hard.
\end{theorem}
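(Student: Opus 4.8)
The plan is to reduce from the problem of finding the densest subgraph on at least $k$ vertices, which is NP-hard \cite{khuller2009finding} and is exactly the ``crux'' referred to above; its decision version asks, given a graph $G$ on $n$ vertices, an integer $k$, and a rational $\beta$, whether $G$ has an induced subgraph on at least $k$ vertices with edge density $\rho\ge\beta$. The starting observation is that, with $\w(v)=\deg_G(v)$, one has $\chi(G_S)=2\rho(G_S)+\frac{1}{|S|}\sum_{v\in S}\deg_G(v)$, i.e. the weighted density is twice the edge density plus the average ambient degree of $S$. The difficulty is that this second term is not monotone in $\rho$, so the source instance cannot be reduced on the same graph. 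Hence the first step is to make the degree term constant.

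Given an instance $(G,k,\beta)$, I would build $G'$ by attaching to every $v\in V(G)$ exactly $D-\deg_G(v)$ pendant dummy vertices, where $D$ is a large value fixed below (with $D\ge\Delta$, the maximum degree, so the counts are non-negative). Then every original vertex has degree exactly $D$ in $G'$, every dummy has degree $1$, and edges among original vertices are unchanged; so for any all-original set $A\subseteq V(G)$ we get $\chi(G'_A)=2\rho(G_A)+D$. Setting the threshold $\alpha=2\beta+D$ and keeping the same $k$, the forward direction is immediate: a witness $A$ with $|A|\ge k$ and $\rho(G_A)\ge\beta$ satisfies $\chi(G'_A)\ge\alpha$.

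The reverse direction carries the real work. Given a witness $S^*$ with $|S^*|\ge k$ and $\chi(G'_{S^*})\ge\alpha$, I would first discard the pendant dummies: each dummy adds at most $3$ to the numerator $\sum_{v}(\deg_{G'_{S^*}}(v)+\w(v))$ but $1$ to $|S^*|$, so whenever $D\ge 3$ the restriction $A=S^*\cap V(G)$ satisfies $\chi(G'_A)\ge\chi(G'_{S^*})\ge\alpha$, whence $\rho(G_A)\ge\beta$. What remains -- and this is the main obstacle, created by the ``at least $k$'' relaxation -- is to guarantee $|A|\ge k$: a priori the witness could meet the size quota cheaply by hanging many degree-$1$ dummies off a small but very dense original core. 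I would rule this out quantitatively. Writing $a=|A|$ and $b=|S^*|-a$, the inequality $\chi(G'_{S^*})\ge\alpha$ together with the fact that each dummy contributes at most $3$ forces $b(2\beta+D-3)\le 2|E(G_A)|-2\beta a\le a(a-1)$. If $a<k$ then $b\ge k-a\ge 1$, and since $\beta\ge 0$ this gives $D-3\le a(a-1)\le (k-1)(k-2)$. Thus choosing $D>\max\{\Delta,(k-1)^2+3\}$ -- still polynomial, adding $O(nD)$ dummies -- makes $a<k$ impossible, so $A$ is a valid densest-subgraph witness and the reduction is correct.

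Finally, a remark on why the helper problem is needed at all. With an at-least-$k$ quota one cannot reduce from Clique in the obvious way: a dense-but-non-clique subgraph (e.g. a $C_4$ when $k=3$) meets any density threshold calibrated to $k$-cliques, so clique detection is not recovered. This is precisely why the \WDalK is introduced as the intermediate hard problem, and why the key quantitative trick is to let the equalized degree $D$ grow super-linearly in $k$, so that dummy padding can never reach the size quota while keeping the weighted density above $\alpha$.
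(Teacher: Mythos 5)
Your proof is correct, and it reaches the result by a genuinely different route than the paper. Both arguments hinge on the same core gadget: pendant dummy vertices that equalize every original degree to a common value $D$, so that on any all-original vertex set $S$ the weighted density collapses to $\chi(G'_S)=2\rho(G_S)+D$ and density thresholds translate linearly. The difference lies in the source problem and in where the ``at least $k$'' difficulty is absorbed. The paper reduces from Maximum Clique in two stages: it first proves hardness of the exactly-$k$ variant (the \WDK), taking $D=n$ and $\alpha=n+k-1$, where the bound $2\rho(G'_S)\le |S|-1=k-1$ (with equality iff $G'_S$ is a $k$-clique) does all the work, and it then lifts exactly-$k$ to at-least-$k$ by adapting Khuller and Saha's reduction of densest-$k$-subgraph to densest-at-least-$k$-subgraph. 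You instead reduce in a single step from the densest-at-least-$k$-subgraph decision problem, and you absorb the size-quota difficulty inside the reduction through the quantitative choice $D>\max\{\Delta,(k-1)^2+3\}$: your inequality $b(2\beta+D-3)\le 2|E(G_A)|-2\beta a\le a(a-1)$ rules out padding a sub-quota dense core with cheap dummies, and, as a bonus, since $2\beta+D-3\ge 0$ it already yields $\rho(G_A)\ge\beta$, so your separate dummy-removal step is actually redundant. Your closing $C_4$ remark correctly identifies why the clique calibration breaks under the at-least-$k$ relaxation, which is precisely why the paper needs its second stage and you need a different source problem. The trade-off: the paper's chain bottoms out at Clique, the most classical hard problem, but requires two reductions and the Khuller--Saha machinery; yours is a single self-contained reduction, at the cost of invoking NP-hardness of the at-least-$k$ densest subgraph \emph{decision} problem --- Khuller and Saha prove hardness of the optimization version, and the decision version follows routinely (a decision oracle plus a scan over the $O(nm)$ candidate density values solves the optimization), so this foundation is sound, though worth stating explicitly.
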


\begin{proof}
\cut{We can reduce \WDK to \WDalK, using the  ideas similar to those used in reducing the densest $k$ subgraph problem to the densest at least k subgraph problem \cite{khuller2009finding}. }
\add{We reduce the well-known NP-hard problem of Maximum Clique (decision version) to this problem.  A complete proof is reported in the arXiv article \cite{ArxivATC}.}
\end{proof}

\begin{theorem}\label{theorem.kcs}
\ktcp is NP-hard.
\end{theorem}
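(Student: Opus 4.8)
The plan is to reduce \WDalK (proven NP-hard in Theorem~\ref{theorem.wden}) to the decision version of \ktcp, namely: given $G'$, a query $Q=(V_q,W_q)$, and numbers $k,d,\alpha$, does $G'$ contain a \kdtruss $H$ with $\keyw(H,W_q)\ge\alpha$? Since the optimization version is at least as hard as its decision version, this suffices. The crux is to encode the weighted density of an induced subgraph \emph{purely through the attribute score}, while using a fresh clique to trivially satisfy the cohesiveness and communication constraints.

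Given a \WDalK instance consisting of a graph $G=(V,E)$ with weights $\w(v)=\deg_G(v)$, a size bound $k$, and a threshold $\alpha$, I would build an attributed graph $G'$ as follows. The vertex set of $G'$ is $V$, and its edge set is the \emph{complete} graph on $V$ (so the original edges of $G$ are discarded as edges). For every edge $e=(u,v)\in E$ of the \emph{original} graph I introduce one distinct attribute $w_e$, assigned to exactly its two endpoints $u$ and $v$; equivalently, each vertex $v$ carries the attribute of every edge of $G$ incident to it. I set $W_q=\{w_e:e\in E\}$, the truss parameter to $k$, and the distance bound $d=1$. This construction is clearly polynomial: $|W_q|=|E|$ and the total number of attribute incidences is $2|E|$.

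The key calculation is that for any vertex set $S\subseteq V$, writing $G'_S$ for the induced subgraph, each original edge with both endpoints in $S$ contributes $|V_{w_e}\cap S|^2=4$, each with exactly one endpoint in $S$ contributes $1$, and the rest contribute $0$; hence $\keyw(G'_S,W_q)=\frac{4|E(S)|+|\partial S|}{|S|}$, where $\partial S$ is the set of edges of $G$ crossing the cut $(S,\bar S)$. Since $\sum_{v\in S}\deg_G(v)=2|E(S)|+|\partial S|$, this equals exactly the weighted density $\chi(G_S)=\sum_{v\in S}\frac{\deg_{G_S}(v)+\deg_G(v)}{|S|}$ of Definition~\ref{def.wden}. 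On the structural side, because $G'$ is a clique, $G'_S$ is the complete graph $K_{|S|}$, which is connected, has diameter $1$, and is a $|S|$-truss; thus $G'_S$ is a valid \kdtruss (with $d=1$) iff $|S|\ge k$. Moreover, for a fixed vertex set the induced subgraph maximizes trussness without changing the attribute score, so restricting to induced subgraphs is without loss of generality. Consequently the feasible \kdtrusses of $G'$ are exactly the induced subgraphs on sets $S$ with $|S|\ge k$, and maximizing $\keyw$ over them coincides with maximizing $\chi(G_S)$ over all $S$ with $|S|\ge k$. Therefore $G'$ has a \kdtruss with attribute score $\ge\alpha$ iff $G$ has an induced subgraph on $\ge k$ vertices with weighted density $\ge\alpha$, which establishes NP-hardness.

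The main technical wrinkle I expect is the handling of the query-node requirement $V_q\subseteq V(H)$: taking $V_q=\emptyset$ makes the correspondence exact, but if the model insists on a non-empty $V_q$ one must either iterate the reduction over each singleton $V_q=\{v\}$ (still polynomially many instances, taking the best outcome) or attach a single apex query vertex adjacent to all of $V$, which forces a fixed additive shift of $1$ in the denominator $|V(H)|$ and hence only a mild, computable adjustment of the threshold $\alpha$. Everything else --- the clique forcing the truss and distance conditions, and the per-edge attribute gadget forcing the exact density identity --- is routine to verify, so the reduction is polynomial and the equivalence is exact up to this threshold bookkeeping.
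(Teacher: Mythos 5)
Your proposal is correct and is essentially the paper's own reduction: the same helper problem \WDalK, the same construction (complete graph on $V$, one distinct attribute per original edge assigned to its two endpoints, $W_q$ equal to all edge attributes, $V_q=\emptyset$ so the truss and distance constraints are trivialized by the clique), and the same key identity $\keyw(G'_S,W_q)=\chi(G_S)$, which you verify by the equivalent count $\frac{4|E(S)|+|\partial S|}{|S|}$ where the paper splits $x^2=x(x-1)+x$. The only differences are cosmetic (you set $d=1$ rather than $d=0$, and you spell out the decision-version framing and the non-empty-$V_q$ workaround that the paper dispatches in a footnote).
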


\add{\begin{proof}
We reduce the \VEDalK 
 to \ktcp. The complete proof is
available in \cite{ArxivATC}.\qed 
\end{proof}	}

\cut{
\begin{proof}
We reduce the \VEDalK 
 to \ktcp. Given a graph $G=(V,E)$ with $|V|=n$ vertices, construct an instance $G'$ as follows. 
\eat{
number $k$ and $\alpha$, \VEDalK is to check whether exists a subset $S\subseteq V$ with $|S|\geq k$ having $\chi(G_S)\geq \alpha$. From this, construct an instance of \ktcp on a new graph $G'$ whether there exists an \kdtruss $H \subset G'$ with $\keyw(H, W_q) \geq \alpha$ for a query $Q=(V_q, W_q)$.   
} 
$G'$ is a complete graph over $n$ vertices. For simplicity, we use $V$ to refer to the vertex set of both $G$ and $G'$, without causing confusion. For each edge $(u,v) \in E(G)$, create a distinct attribute $w_{uv}$ \xin{for $G'$}. We assume $w_{uv}$ and $w_{vu}$ denote the same attribute. Then, with each vertex $v \in V$ \xin{in $G'$}, associate a set of attributes: $\doc(v) = \{w_{vu}: (v, u) \in E(G)\}$. Notice that  the cardinality of $\doc(v)$ is $|\doc(v)| = \deg_G(v)$. 
\eat{For any edge $e=(v, u)$ in $G$, $v$ and $u$ have the same attribute $w_{vu}$ in $G'$. Here, the attribute $w_{vu}$ and $w_{uv}$ are regarded as the same.} Also, an attribute $w_{vu}$ is present only in the attribute sets of $v$ and $u$, i.e., $V_{w_{vu}}=\{v, u\}$. 

For a vertex set $S\subset V$, we will show that $\keyw(G'_S, W_q)= \chi(G_S)$, where $G_S=(S, E(S))$ is the induced subgraph of $G$ by $S$, $G'_S=(S, E'(S))$ is the induced subgraph of $G'$ by $S$, and $W_q = \{w_{vu}: (v,u)\in E(G)\}$. That is, the query attributes are the set of attributes associated with every edge of $G$. 
We have 
\begin{scriptsize}
\begin{eqnarray} 
\keyw(G'_S, W_q) = \sum_{w_{vu}\in W_q} \frac{|V_{w_{vu}} \cap S |^2}{|S|}&&  \nonumber \\  
=  \sum_{w_{vu}\in W_q} \frac{(|V_{w_{vu}} \cap S |)(|V_{w_{vu}} \cap S |-1)}{|S|} +\sum_{w_{vu}\in W_q} \frac{|V_{w_{vu}} \cap S |}{|S|} 
\end{eqnarray} 
\end{scriptsize} 

\eat{
{
\scriptsize
\begin{eqnarray}
&&\keyw(H) = \sum_{w_{vu}\in W_q} \frac{|V_{w_{vu}} \cap V(H) |^2}{n}  \nonumber \\
&=&  \sum_{w_{vu}\in W_q} \frac{(|V_{w_{vu}} \cap V(H) |)(|V_{w_{vu}} \cap V(H) |-1)}{|V(H)|} +\sum_{w_{vu}\in W_q} \frac{|V_{w_{vu}} \cap V(H) |}{|V(H)|}  \nonumber \\
\end{eqnarray}
}
}

For every attribute $w_{vu}\in W_q$, exactly one of the following conditions holds: 

\begin{itemize} 
\item $u, v \in S$: In this case $(u,v) \in E(S)$. Clearly, $|V_{w_{vu}} \cap S| = 2$, so $(|V_{w_{vu}} \cap S|)(|V_{w_{vu}} \cap S |-1) = 2$. 

\item exactly one of $u, v$ belongs to $S$ and $(u,v) \in E \setminus E(S)$. In this case, $|V_{w_{vu}} \cap S|= 1$, so $(|V_{w_{vu}} \cap S|)(|V_{w_{vu}} \cap S |-1) = 0$. 

\item $u, v \not\in S$. In this case, clearly $(u,v) \not\in E(S)$ and $|V_{w_{vu}} \cap S | = 0$, so $(|V_{w_{vu}} \cap S |)(|V_{w_{vu}} \cap S |-1) = 0$. 
\end{itemize} 

\eat{
\begin{itemize} 
\item $u, v \in S$: In this case $(u,v) \in E(S)$. Clearly, $|V_{w_{vu}} \cap V(H)| = 2$, so $(|V_{w_{vu}} \cap V(H) |)(|V_{w_{vu}} \cap V(H) |-1) = 2$. 

\item exactly one of $u, v$ belongs to $S$ and $(u,v) \in E \setminus E(S)$. In this case, $|V_{w_{vu}} \cap V(H) |= 1$, so $(|V_{w_{vu}} \cap V(H) |)(|V_{w_{vu}} \cap V(H) |-1) = 0$. 

\item $u, v \not\in S$. In this case, clearly $(u,v) \not\in E(S)$ and $|V_{w_{vu}} \cap V(H) | = 0$, so $(|V_{w_{vu}} \cap V(H) |)(|V_{w_{vu}} \cap V(H) |-1) = 0$. 
\end{itemize} 
}

\eat{ 
\begin{description}
\item[(a)$|V_{w_{vu}} \cap S |=2.$] This corresponds to  $v\in S$, $u\in S$, and $(v, u) \in E(S)$. This implies $(|V_{w_{vu}} \cap V(H) |)(|V_{w_{vu}} \cap V(H) |-1) = 2$. 
\item[(b)$|V_{w_{vu}} \cap S |=1.$] It corresponds to exactly one of the vertices $v, u$ being in $S$, and $(v, u) \in E$. Due to either $v\notin S$ or $u\notin S$, $(v, u) \notin E(S)$. Meanwhile, $(|V_{w_{vu}} \cap S|)(|V_{w_{vu}} \cap S |-1) = 0$. 
\item[(c)$|V_{w_{vu}} \cap S |=0.$] $u\notin S$, $v \notin S$, and $(v, u) \notin E(S)$. 
\end{description}
} 

Therefore,  
\begin{scriptsize} 
\begin{align} 
\sum_{w_{vu}\in W_q} \hspace*{-2ex} \frac{(|V_{w_{vu}} \cap S |)(|V_{w_{vu}} \cap S |-1)}{|S|} 
= \hspace*{-2ex} \sum_{(v,u)\in E(S)} \hspace*{-2ex} \frac{2}{|S|}  = 2\frac{|E(S)|}{|S|} = 2\rho(G_S).   
\end{align}
\end{scriptsize}


On the other hand, we have 
\begin{scriptsize}
\begin{align}
\sum_{w_{vu}\in W_q} \frac{|V_{w_{vu}} \cap S |}{|S|} 
= \sum_{(v,u)\in E(S)} \frac{2}{|S|} + \sum_{v\in S, (v,u)\in E \setminus E(S)} \frac{1}{|S|} && \nonumber  \\
=\frac{\sum_{v\in S}\sum_{(v, u)\in E(S)} 1+  \sum_{v\in S}\sum_{(v,u)\in E(G)\setminus E(S)} 1} {|S|} 
=\frac{\sum_{v\in S} \deg_{G}(v)}{|S|}. 
\end{align} 
\end{scriptsize} 

Overall,  $\keyw(G'_S, W_q) = $ $2\rho(G_S) + \frac{\sum_{v\in S} \deg_{G}(v)}{|S|} = \chi(G_S)$. Next, we show that an instance of \WDalK is a YES-instance iff for the corresponding instance of \ktcp, has a weighted density above a threshold, w.r.t.  the query $Q=(V_q, W_q)$ where $V_q = \emptyset$ and $W_q = \{w_{vu}: (v,u)\in E\}$ and the parameter $d=0$.\footnote{Since $V_q = \emptyset$, we can set $d$ to any value; we choose to set it to the tightest value.} The hardness follows from this. 

\noindent 
$(\Leftarrow):$ Suppose $G$ is a YES-instance of \WDalK, i.e., there exists a subset $S^*\subset V$ such that for the induced subgraph $G_{S^*}$ of $G$, we have $\chi(G_{S^*}) \geq \alpha$. Then, the subgraph $G'_{S^*}=(S^*, E'(S^*))$ has $\keyw(G'_{S^*}, W_q) = \chi(G_{S^*}) \geq \alpha$. In addition, since $|S^*|\geq k$ and $G'$ is an $n$-clique, $G'_{S^*}$ is a $k$-clique, and hence a $k$-truss. For $V_q=\emptyset$, trivially $V_q\subseteq S^*$ and $G'_{S^*}$ satisfies the communication constraint on query distance. Thus, $G'_{S^*}$ is a \kdtruss with  $\keyw(G'_{S^*}, W_q) \geq \alpha$, showing $G'$ is a YES-instance of \ktcp.

\noindent 
$(\Rightarrow):$ Supose there exists a \kdtruss $G'_{S^*}=(S^*, E'(S^*))$, a subgraph of $G'$ induced by $S^*\subset V$, with  $\keyw(G'_{S^*}, W_q) \geq \alpha$. Then, we have $G_{S^*}=(S^*, E(S^*))$ and $\chi(G_{S^*}) = \keyw(G'_{S^*}, W_q) \geq \alpha$. Since $G'_{S^*}$ is a $k$-truss and $|S^*|\geq k$, we have $ \chi(G_{S^*}) $ $\geq \alpha$, showing $G$ is a YES-instance of \WDalK. \end{proof}
}

In view of the hardness, a natural question is whether efficient approximation algorithms can be designed for \ktcp. Thereto, we investigate the properties of the problem in the next subsections. Observe that from the proof, it is apparent that the hardness comes mainly from maximizing the attribute score of a \ktc. 

\subsection{Properties of \kdtruss}
Our attribute truss community model is based on the concept of $k$-truss, so the communities inherit good structural properties of $k$-trusses, such as  \emph{k-edge-connected}, \emph{bounded diameter} and \emph{hierarchical structure}. In addition, since the attribute truss community is required to have a bounded query distance, it will have a small diameter, as explained below.  
\note[Laks]{Hierarchical structure is not explained.} 
\note[Xin]{? Moreover, $k$-truss based community has \emph{hierarchical structure} that represents the cores of a community at different levels of granularity \cite{huang2014}, that is, $k$-truss is always contained in the $(k-1)$-truss for any $k\geq 3$.}

A $k$-truss community is ($k-1$)-edge-connected, since it remains connected whenever fewer than $k-1$ edges are deleted from the community \cite{cohen2008}. Moreover, a $k$-truss based community has \emph{hierarchical structure} that represents the hearts of the community at different levels of granularity \cite{huang2014}, i.e., a $k$-truss is always contained in some $(k-1)$-truss, for $k\geq 3$.  In addition, for a connected $k$-truss with $n$ vertices, the diameter is at most  $\lfloor\frac{2n-2}{k}\rfloor$ \cite{cohen2008}. \RV{Small diameter is considered an important property of a good community \cite{Edachery99graphclustering}. } 

Since the distance function satisfies the triangle inequality, i.e., for all nodes $u, v, w$, $\dist_G(u,v) \leq \dist_G(u,w)+\dist_G(w, v)$, we can express the lower and upper bounds on the community diameter in terms of the query distance as follows. 


\begin{observation}\label{lemma.disdia}
For a \kdtruss $H$ and a set of nodes $V_q\subseteq H$, we have $d\leq \diam(H) \leq \min\{\frac{2|V(H)|-2}{k}, 2d\}$.
\end{observation}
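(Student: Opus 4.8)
The plan is to establish the three inequalities separately, since each rests on a distinct ingredient: the $k$-truss diameter bound for one upper bound, the triangle inequality for the other, and monotonicity of a maximum over a subset of vertex pairs for the lower bound.

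First I would handle the upper bound $\diam(H) \leq \frac{2|V(H)|-2}{k}$. Since $H$ is a \kdtruss, by definition it is a connected $k$-truss containing $V_q$, so I can invoke directly the structural result stated earlier (attributed to Cohen) that any connected $k$-truss on $n$ vertices has diameter at most $\lfloor \frac{2n-2}{k}\rfloor$. Applying this with $n = |V(H)|$ gives $\diam(H) \leq \lfloor \frac{2|V(H)|-2}{k}\rfloor \leq \frac{2|V(H)|-2}{k}$, which is immediate once the cited bound is in hand.

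Next I would prove $\diam(H) \leq 2d$ using the triangle inequality together with the query-distance constraint. Fix any two vertices $u, v \in V(H)$ and any query vertex $q \in V_q$; such a $q$ exists and lies in $H$ because $V_q \subseteq V(H)$. The triangle inequality (which the distance function satisfies, as noted just before the statement) gives $\dist_H(u,v) \leq \dist_H(u,q) + \dist_H(q,v)$. Unwinding the definition of vertex query distance, $\dist_H(u,q) \leq \dist_H(u,V_q) \leq \dist_H(H,V_q) \leq d$, where the final step is precisely the \kdtruss condition; symmetrically $\dist_H(q,v) \leq d$. Hence $\dist_H(u,v) \leq 2d$, and taking the maximum over all pairs $u,v$ yields $\diam(H) \leq 2d$. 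Combining the two upper bounds gives the claimed minimum.

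Finally, for the lower bound I would observe that $\diam(H)$ is the maximum of $\dist_H(u,v)$ over \emph{all} pairs $u,v \in V(H)$, whereas the graph query distance $\dist_H(H,V_q) = \max_{u \in H,\, q \in V_q}\dist_H(u,q)$ is a maximum over the smaller family of pairs that include a query node (well defined since $V_q \subseteq V(H)$); a maximum over a subset never exceeds the maximum over the whole, so $\dist_H(H,V_q) \leq \diam(H)$, and with $d$ read as the query distance realized by $H$ this is the stated $d \leq \diam(H)$. None of the three steps is technically deep; the only care needed is in the $2d$ bound, namely ensuring the intermediate vertex $q$ is genuinely in $H$ so the triangle inequality applies \emph{inside} the subgraph, and correctly peeling apart the nested definition of graph query distance so that each leg is bounded by $d$.
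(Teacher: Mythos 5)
Your proof is correct and takes essentially the same route the paper intends for this observation: Cohen's $\lfloor (2n-2)/k \rfloor$ diameter bound for connected $k$-trusses, the triangle inequality through a query node (which lies inside $H$ since $V_q \subseteq V(H)$) for $\diam(H) \leq 2d$, and the subset-maximum argument comparing $\dist_H(H,V_q)$ with $\diam(H)$ for the lower bound. Your explicit caveat that $d \leq \diam(H)$ only holds when $d$ is read as the query distance actually realized by $H$ (rather than an arbitrary parameter merely satisfying $\dist_H(H,V_q) \leq d$) is the right reading, and it makes precise a point the paper leaves implicit.
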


\eat{ 
According to the query distance constraint in \atc model, a $d$-close-$k$-truss has the diameter of value no greater than $2d$, i.e., the distance between any pair of nodes  will not exceed $2d$ in the \ktc. As a result, a tighter diameter bound can be directly derived as follow.  

\begin{corollary}\label{lemma.diam}
Given an \atc $H$ as a $d$-close-$k$-truss, the diameter of $H$ has $\diam(H) \leq \min\{\frac{2|V(H)|-2}{k}, 2d\}$.
\end{corollary}
}

\begin{remark}
\new{Besides $k$-truss, there exist several other definitions of dense subgraphs including:  $k$-($r$,$s$)-nucleus \cite{sariyuce2015finding}, quasi-clique \cite{CuiXWLW13}, densest subgraph \cite{wu2015robust}, and $k$-core \cite{sozio2010}. 
\LL{A $k$-($r$,$s$)-nucleus, for positive integers $k$ and $r< s$, is a maximal union of $s$-cliques in which every $r$-clique is present in at least $k$ $s$-cliques, and every pair of $r$-cliques in that subgraph is connected via a sequence of $s$-cliques containing them. Thus, $k$-($r$,$s$)-nucleus is a generalized concept of $k$-truss, which can achieve very dense strucutre for large parameters $k$, $r$, and $s$. However, 
finding $k$-($r$,$s$)-nucleus incurs a cost of $\Omega(m^{\frac{s}{2}})$ time  where $m$ is the number of edges, which is more expensive than the $O(m^{1.5})$ time taken for computing $k$-trusses, whenever $s>3$. 
A detailed comparison of $k$-truss and other dense subgraph models can be found in \cite{huang2015approximate}. In summary, $k$-truss is a good dense subgraph model that strikes a balance between good structural property and efficient computation.} }
\end{remark}

\subsection{Properties of attribute score function} \label{sec.scorepro}

We next investigate the properties of the attribute score 
function, in search of prospects for an approximation algorithm
for finding \atc. From the definition of attribute score function $\keyw(H, W_q)$, we can infer the following useful properties. 

\stitle{Positive influence of relevant attributes.} The more relevant attributes a community $H$ has, the higher the score $\keyw(H, W_q)$. E.g., consider the community $H_4$ and $W_q=\{ML\}$ in Figure \ref{fig.subcom} (d). If the additional attribute ``ML'' is added to the vertex $v_9$, then it can be verified that the score  $\keyw(H_4, \{ML\})$ will increase. We have: 

\begin{observation}\label{lemma.vinc}
Given a \ktc $H$ and a vertex $v\in H$, let a new input attribute $w \in W_q \setminus \doc(v)$ be added to $v$, and $H'$ denote the resulting community.  Then $\keyw(H', W_q) > \keyw(H, W_q)$. 
\end{observation}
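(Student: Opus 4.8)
The plan is to exploit the fact that adding an attribute to a single vertex changes neither the vertex set nor the edge set of the community. First I would observe that $H'$ and $H$ coincide as graphs; only the attributes of $v$ differ, the attribute set of $v$ in $H'$ being $\doc(v)\cup\{w\}$. In particular $V(H')=V(H)$, so the structural constraints are untouched ($H'$ is still a \kdtruss containing $V_q$) and, crucially, the denominator $|V(H)|$ appearing in every term of the attribute score is unchanged.

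Next I would write $\keyw(H,W_q)=\sum_{w'\in W_q}\frac{|V_{w'}\cap V(H)|^2}{|V(H)|}$ and compare the two scores term by term over $W_q$. For every query attribute $w'\ne w$, the modification touches no vertex's coverage of $w'$, so $|V_{w'}\cap V(H')|=|V_{w'}\cap V(H)|$ and the corresponding summands agree. Hence the whole difference $\keyw(H',W_q)-\keyw(H,W_q)$ reduces to the single term indexed by $w$.

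For that term, since $w\notin \doc(v)$ in $H$ but $v$ carries $w$ in $H'$, and $v\in V(H)$, the count of vertices covering $w$ rises by exactly one. Writing $c=|V_{w}\cap V(H)|\ge 0$ for its value before the change, we have $|V_{w}\cap V(H')|=c+1$, and therefore
\[
\keyw(H',W_q)-\keyw(H,W_q)=\frac{(c+1)^2-c^2}{|V(H)|}=\frac{2c+1}{|V(H)|}>0,
\]
which proves the strict inequality. There is essentially no hard step here: the only point requiring care is to confirm that a single-vertex, single-attribute insertion perturbs neither any other attribute's count nor the denominator $|V(H)|$; granting that, the claim follows from this one-line computation.
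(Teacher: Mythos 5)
Your proof is correct: with $\keyw(H,W_q)=\sum_{w'\in W_q}\frac{|V_{w'}\cap V(H)|^2}{|V(H)|}$, adding $w$ to $v\in V(H)$ leaves the graph, the denominator, and all other attribute counts unchanged, and raises the single term for $w$ by $\frac{2c+1}{|V(H)|}>0$. The paper offers no formal proof of this observation, treating it as easily verified via an example, and your one-line computation is exactly the intended verification.
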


In addition, we have the following easily verified observation. 

\begin{observation}\label{lemma.kwinc}
Given a \ktc $H$ and   query attribute sets $W_{q} \subseteq W_{q'}$, we have  $\keyw(H, W_q)\leq \keyw(H, W_{q'})$.
\end{observation}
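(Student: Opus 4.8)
The plan is to exploit the separable, non-negative structure of the attribute score function. Recall from the definition of the attribute score that $\keyw(H,W_q) = \sum_{w\in W_q} \theta(H,w)\cdot\score(H,w) = \sum_{w\in W_q} \frac{|V_w\cap V(H)|^2}{|V(H)|}$. The first thing I would note is that each summand $\frac{|V_w\cap V(H)|^2}{|V(H)|}$ is manifestly non-negative (a squared cardinality divided by the positive quantity $|V(H)|$), and that its value depends only on the fixed community $H$ and the single attribute $w$, not on which query attribute set we are aggregating over. Hence the contribution of any particular attribute $w$ is identical whether it is counted inside $\keyw(H,W_q)$ or inside $\keyw(H,W_{q'})$.

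Next, using the hypothesis $W_q\subseteq W_{q'}$, I would split the sum defining $\keyw(H,W_{q'})$ according to whether an attribute already lies in $W_q$ or is newly added, giving
\begin{scriptsize}
\begin{align*}
\keyw(H,W_{q'}) &= \sum_{w\in W_q} \frac{|V_w\cap V(H)|^2}{|V(H)|} + \sum_{w\in W_{q'}\setminus W_q} \frac{|V_w\cap V(H)|^2}{|V(H)|} \\
&= \keyw(H,W_q) + \sum_{w\in W_{q'}\setminus W_q}\frac{|V_w\cap V(H)|^2}{|V(H)|}.
\end{align*}
\end{scriptsize}
Since the residual sum ranges only over the non-negative per-attribute contributions, it is at least $0$, and the claimed inequality $\keyw(H,W_q)\le \keyw(H,W_{q'})$ follows immediately.

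There is no genuine technical obstacle here: the statement is a pure monotonicity fact that is a direct consequence of the score being an additive aggregation of non-negative per-attribute terms. The only point worth stating explicitly is that the argument uses nothing about the specific form $\frac{|V_w\cap V(H)|^2}{|V(H)|}$ beyond non-negativity and additivity over $W_q$; consequently the same monotonicity holds verbatim for any admissible score function obeying these two properties, consistent with Principle 1 (covering more query attributes should never decrease the score) and with the earlier remark that any function satisfying the three principles may be substituted.
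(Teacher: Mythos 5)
Your proof is correct and is exactly the ``easy verification'' the paper has in mind: the paper states this observation without proof (introducing it as ``easily verified''), and the intended argument is precisely your decomposition of $\keyw(H, W_{q'})$ into $\keyw(H, W_q)$ plus a sum of non-negative per-attribute terms $\frac{|V_w\cap V(H)|^2}{|V(H)|}$ over $W_{q'}\setminus W_q$. Nothing further is needed.
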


\eat{
For a community $H$, more kinds of attributes $H$ has, the higher the score $\keyw(H, W_q)$.
} 

\note[Laks]{Most of the lemmas above are straightforward and are perhaps better called facts.} 
\note[Xin]{Agree.}

\stitle{Negative influence of irrelevant vertices.} Adding irrelevant vertices with no query attributes to a \ktc  will decrease  its attribute score. \xin{For example,  for $W_q = \{DB\}$, if we insert the  vertex $v_7$ with attribute $IR$ into the community $H_1$ in Figure \ref{fig.subcom}(b), it  decreases the score of the community w.r.t. the above query attribute \xin{$W_q=\{DB\}$} 
, i.e.,  $\keyw(H_1\cup\{v_7\}, \{DB\}) < \keyw(H_1, \{DB\})$.} The following observation formalizes this property. 

\note[Laks]{There is no node $v_7$ in Figure \ref{fig.subcom}(a). In fact, currently it has only 5 nodes, making me wonder if the added node should be called $v_6$ instead. BTW, it's a good idea to number all nodes in the example graphs. Similar comment on Figure \ref{fig.subcom}(d). In fact, I don't follow these examples.} 

\note[Laks]{At some point, a serious spell-check pass must be made: there are lots of typos. I corrected some.} 

\begin{observation}\label{lemma.dec}
Given two \ktc's $H$ and $H'$ where $H\subset H'$, suppose $\forall v\in V(H') \setminus V(H)$ and $\forall w\in W_q$, $\doc(v)\cap V_w =\emptyset$. Then $\keyw(H', W_q) < \keyw(H, W_q)$.
\end{observation}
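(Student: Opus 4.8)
The plan is to reduce the claim to a one-line computation built on the closed form of the attribute score established earlier, namely $\keyw(H, W_q) = \sum_{w\in W_q} \frac{|V_w \cap V(H)|^2}{|V(H)|}$. The whole phenomenon is captured by the observation that appending query-irrelevant vertices leaves each numerator $|V_w\cap V(H)|$ unchanged while strictly enlarging the common denominator $|V(H)|$.

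First I would unpack the hypothesis to show that $V_w\cap V(H') = V_w\cap V(H)$ for every $w\in W_q$. Since $H\subset H'$ we have $V(H)\subseteq V(H')$, so it suffices to check that no newly added vertex lies in any $V_w$. By assumption every $v\in V(H')\setminus V(H)$ is irrelevant to the query, i.e.\ it covers none of the query attributes, so $v\notin V_w$ for all $w\in W_q$. Hence the counts $|V_w\cap V(H)|$ and $|V_w\cap V(H')|$ coincide for each query attribute, and only the denominator $|V(H)|$ changes when passing from $H$ to $H'$.

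Next I would substitute this into the score formula and factor out the denominator. Writing $n=|V(H)|$ and $n'=|V(H')|$, the equality of numerators yields
\begin{align}
\keyw(H', W_q) = \sum_{w\in W_q}\frac{|V_w\cap V(H')|^2}{n'} = \frac{n}{n'}\sum_{w\in W_q}\frac{|V_w\cap V(H)|^2}{n} = \frac{n}{n'}\,\keyw(H, W_q). \nonumber
\end{align}
Because the added vertices are genuinely new, $V(H)\subsetneq V(H')$ and thus $n' > n > 0$, so the factor $n/n'$ is strictly below $1$. Multiplying the score $\keyw(H, W_q)$ by this factor then gives $\keyw(H', W_q) < \keyw(H, W_q)$.

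The only delicate point is the strictness of the inequality. The identity above delivers a strict decrease exactly when $\keyw(H, W_q) > 0$, that is, when at least one query attribute is actually covered by some vertex of $H$; this is precisely the nondegeneracy condition I would state explicitly, and it holds for any meaningful attributed community (in particular whenever the \ktc has positive attribute score). If instead no query attribute were covered, both scores would vanish and the conclusion would weaken to equality, so the positivity of the score is the natural hypothesis that makes the strict statement true.
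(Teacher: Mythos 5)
Your proof is correct and follows exactly the route the paper intends: the paper states this as an unproved, easily-verified observation (illustrated only by the example of inserting $v_7$ with attribute IR into $H_1$), and your direct computation --- irrelevant new vertices leave every numerator $|V_w\cap V(H)|$ fixed while enlarging the denominator, giving the scaling identity $\keyw(H',W_q)=\frac{|V(H)|}{|V(H')|}\,\keyw(H,W_q)$ --- is precisely that verification. Your closing caveat is also well taken: strictness genuinely requires $\keyw(H,W_q)>0$ (and $V(H)\subsetneq V(H')$), a degenerate case that the paper's statement silently ignores.
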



\stitle{Non-monotone property and majority attributes.} 
The  attribute score function is in general non-monotone w.r.t. the size of the community, even when vertices with query related attributes are added. For instance, for the community $H_1$ in Figure \ref{fig.subcom}(a), with $W_q = \{DB, IR\}$, $\keyw(H_1, W_q)= 4\cdot\frac{4}{4}=4$. Let us add vertex $v_7$ with attribute $IR$ into $H_1$ and represent the resulting graph as $H_5$, then  $\keyw(H_5, W_q)= 4\cdot\frac{4}{5} + 1\cdot\frac{1}{5}=\frac{17}{5}< \keyw(H_1, W_q)$.   
If vertex $v_7$ has attribute DB instead of IR, then it is easy to verify that the attribute score of the resulting graph w.r.t. $W_q$ is strictly higher than $4$. Thus, $\keyw(.,.)$ is neither monotone nor anti-monotone. 
\note[Laks]{Waiting for the example.} \note[Xin]{Done.} This behavior raises  challenges for finding \ktc with the maximum attibute score. Based on the above examples, we have the following observation.  
\note[Laks]{I'd like to see the examples first. BTW, this section needs to be reorganized: lemmas $\rightarrow$ simple observations or facts; they can all be combined into a discussion para where we can list the facts one by one and illustrate them with examples or provide simple arguments.} 

\begin{observation}\label{lemma.nomonotone}
There exist \ktc's $H$ and $H'$ with $V(H') = V(H)\cup \{v\}$, and $\doc(v) \cap W_q \neq \emptyset$, such that $\keyw(H', W_q) < \keyw(H, W_q)$, and there exist \ktc's $H$ and $H'$ with $V(H') = V(H)\cup \{v\}$, and $\doc(v) \cap W_q \neq \emptyset$, for which $\keyw(H', W_q) > \keyw(H, W_q)$. 
\eat{
Given two \ktc's $H$ and $H'$ where $V(H') = V(H)\cup \{v\}$, even if $\doc(v) \cap W_q \neq \emptyset$, then either $\keyw(H', W_q) \leq \keyw(H, W_q)$ or $\keyw(H', W_q) \geq \keyw(H, W_q)$  can hold.}
\end{observation}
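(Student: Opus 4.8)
The statement is purely existential---it asserts two witnesses, one for each direction of the inequality---so the natural plan is to exhibit concrete communities rather than argue a general bound. To make both directions transparent, I would first isolate the marginal effect of inserting a single relevant vertex. Writing $n=|V(H)|$, $n_w=|V_w\cap V(H)|$, and $S=\doc(v)\cap W_q$ (nonempty by hypothesis), the definition of the score gives $\keyw(H,W_q)=\frac{1}{n}\sum_{w\in W_q} n_w^2$, while after inserting $v$ only the counts for $w\in S$ increase by one, so $\keyw(H',W_q)=\frac{1}{n+1}\bigl(\sum_{w\in W_q} n_w^2+\sum_{w\in S}(2n_w+1)\bigr)$. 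Subtracting, the change equals $\frac{n\sum_{w\in S}(2n_w+1)-\sum_{w\in W_q} n_w^2}{n(n+1)}$, so its sign is governed entirely by comparing $n\sum_{w\in S}(2n_w+1)$ against $\sum_{w\in W_q} n_w^2$.

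This comparison immediately suggests both witnesses. If $H$ already contains a query attribute covered by many nodes that $v$ does \emph{not} carry, the quadratic term $\sum_{w} n_w^2$ dominates and the score strictly drops (a dilution effect from the enlarged denominator); conversely, if $v$ reinforces an already popular attribute, the first term wins and the score strictly rises. I would instantiate both on the running example of Figure~\ref{fig.community}. Take $H_1$ (Figure~\ref{fig.subcom}(a)) with $W_q=\{DB,IR\}$, where every node carries $DB$ and none carries $IR$. Inserting $v_7$, which carries the minority attribute $IR$, produces a community whose score is strictly smaller, witnessing the first clause; inserting instead a vertex carrying the dominant attribute $DB$ yields a community whose score is strictly larger, witnessing the second clause. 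Each verification reduces to evaluating the closed form above and is a short arithmetic check.

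There is no deep obstacle here; the only point requiring care is legitimacy of the witnesses. Since the observation quantifies over \ktc's, I must confirm that in each example both $H$ and $H'$ are genuine \kdtruss subgraphs containing the query nodes---that is, the single-vertex insertion neither destroys the trussness nor violates the query-distance bound $d$. I would check this directly for the two instances by verifying that every edge of the enlarged graph still has support at least $k-2$ and that the added vertex stays within distance $d$ of the query nodes. Once both graphs are confirmed to satisfy the structural constraints, the two arithmetic comparisons complete the proof, establishing that $\keyw(\cdot,\cdot)$ is neither monotone nor anti-monotone under the addition of query-relevant vertices.
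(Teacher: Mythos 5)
Your proposal is correct and follows essentially the same route as the paper: the paper also establishes this observation by exhibiting the same two witnesses on the running example ($H_1$ of Figure~\ref{fig.subcom}(a) with $W_q=\{DB,IR\}$, adding $v_7$ carrying the minority attribute $IR$ to decrease the score, versus adding a $DB$-carrying vertex to increase it). Your closed-form expression for the sign of the score change is a clean way to organize the arithmetic (and mirrors the score-contribution computation the paper performs later, in Section~\ref{sec.basic}), and your extra care in checking that both witnesses remain \kdtrusses is a point the paper glosses over, but the underlying argument is the same.
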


\note[Laks]{Not clear. Perhaps the examples need to be better connected with these ``lemmas''. Also, I suspect there are some typos in the handdrawn examples.} 

The key difference between the two examples above is that DB is a ``majority attribute'' in $H_1$, a notion we formalize next. 
\eat{ 
If such inserted node $v$ carries with one majority attibute in $H$, the new community $H' = H\cup \{v\}$ will have a greater attibute score. For example, consider the community $H_1$ in Figure \ref{fig.subcom} (a) with $W_q=\{DB, IR\}$, the attribute ``DB'' is a certain majority attribute, since every node has the attribute ``DB''. Assume that the vertex $v_7$ carries with the attribute ``DB'', not with ``IR'' any more, then a new graph as $H_1$ with the insertion of $v_7$ will increase the attribute score.
} 
Formally, given a community $H$ and query $W_q$, we say that a set of attributes $X$ includes majority attributes of $H$, and $\theta(H, W_q\cap X) = $ $\sum_{w\in W_q\cap X} \theta(H, w)$ $\geq \frac{\keyw(H, W_q)}{2|V(H)|}$.  Recall that $\theta(H, w)$ is the fraction of vertices of $H$ containing the attribute $w$. We have: 

\eat{
\xin{This principle will help designing the bottom-up algorithms, by iteratively adding the relevant attibutes to increase attibute score. } 
} 

\note[Laks]{What is majority keyword/attribute?} 
\note[Xin]{For each attribute, the attribute score is voted by all nodes. If one attribute presents on most of nodes, it is the majority attribute. Thus, a new node with majority attribute is added into community, the attribute score of this community will increase, i.e., this node is not violated by most of members in this community. }




\begin{lemma}\label{lemma.majority}
Let $H$ be a \ktc of a graph $G$. Suppose there is a vertex $v \not\in V(H)$ such that the set of attributes $W_q \cap \doc(v)$ includes the majority attributes of $H$ and that adding $v$ to $H$ results in a \ktc $H'$ of $G$. Then $\keyw(H', W_q) > \keyw(H, W_q)$ holds. 
\end{lemma}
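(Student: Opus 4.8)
The plan is to reduce the claimed strict inequality to a clean algebraic condition and then discharge that condition using the majority-attribute hypothesis. First I would fix notation: write $n = |V(H)|$ and, for each query attribute $w \in W_q$, let $a_w = |V_w \cap V(H)|$ denote the number of community nodes covering $w$, so that by definition $\keyw(H,W_q) = \frac{1}{n}\sum_{w\in W_q} a_w^2$. Since $H'$ is obtained by adding the single vertex $v$, the count $a_w$ increases by one exactly for those $w \in W_q\cap\doc(v)$ and is unchanged otherwise, while $|V(H')| = n+1$. Using $(a_w+1)^2 = a_w^2 + 2a_w + 1$, this gives
\begin{equation}
\keyw(H',W_q) = \frac{1}{n+1}\left(\sum_{w\in W_q} a_w^2 + \sum_{w\in W_q\cap\doc(v)}(2a_w+1)\right).
\end{equation}

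Next I would clear denominators. Writing $S = \sum_{w\in W_q} a_w^2 = n\cdot\keyw(H,W_q)$ and $T = \sum_{w\in W_q\cap\doc(v)}(2a_w+1)$, the target $\frac{S+T}{n+1} > \frac{S}{n}$ is, after cross-multiplying by the positive quantities $n$ and $n+1$, equivalent to $nT > S$, i.e. to the single condition
\begin{equation}
T = \sum_{w\in W_q\cap\doc(v)}(2a_w+1) > \frac{S}{n} = \keyw(H,W_q).
\end{equation}
I would then feed in the majority hypothesis. Because $\theta(H,w) = a_w/n$, the assumption that $W_q\cap\doc(v)$ includes the majority attributes of $H$, namely $\sum_{w\in W_q\cap\doc(v)} \theta(H,w) \geq \frac{\keyw(H,W_q)}{2n}$, is exactly $2\sum_{w\in W_q\cap\doc(v)} a_w \geq \keyw(H,W_q)$. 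Substituting this into $T$ yields
\begin{equation}
T \;\geq\; \keyw(H,W_q) + |W_q\cap\doc(v)|,
\end{equation}
so it suffices to verify that $v$ covers at least one query attribute, i.e. $|W_q\cap\doc(v)| \geq 1$.

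The one point requiring care — and essentially the only real obstacle — is precisely this strictness: pinning down $|W_q\cap\doc(v)|\geq 1$. Here I would invoke that a genuine attributed community has positive score, $\keyw(H,W_q) > 0$; then if $v$ covered no query attribute, the left-hand side of the majority condition would vanish, contradicting $0 \geq \frac{\keyw(H,W_q)}{2n} > 0$. Hence $|W_q\cap\doc(v)|\geq 1$, which upgrades the bound on $T$ to the strict inequality $T > \keyw(H,W_q)$ and therefore delivers $\keyw(H',W_q) > \keyw(H,W_q)$. I expect the entire difficulty to reside in this bookkeeping around strictness and the degenerate zero-score case; once the counts $a_w$ are introduced, the algebra of the first two steps is routine. (Note that the structural hypothesis that $H'$ is again a \kdtruss plays no role in the score comparison itself; it is only what makes $H'$ an admissible community, so it need not enter the inequality argument.)
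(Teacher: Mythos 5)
Your proof is correct and follows essentially the same route as the paper's: both expand the score of $H'$ in terms of the counts $|V_w \cap V(H)|$, substitute the majority-attribute inequality $2\sum_{w \in W_q \cap \doc(v)} |V_w \cap V(H)| \geq \keyw(H, W_q)$, and obtain strictness from the fact that $v$ covers at least one query attribute. The only differences are cosmetic: you cross-multiply to reduce the claim to $T > \keyw(H, W_q)$ whereas the paper bounds the difference $\keyw(H', W_q) - \keyw(H, W_q) \geq \frac{r}{|V(H)|+1}$ directly with a common denominator, and you justify $|W_q \cap \doc(v)| \geq 1$ explicitly (from positivity of the score) where the paper simply assumes it w.l.o.g.\ in its setup.
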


\begin{proof}
\add{The proof can be found in \cite{ArxivATC}.}
\cut{Suppose $W_q= \{w_1, ..., w_l\}$ and w.l.o.g., let $W_q\cap\doc(v)$ $= \{w_1, ..., w_r\}$, where $1\leq r\leq l$. 
Let $|V(H)|= b$, and for each attribute $w_i\in W_q$, let $|V(H)\cap V_{w_i}| = b_i$.   Since $W_q\cap\doc(v)$ includes the majority attributes of $H$,  $\theta(H, W_q\cap \doc(v))= \frac{\sum_{i=1}^{r} b_i}{b} \geq \frac{\keyw(H, W_q)}{2b}$, so we have $\sum_{i=1}^{r} 2b_i \geq \keyw(H, W_q)$.  

We have $\keyw(H, W_q) = \sum_{i=1}^{l}\frac{|V(H) \cap V_{w_i}|^2}{|V(H)|}$ $=\sum_{i=1}^{l} \frac{b_i^2}{b}$, and $\keyw(H', W_q)$ $ = \sum_{i=1}^{r}\frac{(b_i+1)^2}{b+1}$ $ +$ $ \sum_{i=r+1}^{l}\frac{b_i^2}{b+1}$. As a result, 
$\keyw(H', W_q) - \keyw(H, W_q)$ 
$= \frac{b\cdot\sum_{i=1}^{r}(2b_i+1) - \sum_{i=1}^{l}b_i^2}{b(b+1)}$
$\geq \frac{b\cdot \keyw(H, W_q) +rb - b\cdot \keyw(H, W_q)}{b(b+1)}$ $= \frac{r}{b+1}$ $>0$.}
\end{proof}

This lemma will be helpful in designing bottom-up algorithms, by iteratively adding vertices with majority attributes to increase attribute score.

\note[Laks]{1. Why is this lemma interesting or useful? 2. There is a typo: in the last line in the numerator, the sum should be from $i=2$ to $l$ instead of $i=1$ to $l$, but the conclusion is still valid.}
\note[Xin]{The sum from $i=1$ to $l$ is exact, if the detailed computation are shown up.} 

\eat{ 
****ONE ISSUE WITH THE EXAMPLES BELOW IS THAT FOR THE SECOND EXAMPLE, WHERE $v^* = q_2$ is added, the trussness of $G_2$ decreasesIS ADDED, THE TRUSSNESS OF $G_2$ DECREASES. CAN YOU GIVE AN EXAMPLE WHERE THE TRUSSNESS REMAINS THE SAME?**** 

\xin{****AS FAR AS I CAN SEE, IN THE SECOND EXAMPLE, THE TRUSSNESS OF $G_2$ IS 3, WHICH KEEPS THE SAME WITH VERTEX INSERTION OF $v^* = q_2$. MOREOVER, WE HAVE SET K=2, I.E., HERE WE ONLY CONSIDER THE KEYWORD SCORE FUNCTION, WITHOUT CONSIDERING THE STRUCTURE SERIOUSLY. **** }

****DID YOU DEFINE SUBMODULARITY AND SUPERMODULARITY IN THE PAPER?**** 

\xin{**** TAKE A LOOK AT THE FOLLOWING DEFINITIONS. **** }
} 

\stitle{Non-submodularity and Non-supermodularity.} \label{sec.modular}
A set function $g: 2^U \rightarrow \mathbb{R}^{\ge 0}$ is said to be submodular provided for all sets $S\subset T\subset U$ and element $x\in U\setminus T$, $g(T\cup\{x\}) - g(T) \le g(S\cup\{x\}) - g(S)$, i.e., the marginal gain of an element has the so-called ``diminishing returns'' property. The function $g(.)$ is said to be supermodular if $-g(.)$ is submodular. Optimization problems over submodular functions lend themselves to efficient approximation. We thus study whether our attribute score function $f(.,.)$ is submodular w.r.t. its first argument, viz., set of vertices. 

Consider the graph $G$ in Figure \ref{fig.community} and query $W_q=\{DB, DM\}$ with $k=2$. Let the induced subgraphs of $G$ by the vertex sets $S_1= \{q_1, v_4\}$ and $S_2= \{q_1, v_4, v_5\}$ respectively be denoted $G_1$ and $G_2$;   $G_1 \subseteq G_2$. Let $v^*$ be a vertex not in $G_2$. Let us compare the marginal gains  $\keyw(G_1\cup \{v^*\}, W_q)-\keyw(G_1, W_q)$ and  $\keyw(G_2\cup \{v^*\}, W_q)-\kw(G_2, W_q)$, from adding the new vertex $v^*$ to $G_1$ and $G_2$. 
Suppose $v^*= v_6$ with attribute ``DB'', then we have $\keyw(G_2\cup \{v_6\}, W_q)-\keyw(G_2, W_q) = (4+1/4)-(3+1/3)=11/12 > \keyw(G_1\cup \{v_6\}, W_q)-\keyw(G_1, W_q)=(3+1/3)-(2+1/2)=5/6$, violating submodularity of the attribute score function $\keyw(., .)$. On the other hand, suppose $v^*= q_2$ with attributes ``DB'' and ``DM''. Then we have $\keyw(G_2\cup \{q_2\}, W_q)-\keyw(G_2, W_q) = (4+1) - (3+1/3) =5/3 < \keyw(G_1\cup \{q_2\}, W_q)-\keyw(G_1, W_q)=(3+4/3)-(2+1/2)=11/6$, which violates supermodularity. 
We just proved:
\begin{lemma}\label{lemma.nonmodular}
The attribute score function $\keyw(H, W_q)$ is neither submodular or supermodular.
\end{lemma}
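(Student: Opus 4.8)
The plan is to prove Lemma~\ref{lemma.nonmodular} by exhibiting two concrete counterexamples on the running graph $G$ of Figure~\ref{fig.community}: one pair of nested vertex sets showing that a marginal gain can \emph{increase} when the base set grows (violating submodularity), and one pair showing that a marginal gain can \emph{decrease} when the base set grows (violating supermodularity). Since both violations can be witnessed on the \emph{same} pair of nested sets $S_1\subset S_2$ by choosing two different new vertices $v^*$, I would fix $S_1=\{q_1,v_4\}$ and $S_2=\{q_1,v_4,v_5\}$, let $G_1,G_2$ be the induced subgraphs, and set the query to $W_q=\{DB,DM\}$ with $k=2$ (so that the trussness constraint is trivially met and we isolate the behavior of $\keyw(\cdot,\cdot)$ as a pure set function of the vertex set).

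First I would recall the definition: a set function is submodular iff for nested $S_1\subseteq S_2$ and $x\notin S_2$ the marginal gain satisfies $g(S_2\cup\{x\})-g(S_2)\le g(S_1\cup\{x\})-g(S_1)$, and supermodular iff the reverse inequality always holds. To disprove submodularity it suffices to find one instance where the larger base set yields the \emph{strictly larger} marginal gain; to disprove supermodularity, one instance where the larger base set yields the \emph{strictly smaller} marginal gain. So the core of the proof is just two arithmetic evaluations of $\keyw(H,W_q)=\sum_{w\in W_q}\frac{|V_w\cap V(H)|^2}{|V(H)|}$.

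For the submodularity violation I would take $v^*=v_6$ carrying attribute $DB$, and compute the four relevant scores directly from the definition, obtaining
\[
\keyw(G_2\cup\{v_6\},W_q)-\keyw(G_2,W_q)=\tfrac{11}{12}
\;>\;
\keyw(G_1\cup\{v_6\},W_q)-\keyw(G_1,W_q)=\tfrac{5}{6},
\]
so the marginal gain grows with the base set, contradicting the diminishing-returns inequality. For the supermodularity violation I would take $v^*=q_2$ carrying both $DB$ and $DM$, and compute
\[
\keyw(G_2\cup\{q_2\},W_q)-\keyw(G_2,W_q)=\tfrac{5}{3}
\;<\;
\keyw(G_1\cup\{q_2\},W_q)-\keyw(G_1,W_q)=\tfrac{11}{6},
\]
so here the marginal gain shrinks, contradicting the reverse inequality. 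Together these two witnesses establish that $\keyw$ is neither submodular nor supermodular, proving the lemma.

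The only delicate point—and the step I would be most careful about—is bookkeeping the attribute counts $|V_w\cap V(H)|$ and the sizes $|V(H)|$ correctly for each of the eight subgraphs involved, and confirming that each induced subgraph genuinely satisfies the (trivial, for $k=2$) $(k,d)$-truss condition so that all eight sets are legitimate communities for comparison. There is no conceptual obstacle: submodularity and supermodularity are each \emph{universally} quantified statements, so a single counterexample per claim suffices, and the lemma reduces entirely to the two displayed numeric comparisons, both of which the statement already anticipates. Since the text immediately preceding the lemma has in fact already carried out these computations, my proof would simply consolidate them into the formal two-counterexample argument and conclude.
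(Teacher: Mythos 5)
Your proposal is correct and is essentially identical to the paper's own argument: the paper uses the same nested sets $S_1=\{q_1,v_4\}\subset S_2=\{q_1,v_4,v_5\}$ with $W_q=\{DB,DM\}$ and $k=2$, the same witness $v^*=v_6$ (yielding marginal gains $11/12 > 5/6$ against submodularity), and the same witness $v^*=q_2$ (yielding $5/3 < 11/6$ against supermodularity). There is no substantive difference between your consolidation and the paper's proof.
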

In view of this result, we infer that the prospects for an efficient approximation algorithm are not promising.


\eat{  
\stitle{Variation of attibute functions. } In the following, we consider two interesting variants of attibute functions to measure the relevance of one community $H$ with the given attibutes $W_q$, againt ours in Defintion \ref{def.kwf}.

\begin{definition}[Sumup Attibute Relevance Function]
Given a subgraph $H\subseteq G$ and a set of attibutes $W_q$, the attibute relevance funcation of $H$ and $W_q$ is defined as $\keyw_2(H, W_q)= $ $\sum_{v\in V(H)}$ $ |\doc(v)\cap W_q |$, where $|\doc(v)\cap W_q |$ is the number of query attibutes occupied by the vertex $v$ in $H$. 
\end{definition}

\xin{This function has clear limitations for distinguishing the homogeneous attributes , and the weak ability for  avoiding the  nodes with irrelvant attributes. Consider the the query $Q=(\{q_1\}, \{DB\})$ and communities $H_1$, $H_3$ respectively in Figure \ref{fig.subcom}(a) and (c), $\keyw_2(H_1, W_q) = \keyw_2(H_3, W_q) = 5$. Even the whole graph has $\keyw_2(G, W_q) = 5$, which contains a lot of nodes with irrelvant attributes. }

\begin{definition}[Minimum Attibute Relevance Function]
Given a subgraph $H\subseteq G$ and a set of attibutes $W_q$, the attibute relevance funcation of $H$ and $W_q$ is defined as $\keyw_1(H, W_q)= $ $\min_{v\in V(H)}$ $|\doc(v)\cap W_q |$, where $|\doc(v)\cap W_q |$ is the number of query attibutes occupied by the vertex $v$ in $H$. 
\end{definition}

\xin{This function of minimum attribute relevance has strong measurement on the minimum associated attributes  in graph $H$. For example, consider the community $H_4$ in Figure \ref{fig.subcom}(d) for the query $Q=(\{q_1\}, \{ML\})$, $k=3$ and $d=2$. Using this function, $\keyw_1(H_4, W_q)= 0$ due to the vertex $v_3$ has no attribute ``ML''
This strong function is interesting and worthing to investigate in future work.}

\note[Laks]{We must in addition compare with other candidates and motivate our choice well. E.g., one can define $\keyw(.,.)$ based on TF-IDF or entropy or KL etc. Why should one prefer our choice over these others? All of these can be addressed systematically if we set up some basic principles as desirable properties of attibute score functions beforehand and then analyze various candidate functions w.r.t. how well they satisfy those principles. We can talk about this if you wish. We should make sure that the principles we set up are motivated by the applications we consider for KTC.} 
\note[Xin]{Will do.}

\stitle{Parameter Auto-Setting.} Given two parameters of \ktc model, it gives more leverage
on the properties of the community. However, if the user feels it hard to ajust the proper setting for parameters $d$ and $k$ to make a proper $d$-close-$k$-truss containg all query nodes.  Here, we proposed a self-adjust strategy to detect a good combination of parameters for a \ktc. Firstly, we computate a $k$-truss $H$ to connected all query nodes into a component, with the largest value $k$ as $k_{max}$. And then, we calculate the query distance of $H$, and set the paramters $k=k_{max}$ and $d=\dist_{H}(H, V_q)$. Beside this auto-setting choice, we also allow a user to choose proper relative values for different parameters. Specifically, after obtaining largest truss value $k_{max}$, we also compute the maximum distance $d_{max}$ between the query nodes in this $k_{max}$-truss. Based on the $k_{max}$ and $d_{max}$, we adpot the user choices of relative truss $\delta_{k}$ and relative distance $\delta_{d}$,  to set the parameter $k=k_{max}-\delta_{k}$ and $d=d_{max}+\delta_{d}$ for \atc model.
}


\section{Top-down Greedy Algorithm}
\label{sec:algo} 
In this section, we develop a greedy algorithmic framework
for finding a \atc. It leverages the notions of attribute score contribution and attribute marginal gain that we define. Our algorithm first finds a \kdtruss, and then iteratively removes vertices with smallest attribute score contribution. Then, we analyze the time and space complexity of our algorithm. 
We also propose a more efficient algorithm with better quality, based on attribute marginal gain and bulk deletion.


\note[Laks]{This is the first time in the paper when ``online'' is introduced. Are we really searching online? In what sense?}
\note[Xin]{This algorithm will not use index(off-line reuslts), i.e., it search the results from scratch.}

\note[xin]{Remark.  Without any specific notes, all queries consider in this papaer are under the parameter stting of $k=4$ and $d=2$.}

\subsection{Basic Algorithm} \label{sec.basic}
We begin with attribute score contribution. Given a subgraph $H\subset G$, a vertex $v\in V(H)$, and attribute query $W_q$, let us examine the change to the score $f(H,W_q)$ from dropping $v$. 
\eat{ 
Before we introduce the algorithm, we define an useful concept for vertex removal, called attribute score contribution. Given a graph $H$ and a vertex $v\in H$, if we remove the vertex $v$ from $H$, let us evaluate the attribute score of the remaining graph $H-\{v\}$, i.e., $\keyw(H-\{v\}, W_q)$. 
} 

\eat{
$\keyw(H, W_q) -\keyw(H-\{v\}, W_q) = \sum_{w \in W_q} \frac{|V_w \cap V(H)|^2}{|V(H)|} -  \sum_{w \in W_q} \frac{|V_w \cap (V(H)-\{v\})|^2}{|V(H)|-1}\\
 = \sum_{w \in W_q} \frac{|V_w \cap V(H)|^2}{|V(H)|} -  (\sum_{w \in W_q} \frac{|V_w \cap V(H)|^2 } {|V(H)|-1} -\sum_{w\in W_q \cap \doc(v)} \frac{2| V_{w} \cap V(H)|-1}{|V(H)-1|}) \\
 =\frac{\sum_{w\in W_q \cap \doc(v)} 2| V_{w} \cap V(H)|-1}{|V(H)-1|}-\frac{\sum_{w \in W_q} |V_w \cap V(H)|^2 }{|V(H)||V(H)-1|} $ \\
 $=\frac{\sum_{w\in W_q \cap \doc(v)} 2| V_{w} \cap V(H)|-1}{|V(H)-1|}-\frac{\keyw(H, W_q)}{|V(H)-1|}$ 
 $=\frac{\keyw_{H}(v, W_q)}{|V(H)-1|}-\frac{\keyw(H, W_q)}{|V(H)-1|}$\\
}

\eat{
 \begin{scriptsize}
\begin{eqnarray} 
 &&\keyw(H, W_q) -\keyw(H-\{v\}, W_q)  \nonumber \\ 
 &&= \sum_{w \in W_q} \frac{|V_w \cap V(H)|^2}{|V(H)|} -  \sum_{w \in W_q} \frac{|V_w \cap (V(H)-\{v\})|^2}{|V(H)|-1}  \nonumber \\  
 &&= \sum_{w \in W_q} \frac{|V_w \cap V(H)|^2}{|V(H)|} -  (\sum_{w \in W_q-\doc(v)} \frac{|V_w \cap V(H)|^2} {|V(H)|-1} \nonumber \\ 
 &&+\sum_{w\in W_q \cap \doc(v)} \frac{(|V_w \cap V(H)|-1)^2}{|V(H)-1|}) \nonumber \\  
 &&= \sum_{w \in W_q} \frac{|V_w \cap V(H)|^2}{|V(H)|} -  (\sum_{w \in W_q} \frac{|V_w \cap V(H)|^2 } {|V(H)|-1} \nonumber \\ 
 &&-\sum_{w\in W_q \cap \doc(v)} \frac{2| V_{w} \cap V(H)|-1}{|V(H)-1|}) \nonumber \\ 
 &&=\frac{\sum_{w\in W_q \cap \doc(v)} 2| V_{w} \cap V(H)|-1}{|V(H)-1|}-\frac{\sum_{w \in W_q} |V_w \cap V(H)|^2 }{|V(H)||V(H)-1|} \nonumber \\ 
 &&=\frac{\sum_{w\in W_q \cap \doc(v)} 2| V_{w} \cap V(H)|-1}{|V(H)-1|}-\frac{\keyw(H, W_q)}{|V(H)-1|} \nonumber \\
\end{eqnarray} 
\end{scriptsize} 
}

\begin{scriptsize}
\begin{eqnarray} 
 &&\keyw(H-\{v\}, W_q)  \nonumber \\ 
 &&=\sum_{w \in W_q} \frac{|V_w \cap (V(H)-\{v\})|^2}{|V(H)|-1}  \nonumber \\  
 &&= \sum_{w \in W_q-\doc(v)} \frac{|V_w \cap V(H)|^2} {|V(H)|-1} +\sum_{w\in W_q \cap \doc(v)} \frac{(|V_w \cap V(H)|-1)^2}{|V(H)|-1} \nonumber \\  
&&=  \new{ \sum_{w \in W_q} \frac{|V_w \cap V(H)|^2 } {|V(H)|-1} - \sum_{w\in \doc(v) \cap W_q} \frac{|V_w \cap V(H)|^2}{|V(H)|-1} +} \nonumber\\ && \new{+\sum_{w\in W_q \cap \doc(v)} \frac{(|V_w \cap V(H)|-1)^2}{|V(H)|-1} } \nonumber \\ 
 &&= \sum_{w \in W_q} \frac{|V_w \cap V(H)|^2 } {|V(H)|-1} -\sum_{w\in W_q \cap \doc(v)} \frac{2| V_{w} \cap V(H)|-1}{|V(H)|-1} \nonumber \\ 
&&=  \frac{\keyw(H, W_q) \cdot |V(H)| } {|V(H)|-1} -\frac{\sum_{w\in W_q \cap \doc(v)} (2| V_{w} \cap V(H)|-1)}{|V(H)|-1} \nonumber 
\end{eqnarray} 
\end{scriptsize} 

\eat{ 
As we can see, for a vertex $v$ in $H$, the smaller value of $\sum_{w\in W_q \cap \doc(v)} (2| V_{w} \cap V(H)|-1)$ is, the larger attribute score of $\keyw(H-\{v\}, W_q)$ is. Thus, we define the attribute score contribution as follow. 
}

The second term represents the drop in the attribute score of $H$ from removing $v$. We would like to remove vertices with the least drop in score. This motivates the following. 

\begin{definition}[Attribute Score Contribution]\label{def.nas}
Given a graph $H$ and attribute query $W_q$, the attribute score contribution of a vertex $v\in V(H)$ is defined as $\keyw_{H}(v, W_q)$ $= \sum_{w\in W_q \cap \doc(v)} $ $2| V_{w} \cap V(H)|-1$. 
\end{definition}

 \LL{The intuition behind dropping a vertex $v$ from $H$ is as follows. Since $\keyw(H, W_q)$ is non-monotone (Section \ref{sec.scorepro}), the updated score from dropping $v$ from $H$ may increase or decrease, so we check if $\keyw(H-v,W_q) > \keyw(H,W)$.}


\eat{
Here, we introduce a basic greedy algorithm for \ktcp, called \basic. \basic is a top-down search algorithm, which starts from the whole graph and iteratively remove nodes to shrink it to a \kdtruss with large attribute score. 

\stitle{Node Attribute Score.} Before we proceed further, we introduce an useful definition for node removal, called node attribute score as follow. 

\begin{definition}[Node Attribute Score]
Given a graph $H$ and query attributes $W_q$, the node attribute score of $v\in V(H)$ is defined as $\keyw_{H}(v, W_q)$ $= \sum_{w\in W_q \cap \doc(v)} 2| V_{w} \cap V(H)|-1$. 
\end{definition}\label{def.nas}

The rationale of node attribute score defined above is as follow. \xin{After $v$ is deleted from $H$, the marginal gain of attribute score as $\keyw(H, W_q) -\keyw(H-\{v\}, W_q) = \sum_{w \in W_q} \frac{|V_w \cap V(H)|^2}{|V(H)|} -  \sum_{w \in W_q} \frac{|V_w \cap (V(H)-\{v\})|^2}{|V(H)|-1}\\
 = \sum_{w \in W_q} \frac{|V_w \cap V(H)|^2}{|V(H)|} -  (\sum_{w \in W_q} \frac{|V_w \cap V(H)|^2 } {|V(H)|-1} -\sum_{w\in W_q \cap \doc(v)} \frac{2| V_{w} \cap V(H)|-1}{|V(H)-1|}) \\
 =\frac{\sum_{w\in W_q \cap \doc(v)} 2| V_{w} \cap V(H)|-1}{|V(H)-1|}-\frac{\sum_{w \in W_q} |V_w \cap V(H)|^2 }{|V(H)||V(H)-1|} $ \\
 $=\frac{\sum_{w\in W_q \cap \doc(v)} 2| V_{w} \cap V(H)|-1}{|V(H)-1|}-\frac{\keyw(H, W_q)}{|V(H)-1|}$ 
 $=\frac{\keyw_{H}(v, W_q)}{|V(H)-1|}-\frac{\keyw(H, W_q)}{|V(H)-1|}$\\, which is easy to be verified. Thus, the attribute score $\keyw(H-\{v\}, W_q)$ depends on the node attribute score $\keyw(v, W_q)$. For instance, consider the graph $G$ in Figure \ref{fig.community} and the query attribute $W_q=\{DB, DM\}$, if we delete the node $v_7$ with the smallest attribute score $\keyw(v_7, W_q)$ $= 0$ from $G$, the keyword score of remaining graph will increase, as $\keyw(G-\{v_7\}, W_q) >\keyw(G, W_q)$.}
\note[Laks]{This sounds like marginal contribution of $v$ to the score of $H$. However, I have a question: $kw$ has quadratic terms. How did they become linear?} 
\note[Xin]{Like the relation of node degree and edge number, as explained in gtalk.}
}

\stitle{Algorithm overview.} Our first greedy algorithm, called \basic, has three steps. First, it finds the maximal \kdtruss of $G$ as a candidate. Second, it iteratively removes vertices with smallest attribute score contribution from the candidate graph, and maintains the remaining graph as a  \kdtruss, until no longer possible. Finally, it returns a \kdtruss with the maximum attribute score among all  generated candidate graphs as the answer.

The details of the algorithm follow. First, we find the maximal \kdtruss of $G$ as $G_0$. Based on the given $d$, we compute a set of vertices $S$ having query distance no greater than $d$, i.e., $S_0 = \{u: \dist_G(u, Q)\leq d\}$. \eat{ 
Thus, we can have the induced subgraph of $G$ by $S_0$, denoted as $G_0$, in which each vertex in $G_0$ has distance no greater than $d$ from query nodes, i.e., $\dist_{G_0}(G_0, V_Q) \leq d$.} 
Let $G_0\subset G$ be the subgraph of $G$ induced by $S_0$. Since $G_0$ may contain edges with support $< (k-2)$, we invoke the following steps to prune $G_0$ into a \kdtruss. 
\eat{In $G_0$, the edge may has less than $k-2$ triangles and violate the condition of $k$-truss, we invoke the following \kdtruss maintenance procedure to shrink $G_0$ into a \kdtruss. 
} 


\stitle{\kdtruss maintenance}: repeat until no longer possible: 

(i) \emph{\ktruss}: remove edges contained in $< (k-2)$ triangles; \\ 
\indent (ii) \emph{query distance}: remove vertices with query distance $> d$, \indent and their incident edges; 

\eat{
\stitle{(a) edge removal for truss violation,} is to remove the edges having less than $k-2$ traingles from $G$. 

\stitle{(b) vertex removal for distance violation,} is that for the vertices having  distance greater than $d$ from query nodes, we remove these vertices and its incident edges from $G$. 
} 

Notice that the two steps above can trigger each other: removing edges can increase query distance and removing vertices can reduce edge support.
\eat{ 
However, since the edges removal and vertices removal can affect each other, leading to more vertices and edges violating $k$-truss and distance constraints. Therefore, to obtain a \kdtruss, we need to iteratively invoke above two steps until the graph has no vertices and edges to be removed.}  
In the following, we start from the maximal \kdtruss $G_l$ where $l=0$, and find a \kdtruss with large attribute score by deleting a vertex with the smallest attribute score contribution.

\stitle{Finding a \kdtruss with large attribute score.} $G_0$ is our first candidate answer. In general, given $G_l$, we find a vertex $v\in V(G_l) \setminus V_q$ with the smallest attribute score contribution and remove it from $G_l$. Notice that $v$ cannot be one of the query vertices. The removal may violate the \kdtruss constraint so we invoke the \kdtruss maintenance procedure above to find the next candidate answer. We repeat this procedure until $G_l$ is not a \kdtruss any more. 
\eat{ 
Here, we design a greedy strategy to remove vertex $u^{*} \in V(G_l)-V_q$ with the smallest attribute score contribution $\keyw_{G_l}(u^{*}, W_q)$ from $G_l$. Due to the deletion of $u^{*}$ and its incident edges, the remaining graph may not satisfy the constraints of \kdtruss again.  We invokes the procedure of \kdtruss maintenance to keep the remaining graph $G_l$ as \kdtruss. Then, we assign the updated graph as a new $G_l$, and continue the above procedure until $G_l$ is not a \kdtruss any more. 
} 
Finally, the candidate answer with the maximum attribute score generated during this process is returned as the final answer, i.e., $ \arg\max_{G'\in\{G_0, ..., G_{l-1}\}}{\keyw(G', W_q)}$. The detailed description is presented in Algorithm \ref{algo:basic}.

\note[Laks]{Is the function $kw(.,)$ submodular or supermodular? That is, given a fixed set of attributes $W_q$ and a graph $G$, is $kw(W_q, H)$ submodular or supermodular in $H$ where $H$ is any induced subgraph of $G$?} 
\note[Xin]{Probably not. As explained in gtalk.}


\begin{algorithm}[t]
\small
\caption{\basic($G$, $Q$)} \label{algo:basic}
\textbf{Input:} A graph $G=(V, E)$, a query $Q=(V_q, W_q)$, numbers $k$ and $d$.\\
\textbf{Output:} A \kdtruss $H$ with the maximum $\keyw(H, W_q)$.\\
\vskip -0.6cm
\
\begin{algorithmic}[1]

\STATE  Find a set of vertices $S_0$ having the query distance $\leq d$, i.e., $S_0 = \{u: \dist_G(u, Q)\leq d\}$.

\STATE  Let $G_0$ be the induced subgraph of $S$, i.e., $G_0= (S_0, E(S_0))$, where $E(S_0)= \{(v,u): v, u\in S_0, (v, u)\in E \}$.

\STATE  Maintain $G_0$ as a \kdtruss.

\STATE  Let  $l\leftarrow 0$;

\STATE	\textbf{while}  $\con_{G_{l}}(Q)=$ \textbf{true} \textbf{do}

\STATE  \hspace{0.3cm}  Compute the attribute score of $\keyw(G_l, W_q)$;

\STATE  \hspace{0.3cm}  Compute $\keyw_{G_{l}}(u, W_q) =\sum_{w\in W_q \cap \doc(u)} 2|V(H)\cap V_w|-1$, $\forall u\in G_{l}$;

\STATE  \hspace{0.3cm}  $u^{*} \leftarrow \arg\min_{u\in V(G_{l}) - V_q} \keyw_{G_{l}}(u, W_q)$;

\STATE  \hspace{0.3cm}  Delete $u^{*}$ and its incident edges from $G_{l}$;

\STATE  \hspace{0.3cm}  Maintain $G_l$ as a \kdtruss.

\STATE  \hspace{0.3cm}  $G_{l+1} \leftarrow G_{l}$; $l \leftarrow l+1$;

\STATE  $H \leftarrow \arg\max_{G'\in\{G_0, ..., G_{l-1}\}}{\keyw(G', W_q)}$;

\end{algorithmic}
\end{algorithm}

\begin{example}
We apply Algorithm \ref{algo:basic} on the graph $G$ in Figure \ref{fig.community} with query \ $Q=(\{q_1\},\{DB, DM\})$, for $k=4$ and $d=2$. First, the algorithm finds the \kdtruss $G_0$ as the subgraph $H$ shown in Figure \ref{fig.community}. Next, we select vertex $v_7$ with the minimum attribute score contribution $\keyw_{G_0}(v_7, W_q)= 0$ and remove it from $G_0$. Indeed it contains neither of the query attributes. Finally, the algorithm finds the \ktc $H_2$ with the maximum attribute score in Figure \ref{fig.subcom}(b), which, for this example, is the optimal solution. 
\end{example}

\eat{ 
****IS $H_2$ THE OPTIMAL ANSWER FOR THIS EXAMPLE?**** 

\xin{**** Yes, it is optimal and shown in example now. ****}
}

\subsection{Complexity Analysis}

\cut{Let $n=|V(G)|$ and $m=|E(G)|$, and let $d_{max}$ be the maximum vertex degree in $G$.} 
In each iteration $i$ of Algorithm \ref{algo:basic}, we delete at least one vertex and its incident edges from $G_i$. Clearly, the number of removed edges is no less than $k-1$, and so the total number of iterations is $t \leq \min\{n-k, m/(k-1)\}$. 
\add{We have: }
\cut{We have the following result on the time and space complexity of Algorithm \ref{algo:basic}. 
We note that we do not need to keep all candidate \ktcs in the implementation, but merely maintain a removal record of the vertices/edges in each iteration.}

\begin{theorem}\label{theorem.framework}
Algorithm \ref{algo:basic}  takes $O(m\rho$ $+t(|W_q| n+|V_q|m))$ time and  $O(m+|\doc(V)|)$ space, where $t\in O(\min\{n, m/k\})$, and $\rho$ is the arboricity of graph $G$ with $\rho \leq \min\{d_{max}, \sqrt{m}\}$.
\end{theorem}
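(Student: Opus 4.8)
The plan is to bound the cost of the initialization (Steps 1--3) and the main loop (Steps 5--11) separately, and to show that the \kdtruss-maintenance work, although invoked in every iteration, telescopes into a single $O(m\rho)$ term via an amortized argument rather than being multiplied by $t$.

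First I would analyze the initialization. Step 1 computes $\dist_G(u, V_q)$ for every vertex, which amounts to one breadth-first search from each query node; since a single BFS costs $O(n+m)=O(m)$ (recall $m\ge n-1$), this is $O(|V_q|m)$, and Step 2 extracts the induced subgraph $G_0$ within the same bound. Step 3 first needs the support $sup_{G_0}(e)$ of every edge, i.e.\ a triangle count; listing all triangles can be done in $O(m\rho)$ time, where $\rho$ is the arboricity and $\rho\le\min\{d_{max},\sqrt{m}\}$. The subsequent peeling removes edges with support below $k-2$ and vertices with query distance above $d$; each support decrement is charged to a triangle and each distance recheck to a BFS, so this fits inside the $O(m\rho)$ (triangle) and $O(|V_q|m)$ (BFS) budgets.

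Next I would bound the loop. As argued just before the theorem, every iteration deletes at least one vertex together with its $\ge k-1$ incident edges, so the iteration count is $t\le\min\{n-k,\, m/(k-1)\}\in O(\min\{n,\, m/k\})$. Within one iteration, recomputing $\keyw(G_l,W_q)$ and the contribution $\keyw_{G_l}(u,W_q)$ of every vertex requires, for each vertex $u$, scanning $\doc(u)\cap W_q$, which gives $O(|W_q|n)$; selecting the minimizer $u^{*}$ is $O(n)$. Re-establishing the query-distance portion of the \kdtruss constraint after the deletion forces a fresh BFS from each query node in the current subgraph, costing $O(|V_q|m)$. Summing over the $t$ iterations yields $O(t(|W_q|n+|V_q|m))$.

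The crux --- and the step I expect to be most delicate --- is the edge-support maintenance inside the loop, since a naive bound would multiply an $O(m\rho)$ truss cost by $t$. Instead I would argue amortized: the cascade of edge removals touches each edge at most once over the entire execution, and decrementing the supports triggered by deleting an edge $(u,v)$ costs $O(\min\{\deg(u),\deg(v)\})$; since $\sum_{(u,v)\in E}\min\{\deg(u),\deg(v)\}=O(m\rho)$ by the arboricity bound, all \kdtruss maintenance across all iterations contributes only a single $O(m\rho)$ term, absorbed into the initialization cost. Combining the three contributions gives the claimed $O(m\rho+t(|W_q|n+|V_q|m))$. For space, we store the graph in $O(m)$, the per-vertex attribute sets in $O(|\doc(V)|)$, and the running counts $|V_w\cap V(G_l)|$ in $O(|W_q|)\subseteq O(|\doc(V)|)$; crucially, rather than materializing every candidate $G_0,\dots,G_{l-1}$, we keep only a log of removed vertices/edges together with the index of the best-scoring candidate, so no extra asymptotic space is required, yielding $O(m+|\doc(V)|)$.
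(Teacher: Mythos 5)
Your proposal is correct and follows essentially the same route as the paper's proof: the same three-part decomposition (per-iteration BFS for query distances giving $O(t|V_q|m)$, per-iteration attribute-score computation giving $O(t|W_q|n)$, and truss maintenance amortized to a single $O(m\rho)$ term over the whole execution), the same iteration bound $t \le \min\{n-k, m/(k-1)\}$, and the same space argument of logging removed vertices/edges rather than materializing all candidates $G_0,\dots,G_{l-1}$. If anything, you make explicit the edge-charging argument behind the amortized $O(m\rho)$ truss-maintenance bound, which the paper only asserts ("takes $O(\rho\cdot m)$ time in all").
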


\cut{
\begin{proofskecth}
The time cost of Algorithm \ref{algo:basic} mainly comes from three key parts: query distance computation, $k$-truss maintenance, and attribute score computation.

For query distance computation, finding the set of vertices $S$ within query distance $d$ from $V_q$ can be done by computing the shortest distances using a BFS traversal starting from each query node $q\in V_Q$, which takes $O(|V_q| m)$ time. Since the algorithm runs in $t$ iterations, the total time cost of this step is $O(t |V_q| m)$.

Second, consider the cost of $k$-truss identification and maintenance. Finding and maintaining a series of $k$-truss graphs $\{G_0, $ $..., $ $G_{l-1}\}$ in each iteration takes $O(\rho\cdot m)$ time in all, where $\rho$ is the arboricity of graph $G_0$. It has been shown that $\rho \leq \min\{d_{max}, \sqrt{m}\}$ \cite{ChibaN85}.

Third, consider the cost of computing attribute score contribution. In each iteration, the computation of attribute score contribution for every vertex takes time $O(\sum_{v\in V(G)}$ $ \min\{\doc(v), |W_q|\}) =$ $ O(\min $ $\{|\doc(V)|, $ $|W_q|\cdot n\}) $ $\subseteq $ $ O(|W_q|\cdot n) $. Thus, the  total cost of attribute score computation is $O(t |W_q| n)$.


Therefore, the overall time complexity of Algorithm \ref{algo:basic} is $O(m\rho $ $+t$ $(|W_q|n $ $+|V_q|m)$ $)$. 

Next, we analyze the space complexity. For graphs $\{G_0, ..., G_l\}$,  we  record the sequence of removed edges from $G_0$: attaching a corresponding  label to graph $G_i$ at each iteration $i$, takes $O(m)$ space in all.  For each vertex $v\in G_i$, we keep $\dist(v, Q)$, 
which takes $O(n)$ space. Hence, the space complexity is $O(m+n+|\doc(V)|)$, which is $O(m+|\doc(V)|)$, due to the assumption $n\leq m$.  
\end{proofskecth}
}
\add{
\begin{proofskecth}
The time cost of Algorithm \ref{algo:basic} mainly comes from three key parts: (1) query distance computation takes $O(t |V_q| m)$ time by BFS traversals; (2) $k$-truss maintenance takes $O(\rho\cdot m)$ time; (3) attribute score computation takes $ O(t |W_q|\cdot n) $ time. A complete proof is
available in \cite{ArxivATC}. It was shown in \cite{ChibaN85} that $\rho \leq \min\{d_{max}, \sqrt{m}\}$. \qed 	
\end{proofskecth}
}

\note[Laks]{Work done per iteration is not uniform in general, given that in an iteration, we count recursive deletions of nodes and/or edges. It would be better to do a tighter analysis of the complexity.} 
\note[Xin]{Will do.}

\subsection{An improved greedy algorithm}
The greedy removal strategy of \basic is simple, but suffers from the following limitations on  quality and efficiency. Firstly, the attribute score contribution myopically considers the removal vertex $v$ only, and ignores its  impact on triggering removal of other vertices, due to violation of $k$-truss or distance constraints. If these vertices have many query attributes, it can severely limit the effectiveness of the algorithm. Thus, we need to look ahead the effect of each removal vertex, and then decide which ones are better to be deleted. Secondly, \basic removes only one vertex from the graph in each step, which leads to a large number of iterations, making the algorithm inefficient. 

In this section, we propose an improved greedy algorithm called \bulk, which is outlined in Algorithm \ref{algo:bulk}. \bulk uses the notion of attribute marginal gain and a bulk removal strategy.

\stitle{Attribute Marginal Gain.} We begin with a definition. 

\begin{definition}[Attribute Marginal Gain]
Given a graph $H$, attribute query $W_q$, and a vertex $v\in V(H)$, the attribute marginal gain is defined as $\gain_{H}(v, W_q)= \keyw(H, W_q) - \keyw(H-S_{H}(v), W_q)$, where $S_{H}(v)\subset V(H)$ is $v$ together with the set of vertices that violate  \kdtruss after the removal of $v$ from $H$. 
\end{definition}
Notice that by definition, $v \in S_H(v)$. 
For example, consider the graph $G$ in Figure \ref{fig.community} and the query $Q=(\{q_1\}, \{ML\})$, with $k=3$ and $d=2$. The vertex $v_9$ has no attribute ``ML'', and the attribute score contribution is $\keyw_{G}(v_9, W_q) =0$ by Definition \ref{def.nas}, indicating no attribute score contribution by vertex $v_9$. However, the fact is that $v_9$ is an important bridge for connections among the vertices $q_1$, $v_8$, and $v_{10}$ with attribute ``ML". The deletion of $v_9$ will thus lead to the deletion of $v_8$ and $v_{10}$, due to the 3-truss constraint. Thus, $S_{G}(v_9)=\{v_8, v_9, v_{10}\}$. The marginal gain of $v_9$ is $\gain_{G}(v_9, W_q)= \keyw(G, W_q) - \keyw(G-S_{G}(v_9), W_q)= \frac{3}{4}-\frac{1}{9}>0$.
This shows that the deletion of $v_9$ from $G$ decreases the attribute score. It illustrates that attribute marginal gain can more accurately estimate the effectiveness of vertex deletion than score attribute contribution, by naturally incorporating look-ahead. 

One concern is that $\gain_{H}(v, W_q)$ needs the exact computation of $S_H(v)$, which has to simulate the deletion of $v$ from $H$ by invoking \kdtruss maintenance, which is expensive. An important observation is that if vertex $v$ is to be deleted, its neighbors $u \in N(v)$ with degree $k-1$ will also be deleted, to maintain \ktruss. 
Let $P_H(v)$ be the set of $v's$ 1-hop neighbors with  degree  $k-1$ in $H$, i.e., $P_H(v)=\{u\in N(v): $ $\deg_{H}(u)=k-1\}$. We propose a local attribute marginal gain, viz., $\hat{\gain_{H}}(v, W_q)$ $= $ $\keyw(H, W_q)$ $ - $ $\keyw(H-P_{H}(v), W_q)$, to approximate $\gain_{H}(v, W_q)$. Continuing with the above example, in graph $G$, for deleting vertex $v_9$, note that  $\deg(v_8)=\deg(v_{10})=2 = k-1$, so we have $P_{G}(v_9)$ $=\{v_8, v_9, v_{10}\}$, which coincides with $S_{G}(v_9)$. In general, $\hat{\gain_{H}}(v, W_q)$ serves as a good approximation to $\gain_{H}(v, W_q)$ and can be computed more efficiently. 

\stitle{Bulk Deletion.} The second idea incorporated in \bulk is bulk deletion. The idea is that instead of removing one vertex with the smallest attribute marginal gain, we remove a small percentage of vertices from the current candidate graph that have the smallest attribute marginal gain. More precisely, let $G_i$ be the current candidate graph and let $\epsilon > 0$. We identify the set of vertices $S$ such that $|S|= \frac{\epsilon}{1+\epsilon} |V(G_i)|$ and the vertices in $S$ have the smallest attribute marginal gain, and remove $S$ from $G_i$, instead of removing a vertex at a time. 
\eat{ 
Next, we propose a strategy of bulk deletion. In each iteration $i$, the algorithm using bulk deletion is to remove a set of vertices $S$ with the smallest marginal gains from graph $G_i$, instead of single one vertex. The size of $S$ has $|S|= \frac{\varepsilon}{1+\varepsilon} |V(G_i)|$, where $\varepsilon >0$.
} 
Notice that the resulting \atc $G_{i+1}$ has size $|V(G_{i+1})| $ $\leq $ $\frac{1}{1+\epsilon} |V(G_i)|$ after the deletion of $S$. We can safely terminate the algorithm once the size of $G_i$ drops below $k$ vertices and return the best \atc obtained so far, due to the constraint of \ktruss.  Thus, it follows that the number of iterations $t$ drops from $O(\min\{n, m/k\})$ to 
$O(\log_{1+\epsilon}{\frac{n}{k}})$.

\begin{algorithm}[t]
\small
\caption{\bulk($G$, $Q$)} \label{algo:bulk}
\textbf{Input:} A graph $G=(V, E)$, a query $Q=(V_q, W_q)$, numbers $k$ and $d$, parameter $\varepsilon$.\\
\textbf{Output:} A \kdtruss $H$ with the maximum $\keyw(H, W_q)$.\\
\vskip -0.6cm
\
\begin{algorithmic}[1]

\STATE  Find the maximal \kdtruss $G_0$.

\STATE  Let  $l\leftarrow 0$;

\STATE	\textbf{while}  $\con_{G_{l}}(Q)=$ \textbf{true} \textbf{do}

\STATE  \hspace{0.3cm}  Find a set of vertices $S$ of the smallest $\hat{\gain_{G_l}}(v, W_q)$ with the size of $|S|= \frac{\varepsilon}{1+\varepsilon} $ $|V(G_i)|$;

\STATE  \hspace{0.3cm}  Delete $S$ and their incident edges from $G_{l}$;

\STATE  \hspace{0.3cm}  Maintain the \kdtruss of $G_{l}$;

\STATE  \hspace{0.3cm}  $G_{l+1} \leftarrow G_{l}$; $l \leftarrow l+1$;

\STATE  $H \leftarrow \arg\max_{G'\in\{G_0, ..., G_{l-1}\}}{\keyw(G', W_q)}$;

\end{algorithmic}
\end{algorithm}

\section{Index-based Search Algorithm}
\label{sec:index} 

\eat{ 
As stated before, Algoirthm \ref{algo:basic} for finding \atc againts a graph takes the polynomial time complexity. 
\note[Laks]{We prove that KTC-1 is NP-hard. So what you mean is that the basic greedy algorithm we propose in the previous section, which has no guarantee, is polynomial time.} \note[Xin]{Yes.}
However, when the graph $G$ is large in size or the query $Q$ has multiples nodes and attributes, it is still very costly to process \atc queries. In order to speed up the query processing, in this section, we introduce an index-based query processing approach. We design a novel attributed-truss index (\ati) that can keep known graph structure and attribute information, and then propose an corresponding efficient algorithms for quickly detect \ktcs based on \ati. 
} 

While the \bulk algorithm based on the framework of Algorithm~\ref{algo:basic} has polynomial time complexity, when the graph $G$ is large and the query $Q$ has many attributes, finding \ktcs entails several \ktc queries, which can be expensive. To help efficient processing of \ktc queries, we propose a novel index called attributed-truss index (\ati). It maintains known graph structure and attribute information. 

\subsection{Attributed Truss Index}
The \ati consists of three components: \emph{structural trussness}, \emph{attribute trussness}, and \emph{inverted attibute index}.

\stitle{Structural Trussness.} Recall that trusses have  a hierarchical structure, i.e., for $k\geq 3$, a $k$-truss is always contained in some $(k-1)$-truss \cite{huang2014}. For any vertex or any edge, there exists a $k$-truss with the largest $k$ containing it. We define the trussness of a subgraph, an edge, and a vertex as follows. 

\begin{definition}
[Trussness] Given a subgraph $H $ $\subseteq $ $ G$, the trussness of $H$ is the minimum support of an edge in $H$ plus $2$, i.e., $\tau(H) = 2+\min_{e\in E(H)}\{sup_{H}(e)\}$. The trussness of an edge $e\in E(G)$ is  $\tau_{G}(e)= \max_{H\subseteq G \wedge e\in E(H)}\{\tau(H)\}$. The trussness of a vertex $v\in V(G)$ is  $\tau_{G}(v)= \max_{H\subseteq G \wedge v \in V(H)} $ $\{\tau(H)\}$.
\end{definition}

Consider the graph $G$ in Figure~\ref{fig.community}, and let the subgraph $H$ be the triangle $\triangle_{q_1v_1v_2}$. Then the trussness of $H$ is $\tau(H)$ $=2$ $+\min_{e\in H}$ ${sup_{H}(e)}$ $=3$, since each edge is contained in one triangle in $H$. However, the trussness of the edge $e(q_1, v_1)$ is 4, because there exists a 4-truss containing $e(q_1, v_1)$ in Figure \ref{fig.subcom}(b), and any subgraph $H$ containing $e(q_1,v_1)$ has $\tau(H)\leq 4$, i.e., $\tau_{G}(e(q_1, v_1)) = $ $ \max_{H\subseteq G \wedge e\in E(H)} $ $\{\tau(H)\} $ $ =4$. In addition, the vertex trussness of $q_1$ is also 4, i.e., $\tau_{G}(q_1) = 4$.


Based on the trussness of a vertex (edge), we can infer in constant time whether there exists a $k$-truss containing it. \cut{We construct the structural trussness index as follows. For each vertex $v\in V$, we keep the vertex trussness of $v$, and also maintain the edge trussness of its incident edges in decreasing order of trussness.  This supports efficient checking of whether vertex $v$ or its incident edge is present in a $k$-truss, avoiding expensive $k$-truss search. Also, it can efficiently retrieve  $v$'s incident edges with a given trussness value. In addition, we use a hashtable to maintain all the edges and their trussness.} Notice that for a graph $G$, $\taubar(\emptyset)$ denotes the maximum structural trussness of $G$.

\stitle{Attributed Trussness.}
Structural trussness index is not sufficient for \atc queries. Given a vertex $v$ in $G$ with structural trussness $\tau_{G}(v) \geq k$, there is no guarantee  that $v$ will be present in a \kdtruss with large attribute score  w.r.t. query attributes. E.g., consider the graph $G$ and vertex $v_1$ with $\tau_{G}(v_1) =4 $ in Figure \ref{fig.community}. Here, $v_1$ will not be  present in an \atc for query attributes $W_q=\{``ML"\}$ since it does not have attribute ``ML''. On the contrary, $v_1$ is in a \atc w.r.t. $W_q=\{``DM"\}$. By contrast, $v_9$ is \emph{not} present in a 4-truss w.r.t. attribute ``DM'' even though it has that attribute. To make such searches efficient, for each attribute $w\in \mathcal{A}$, we consider an attribute projected graph, which only contains the vertices associated with attribute $w$, formally defined below. 

\begin{definition}
\textbf{(Attribute Projected graph \& Attributed Trussness).} Given a graph $G$ and an attribute $w\in A(V)$, the projected graph of $G$ on attribute $w$ is the induced subgraph of $G$ by $V_{w}$, i.e., $G_{w}=(V_{w}, E_{V_{w}})\subseteq G$. Thus, for each vertex $v$ and edge $e$ in $G_{w}$, the attributed trussness of $v$ and $e$ w.r.t. $w$ in $G_w$ are respectively defined as $\tau_{G_{w}}(v)=\max_{H\subseteq G_w \wedge v \in V(H)} $ $\{\tau(H)\}$ and $\tau_{G_{w}}(e) = \max_{H\subseteq G_w \wedge e\in E(H)}\{\tau(H)\}$.
\end{definition}

For instance, for the graph $G$ in Figure~\ref{fig.community}, the projected graph $G_w$ of $G$ on $w=``DB"$ is the graph $H_1$ in Figure~\ref{fig.subcom}(a). For vertices $v_1$ and $v_4$, even though both have the same structural trussness $\tau_G(v_1)=\tau_G(v_4)=4$, in graph $H_1$, vertex $v_4$ has attribute trussness $\tau_{H_1}(v_4) = 4$ w.r.t. $w=``DB"$, whereas vertex $v_1$ is not even present in $H_1$, indicating that vertex $v_4$ is more relevant with ``DB" than $v_1$. 

\stitle{Inverted Attribute Index.} We propose an inverted index for each attribute $w\in \mathcal{A}$, denoted $\invA_w$. It maintains an inverted list of the vertices in $V_w$, i.e., the vertices containing attribute $w$, in  decreasing order of the vertex structural trussness. Thus, $\invk_{w}$ is in the format $\{(v_1, \tau_{G}(v_1)), ..., (v_l, \tau_{G}(v_l))\}$, $\tau_G(v_j) \ge \tau_G(v_{j+1})$, $j\in[l-1]$. The inverted attribute index and structural trussness index can both be used to speed up Algorithms \ref{algo:basic} and \ref{algo:bulk}.



\stitle{\ati Construction. } Algorithm \ref{algo:aticon} outlines the procedure of \ati construction. 
It first constructs the index of structural trussness using the structural decomposition algorithm of \cite{WangC12}, then constructs the index of attribute trussness and finally the inverted attribute index.   Now, we analyze the time and space complexity of construction algorithm and the space requirement of \ati.  It takes $O(m\rho)$ time and $O(m)$ space for applying the truss decomposition algorithm on the graph $G$ with $m$ edges \cite{huang2014}, where $\rho$ is the arboricity of $G$, and $\rho \leq \min\{d_{max}, \sqrt{m}\}$. Then, for each keyword $w\in \mathcal{A}$, it invokes the truss decomposition algorithm on the projected graph $G_{w} \subseteq G$, which takes $O(|E(G_w)| \rho)$ time and $O(m)$ space. In implementation, we deal with each $G_{w}$ separately, and release its memory  after the completion of truss decomposition and write attribute trussness index to disk.  Overall, \ati construction takes $O(\rho(m+\sum_{w\in \mathcal{A}}|E(G_w)|))$ time and $O(m)$ space, and the index occupies $O(m+\sum_{w\in \mathcal{A}}|E(G_w)|)$ space on disk. 

\eat{ 
****HOW CAN THE SPACE COMPLEXITY OF THE CONSTRUCTION ALGORITHM BE LESS THAN THESPACE REQUIREMENT OF THE \ati INDEX?**** 

\xin{******  It is beacuase for each attribute $w$, after the truss decompostion on $G_{w}$, we release the memory space of $G_{w}$.  The total of memory is graph stucture with $O(m)$ space. ****}
} 


\begin{algorithm}[t]
\small
\caption{\ati Construction($G$)} \label{algo:aticon}
\textbf{Input:} A graph $G=(V, E)$.\\
\textbf{Output:} \ati of $G$.\\
\vskip -0.6cm
\
\begin{algorithmic}[1]

\STATE	Apply the truss decomposition algorithm\cite{WangC12} on $G$.

\STATE	\textbf{for} $v\in G$ \textbf{do}

\STATE  \hspace{0.3cm}  Keep the structural trussness of $v$ and its incident edges in record.

\STATE	\textbf{for} $w \in A$ \textbf{do}

\STATE  \hspace{0.3cm}  Project $G$ on attribute $w$ as $G_w$.

\STATE  \hspace{0.3cm}  Apply the truss decomposition algorithm\cite{WangC12} on $G_w$.

\STATE  \hspace{0.3cm}  Construct an inverted node list $\invk_{w}$.

\STATE	\textbf{for} $e=(u,v) \in G$ \textbf{do}

\STATE  \hspace{0.3cm}  Build a hashtable to preserve its structural trussness value $\tau_G(e)$ and attribute trussness value $\tau_{G_w}(e)$, where $w\in A(v)\cap A(u)$. 

\end{algorithmic}
\vskip -0.09cm
\end{algorithm}

\subsection{Index-based Query Processing}
\begin{algorithm}[t]
\small
\caption{\LATC($G$, $Q$)} \label{algo:local}
\textbf{Input:} A graph $G=(V, E)$, a query $Q=(V_q, W_q)$.\\
\textbf{Output:} A \kdtruss $H$ with the maximum $\keyw(H, W_q)$.\\
\vskip -0.6cm
\
\begin{algorithmic}[1]

\STATE  Compute an attribute Steiner tree $T$ connecting $V_q$ using attribute truss distance as edge weight;

\STATE  Iteratively expand $T$ into graph $G_{t}$ by adding adjacent vertices $v$, until $|V(G_t)|> \eta$;

\STATE  Compute a connected $k$-truss containing $V_q$ of $G_t$ with the largest trussness $k=k_{max}$; 

\STATE  Let the $k_{max}$-truss as the new $G_t$. 

\STATE 	Apply Algorithm \ref{algo:bulk} on $G_{t}$ to identify \atc with parameters $k=k_{max}$ and $d=\dist_{G_t}(G_t, V_q)$.

\end{algorithmic}
\vskip -0.09cm
\end{algorithm}

\eat{ 
****$\eta$ WAS NEVER INTRODUCED OR MOTIVATED IN THIS PAPER.**** 

\xin{*** I think that in algorithm overview, we want to find \atc in a small graph instead of global grpah. The small graph has the size of  vertices no larger than $\eta$ . }
} 

In this section, we propose an \ati-based query processing algorithm by means of local exploration, called \LATC. 

\stitle{Algorithm overview.} Based on the \ati, the algorithm first efficiently detects a small neighborhood subgraph around query vertices, which tends to be densely and closely connected with the query attributes. Then, we apply Algorithm \ref{algo:bulk} to shrink the candidate graph into a \kdtruss with  large attribute score. The outline of the algorithm \LATC is presented in Algorithm \ref{algo:local}. Note that, when no input parameters $k$ and $d$ are given in \LATC, we design an auto-setting mechanism for parameters $k$ and $d$, which will be explained in Section \ref{sec:exp}. 


To find a small neighborhood candidate subgraph, the algorithm starts from the query vertices $V_q$, and finds a Steiner tree connecting the query vertices. It then expands this tree by adding attribute-related vertices to the  graph. Application of standard Steiner tree leads to poor quality, which we next explain and address. 

\stitle{Finding attributed Steiner tree $T$.} As discussed above, a Steiner tree connecting query vertices is used as a seed for expanding into a \kdtruss. A naive method is to find a minimal weight Steiner tree to connect all query vertices, where the weight of a tree is the number of edges. 
Even though the vertices in such a Steiner tree achieve close distance to each other, using this tree seed may produce a result with a small trussness and low attribute score. For example, for the query $Q=(\{q_1, q_2\}, \{DB\})$ (see Figure~\ref{fig.community}), the tree $T_1$ $=\{(q_1, v_1)$  $, (v_1, q_2)\}$ achieves a weight of 2, which is optimal. However, 
the edges $(q_1, v_1)$ and $(v_1, q_2)$ of $T_1$ will not be present in any 2-truss with the homogeneous   attribute of ``DB". Thus it suggests growing $T_1$ into a larger graph 
will yield a low attribute score for $W_q=``DB"$. On the contrary, the Steiner tree $T_2 $ $=\{(q_1, v_4)$  $, (v_4, q_2)\}$ also has a total weight of 2, and both of its edges have the attribute trussness of 4 w.r.t. the attribute ``DB", indicating it could be expanded into a community with large attribute score. For discriminating between such Steiner trees, we propose a notion of attributed truss distance. 

\begin{definition}
[Attribute Truss Distance] Given an edge $e=(u,v)$ in $G$ and query attributes $W_q$, let $\mathcal{G} = \{G_w: w\in W_q\}\cup\{G\}$. Then the attribute truss distance of $e$ is defined as $\hat{\dist}_{W_q}(e) = 1+ $ $\gamma (\sum_{g\in \mathcal{G}} (\taubar(\emptyset) - \tau_{g}(e)))$, where $\taubar(\emptyset)$ is the maximum structural trussness in graph $G$.
\end{definition}\label{def.trussdist}

The set $\mathcal{G}$ consists of $G$ together with all its attribute projected graphs $G_w$, for $w\in W_q$ and the difference $(\taubar(\emptyset) - \tau_{g}(e))$ measures the shortfall in the attribute trussness of edge $e$ w.r.t. the maximum trussness in $G$. 
The sum $\sum_{g\in \mathcal{G}} (\taubar(\emptyset) - \tau_{g}(e))$ indicates the overall shortfall of $e$ across $G$ as well as all its attribute projections. Smaller the shortfall of an edge, lower its distance. Finally, $\gamma$ controls the extent to which small value of  structural and attribute trussness, i.e., a large shortfall, is penalized. 
Using \ati, for any edge $e$ and any attribute $w$, we can access the structural trussness  $\tau_{G}(e)$  and attribute trussness $\tau_{G_w}(e)$ in $O(1)$ time. Since finding minimum weight Steiner tree is NP-hard, we apply the well-known algorithm of \cite{kou1981fast,mehlhorn1988faster} to obtain a 2-approximation, using attributed truss distance. 
The algorithm takes $O(m|W_q|+m+n\log{n})\subseteq O(m|W_q|+n\log{n})$ time, where $O(m|W_q|)$ is the time taken to compute the attributed truss distance for $m$ edges. 

\stitle{Expand attribute Steiner tree $T$ to Graph $G_t$.} 
Based on the attribute Steiner tree $T$ built above, we locally expand $T$ into a graph $G_{t}$ as a candidate \kdtruss with numerous query attributes. Lemma \ref{lemma.majority} gives a useful principle to expand the graph with insertion of a vertex at a time, while increasing the attribute score. 
Specifically, if $\theta(G_t, W_q\cap \doc(v)) \geq \frac{\keyw(G_t, W_q)}{2|V(G_t)|}$, then graph $G_{T}\cup \{v\}$ has a larger attribute score than $G_T$. We can identify such vertices whose attribute set includes majority attributes of the current candidate graph and add them to the current graph. 

\eat{ 
****I DON'T LIKE THE PARAGRAPH ABOVE AND THE APPROACH TAKEN. YOU SHOULD REPLACE LEMMA 2 WITH A MORE GENERAL RESULT WITH PROOF, WHICH COVERS THE CASE WHERE $v$ CONTAINS MULTIPLE QUERY ATTRIBUTES. CURRENTLY, THE ABOVE PARAGRAPH SOUNDS LIKE AN AFTER-THOUGHT AND LIKE A LAST MINUTE ADD-ON. REVIEWERS WON'T LIKE THAT. SO, PLEASE UPGRADE LEMMA 2 FOR THIS GENERAL CASE AND DIRECTLY REFER TO IT ABOVE.****

\xin{**** I have updated the Lemma 2 and use it in this section. *****}
} 




Now, we discuss the expansion process, conducted in a BFS manner. We start from vertices in $T$, and iteratively insert adjacent vertices with the largest vertex attribute scores into $G_{t}$ until the vertex size exceeds a threshold $\eta$, i.e., $|V(G_t)|\leq \eta $, where $\eta $ is empirically tuned. After that, for each vertex $v\in V(G_t)$, we add all its adjacent edges $e$ into $G_t$.  


\stitle{Apply \bulk on $G_t$ with auto-setting parameters.} Based on the graph $G_t$ constructed above, we apply Algorithm \ref{algo:bulk} with given parameters $k$ and $d$ on $G_t$ to find an \atc. If input parameters $k$ and $d$ are not supplied, we can set them automatically as follows. We first compute a $k$-truss with the largest $k$ connecting all query vertices. Let $k_{max}$ denote the maximum trussness of the subgraph found. We set the parameter $k$ to be $k_{max}$. We also compute the query distance of $G_t$ and assign it to $d$, i.e., $d := \dist_{G_t}(G_t, V_q)$. We then invoke the \bulk algorithm on $G_t$ with parameters $k, d$ to obtain a \ktc with large trussness and high attribute cohesiveness.

\stitle{Friendly mechanism for query formulation.} 
\LL{Having to set values for many parameters for posing queries using \LATC can be daunting. To mitigate this, we make use of the auto-setting of parameters $k$ and $d$. Additionally, we allow the user to omit the query attribute parameter $W_q$ in a query $Q(V_q,W_q)$ and write $Q(V_q, \_)$. Thus, only query nodes need to be specified. Our algorithm will automatically set $W_q := \bigcup_{v\in V_q} A(v)$ by default. The rationale is that the algorithm will take the the whole space of all possible attributes as input, and leverage our community search algorithms to find communities with a proper subspace of attributes, while achieving high scores. For example, consider the query $Q = (\{q_1,q_2\}, \_)$ on graph $G$ in Figure~\ref{fig.community}, \LATC automatically sets $W_q := \{DB, DM, ML\}$. The discovered community is shown in Figure \ref{fig.subcom}(b), which illustrates the feasibility of this strategy. This auto-complete mechanism greatly facilitates query formulation, 

This auto-complete query formulation is useful to identify relative attributes for discovered communities, which benefits users in a simple way.
} 
\eat{ 
\new{If users do not know the underlying scenario of given datasets, it may make them feel hard to formulate a useful query $Q(V_q, W_q)$. Now,  we offer users a simple way to formulate query $Q(V_q, W_q)$. Beside the above auto-setting parameters $k$ and $d$ in \LATC, we consider an auto-setting of query attributes $W_q$ where $W_q$ is not given or empty.  In this case, our algorithm automatically sets $W_q = \bigcup_{v\in V_q} A(v)$ by default. The rational principle is that it takes the whole space of all possible attributes as input, and leverage our community search algorithms on searching communities with a proper subspace of attributes, which achieves high scores. For example, consider the query $Q = (\{q_1,q_2\}, \emptyset)$ on graph $G$ in Figure~\ref{fig.community}, \LATC automatically sets $W_q=\{DB, DM, ML\}$. The discovered community is shown in Figure \ref{fig.subcom}(b), which shows the feasibility of this strategy. This auto-complete query formulation is useful to identify relative attributes for discovered communities, which benefits users in a simple way.}
} 


\stitle{Handling bad queries.} \new{In addition to auto-complete query formulation, we discuss how to handle bad queries issued by users. Bad queries contain query nodes and query attributes that do not constitutes a community. Our solution is to detect outliers of bad queries and then suggest good candidate queries for users. The whole framework includes three steps. 
First, it identifies bad queries. 
Based on the structural constraint of \kdtruss, if query nodes span a long distance and loosely connected in graphs, it tends to be bad queries. In addition, if none of query attributes present in the proximity of query nodes, it suggests to have no communities with homogeneous attributes, indicating bad queries as its. Instances of bad queries $Q(V_q, W_q)$ have no \kdtruss neither containing $V_q$ nor achieving non-zero score for attributes $W_q$.
Second, it recommends candidates of good queries. Due to outliers existed in bad queries, we partition the given query into several small queries. Based on the distribution of graph distance, graph cohesiveness, and query attribute, we partition given query nodes into several disjoint good queries. 
Specifically, we start from one query node, and find the \kdtruss community containing it.  The query nodes and query attributes present in this community are formed as one new query. The process is repreated until all query nodes are presented in one community or no such one community cotaining it. Thus, we have several new queries that are good to find \atc.
Third, our approach quickly terminates by returning no communities, due to the violation of \kdtruss and irrelevant query attributes. }

\section{Experiments}\label{sec.exp}
\label{sec:exp} 
\cut{In this section, we evaluate the efficiency and effectiveness of our proposed \atc model and algorithms. All algorithms are implemented in C++, and the experiments are
conducted on a Linux Server with Intel Xeon CUP X5570 (2.93 GHz) and
50GB main memory.}
In this section, we test all proposed algorithms on a Linux Server with Intel Xeon CUP X5570 (2.93 GHz) and 50GB main memory.
\subsection{Experimental Setup}
\stitle{Datasets.} We conduct experimental studies using 7 real-world networks.
 The network statistics are reported in Table~\ref{tab:dataset}. 

The first dataset is PPI network, Krogan 2006, from the BioGRID database, where the PPI data are related to the yeast Saccharomyces cerevisiae \cite{hu2013utilizing}. 
Each protein has three kinds of attributes: biological processes, molecular functions, and cellular components. There are 255 known protein complexes for Sacchromyces cerevisiae in the MIPS/CYGD \cite{hu2013utilizing}, which we regard as ground-truth communities. 

The second dataset is Facebook ego-networks. For a given user id $X$ in Facebook network $G$, the ego-network of $X$, denoted ego-facebook-$X$, is the induced subgraph of $G$ by $X$ and its neighbors. 
The dataset contains 10 ego-networks indicated by its ego-user $X$, where $X \in \{0, 107, 348,$ $  414, 686, $ $698, 1684,$ $ 1912, 3437,$ $ 3890\}$. For simplicity, we abbreviate ego-facebook-$X$ to f$X$, e.g., f698. Vertex  attributes are collected from real profiles and anonymized, e.g., political, age, education, etc. Each ego-network has several overlapping ground-truth communities, called friendship circles \cite{mcauley2012learning}. Note that the statistics of Facebook in Table~\ref{tab:dataset} are results averaged over 10 networks. 

\new{The third and fourth datasets are web graphs respectively gathered from two universities of Cornell and Texas.\footnote{\scriptsize{\url{http://linqs.cs.umd.edu/projects/projects/lbc/}}} Webpages are partitioned into five groups including course, faculty, student, project, and staff. Vertex attributes are unique words  frequently present in webpages.}

The other 5 networks, Amazon, DBLP, Youtube, LiveJournal and Orkut, contain 5000 top-quality ground-truth communities. 
\new{
However, since the vertices on these networks have no attributes, we generate an attribute set consisting of $|\mathcal{A}|=0.005\cdot |V|$ different attribute values in each network $G$. The average number of attribute/vertex $\frac{|\mathcal{A}|}{|V|} = 0.005$ is less than the proportion of attributes to vertices in datasets with real attributes (e.g., the value of 0.12 in Facebook) in Table~\ref{tab:dataset}. A smaller attribute pool $\mathcal{A}$ makes homogeneity of synthetic attributes in different communities more likely, which stresses testing our algorithms.
} 
\note[Laks]{See my comments on this on slack, though.} 
For each ground-truth community, we randomly select 3 attributes, and  assign each of these attributes to each of random 80\% vertices in the community. In addition, to model noise in the data, for each vertex in graph, we randomly assign a random integer of $[1,5]$ attributes to it.  Except Krogan, all other datasets are available from the Stanford Network Analysis Project.\footnote{\scriptsize{\url{snap.stanford.edu}}} 

\begin{table}[t]
\begin{center}\vspace*{-0.6cm}
\scriptsize
\caption[]{Network statistics (K $=10^3$ and M $=10^6$)}\label{tab:dataset}
\begin{tabular}{|l|r|r|r|r|r|r|}
\hline
{\bf Network} & $|V|$ & $|E|$ & $d_{max}$ & $\taubar(\emptyset)$  & $|\mathcal{A}|$ & $|\doc(V)|$\\
\hline \hline
Krogan & 2.6\textbf{K} & 7.1\textbf{K} & 140 & 16 & 3064 & 28151 \\ \hline
Facebook	 & 1.9\textbf{K}	 & 8.9\textbf{K} & 416  & 29 & 228 & 3944\\ \hline
Cornell & 195 & 304 & 94 & 4 & 1588 & 18496 \\ \hline
Texas	 & 187	 & 328 & 104 & 4  & 1501 & 15437\\ \hline
Amazon	& 335\textbf{K} &	926\textbf{K} &	549 &	7 & 1674 & 1804406\\ \hline
DBLP 	 & 317\textbf{K}	 & 1\textbf{M} & 342  &114 & 1584 & 1545490\\ \hline
Youtube & 1.1\textbf{M}	 & 3 \textbf{M}		 & 28,754 & 19 & 5327 & 2163244\\ \hline
LiveJournal & 4\textbf{M} & 35\textbf{M} & 14,815 & 352 & 11104 & 12426432 \\ \hline
Orkut	 & 3.1\textbf{M}	 & 117\textbf{M} & 33,313 & 78  & 9926 & 10373866\\ \hline
\end{tabular}\vspace*{-0.6cm}
\end{center}
\end{table}

\stitle{Algorithms Compared.} To evaluate the efficiency and effectiveness of our proposed index and algorithms,
we evaluate and compare the three algorithms -- \textbf{\basic}, \textbf{\bulk}, and \textbf{\LATC}. Here, \textbf{\basic} is the top-down greedy approach in Algorithm \ref{algo:basic}, which removes single node with the smallest node attribute contribution in each iteration. \textbf{\bulk} is an improved greedy algorithm in Algorithm \ref{algo:bulk}, which removes a set of nodes with size $\frac{\epsilon}{1+\epsilon} |V(G_i)|$ from graph $G_i$ in each iteration. We empirically set $\epsilon = 0.03$.  \textbf{\LATC} is the bottom-up local exploration approach in Algorithm \ref{algo:local}. For all methods, we set the parameter $k=4$ and $d=4$ by default. For \LATC, we empirically set the parameter $\eta = 1000$ and $\gamma = 0.2$, where $\eta=1000$ is selected in order to achieve stable quality and efficiency by testing $\eta$ in the range $[100, 2000]$, and $\gamma = 0.2$ is selected to balance the cohesive structure and homogeneous  attributes for communities explored. 

In addition, to evaluate the effectiveness of the \atc model on attributed graphs, we implemented three
state-of-the-art community search methods -- \ACC, \MDC and \LCTC. The $k$-core based attribute community search (\ACC) \cite{FangCLH16} finds a connected $k$-core containing one given query node with the maximum number of common query attributes shared in this community. The minimum degree-based community search (\MDC) \cite{sozio2010} globally finds a dense subgraph containing all query nodes with the highest minimum degree under distance and size constraints. The closest truss community search (\LCTC) \cite{huang2015approximate} locally finds a connected $k$-truss with the largest $k$ containing all query nodes, and a small diameter. Note that both \MDC and \LCTC only consider the graph structure and ignore the attributes. \ACC considers both graph structure and attributes, but it only deals with a single query node with query attributes\cut{ and uses $k$-core as community model}. 


\stitle{Queries.} For each dataset, we randomly test 100 sets of queries $Q=(V_q, W_q)$, where we set both the number of query nodes $|V_q|$, and the number of query attributes $|W_{q}|$ to 2 by default. 

\stitle{Evaluation Metrics.}
To evaluate the quality of communities found by all algorithms, we measure the F1-score reflecting the alignment between a
discovered community $C$ and a ground-truth community $\hat{C}$. Given a ground-truth community $\hat{C}$, we randomly pick query vertices and query attributes from it and query the graph using different algorithms to obtain the discovered community $C$. Then,
$F1$ is defined as $F1(C, \hat{C})$ $=\frac{2\cdot
prec(C, \hat{C}) \cdot recall(C, \hat{C})}{prec(C, \hat{C}) +
recall(C, \hat{C})}$
where $prec(C, \hat{C}) =\frac{|C\cap \hat{C}|}{|C|}$ is the precision
and $recall(C, \hat{C}) =\frac{|C\cap \hat{C}|}{|\hat{C}|}$ is the
recall. \new{For all efficiency experiments, we consistently report the running time in seconds. }

\eat{
****HOW IS A DISCOVERED COMMUNITY MATCHED WITH A GT COMMUNITY? DO YOU 
FIND THE ONE WITH THE MAX. F1-SCORE?**** 

\xin{**** For each ground-truth community, we randomly select the query nodes and attributes from it, and use this query for finding community by all methods. The F1-SCORE is measured on this ground-truth community. Thus, we do not find the maximum one.****}
}

\subsection{Quality Evaluation}\label{sec.exp-quality}


\begin{figure}[t]
\centering \mbox{\hskip -0.1in
\includegraphics[width=1.0\linewidth]{\PP 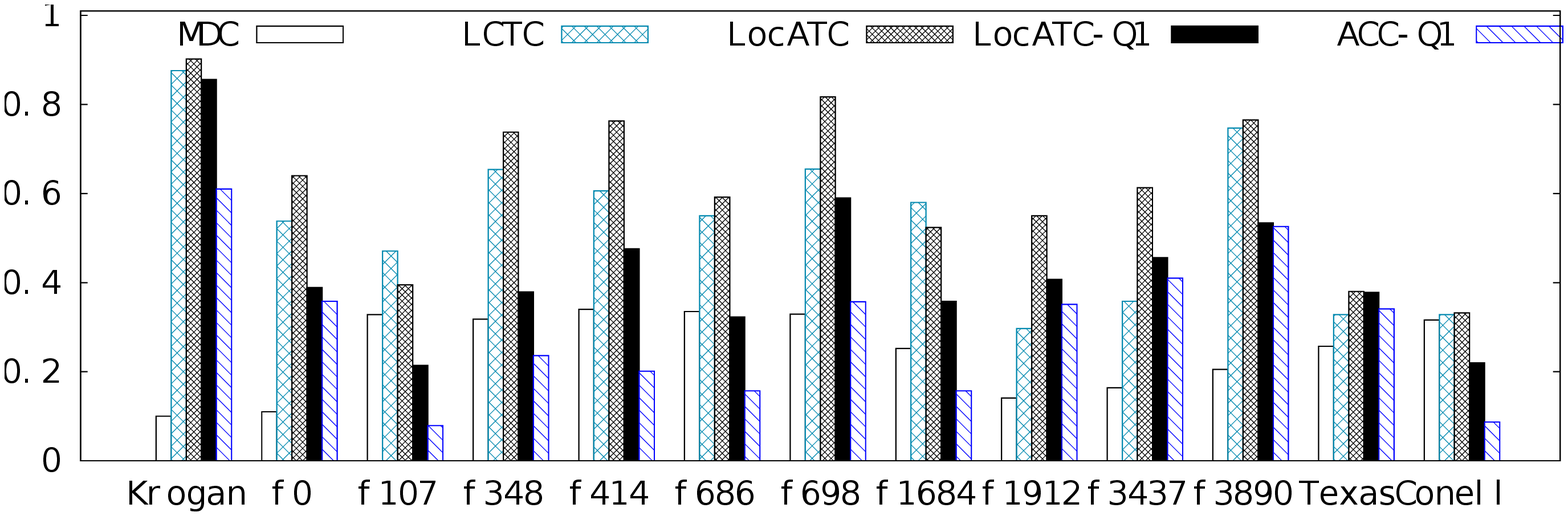} \hskip -0.1in
}
\vskip -0.2in
\caption{Quality evaluation ($F_1$ score) on networks with real-world attributes and ground-truth communities}
\label{fig.fb_F1}
\vspace*{-0.3cm}
\end{figure}

\begin{figure}[t]
\centering 
\includegraphics[width=0.5\linewidth,height=2.5cm]{\PP 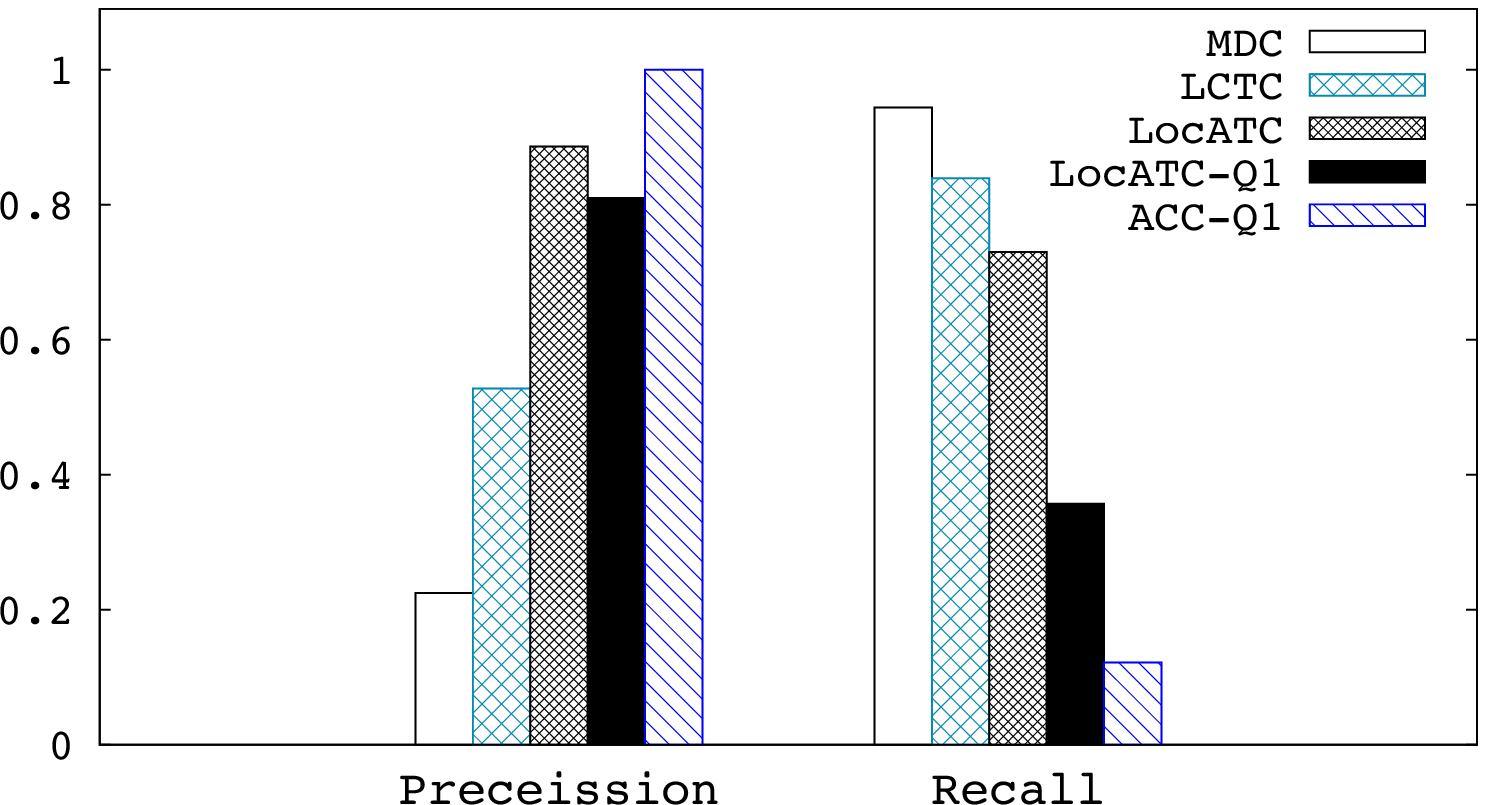}
\vskip -0.2in
\caption{Comparison of precision and recall on f414 network.} \label{fig.prec_recall}
\vspace*{-0.1cm}
\end{figure}

\begin{figure}[t]
\vskip -0.2cm
\centering \mbox{\hskip -0.1in
\includegraphics[width=1.0\linewidth]{\PP 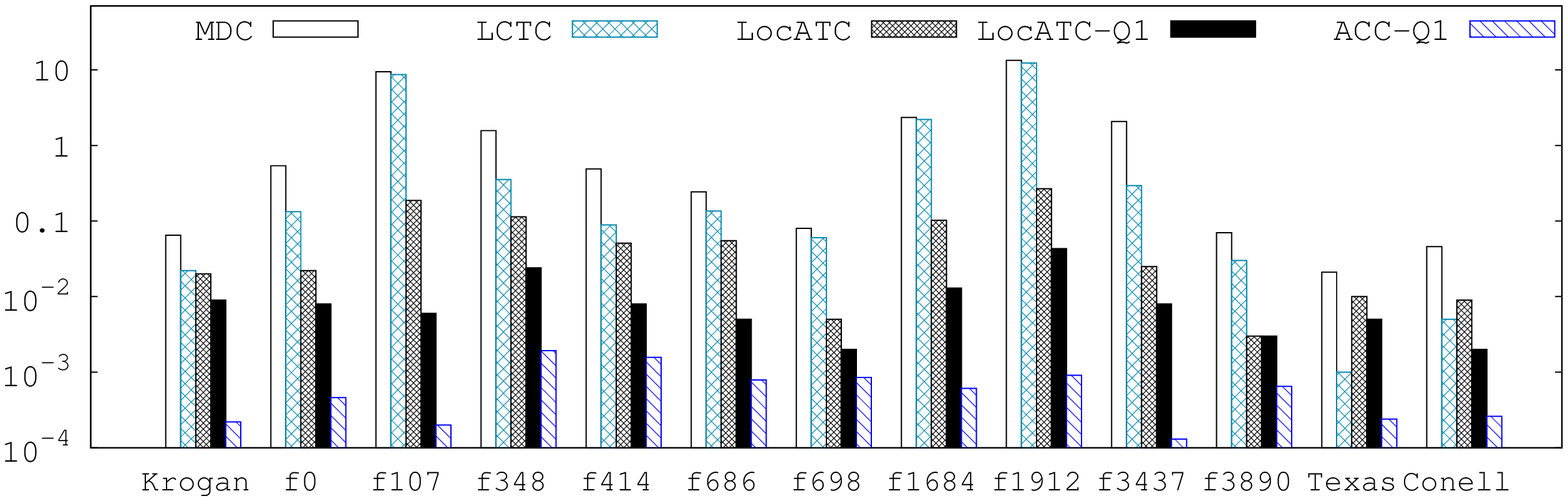} \hskip -0.1in
}
\vskip -0.2in
\caption{\new{Efficiency evaluation (query time in seconds) on networks } with real-world attributes and ground-truth communities}
\label{fig.fb_time}
\vspace*{-0.3cm}
\end{figure}

\begin{figure}[t]
\vskip -0.2cm
\centering \mbox{\hskip -0.1in
\subfigure[$F_1$ score]{\includegraphics[width=0.50\linewidth]{\EP 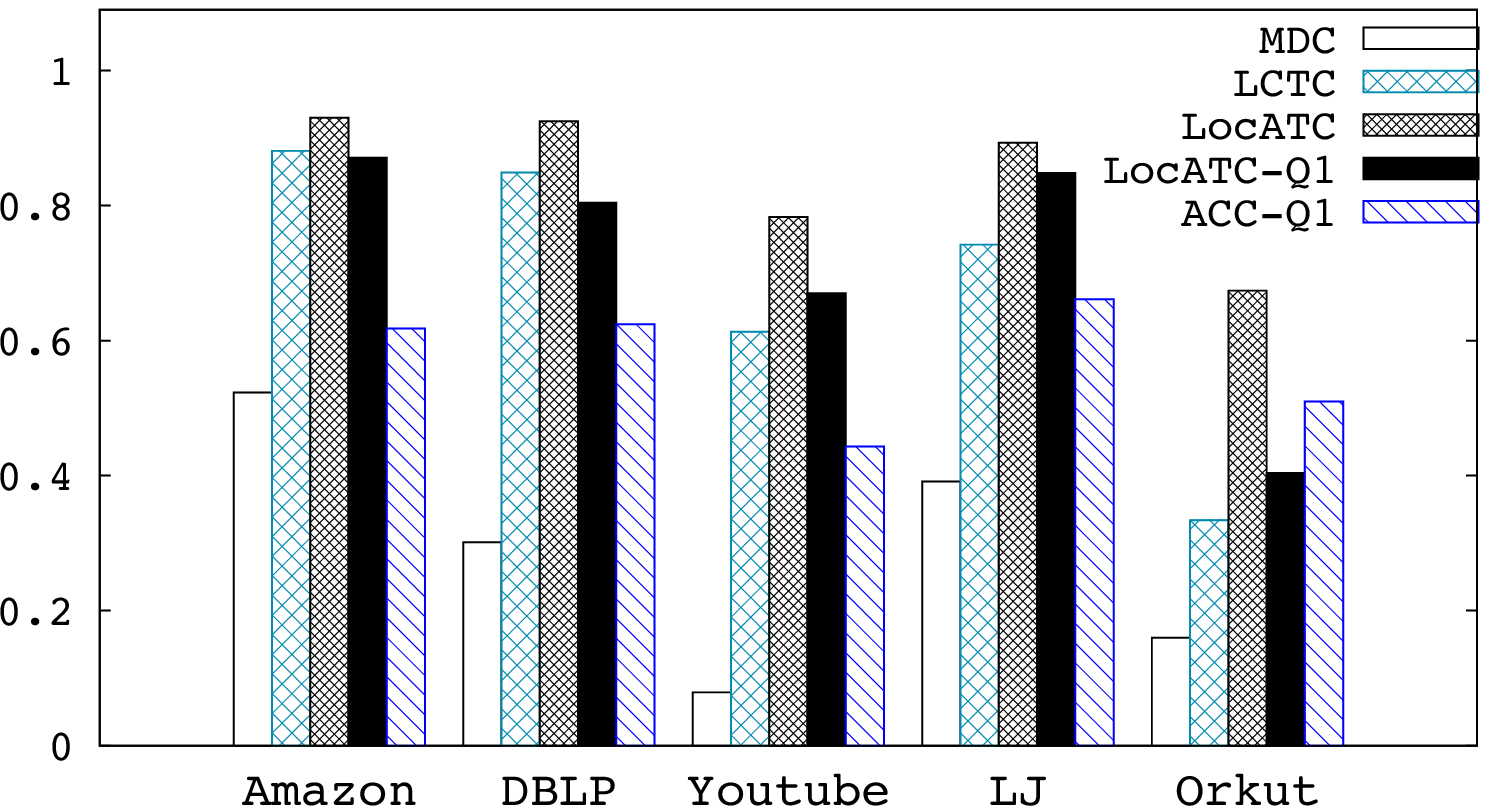}} \hskip -0.1in
\subfigure[\new{Query time (in seconds)}]{\includegraphics[width=0.50\linewidth]{\EP 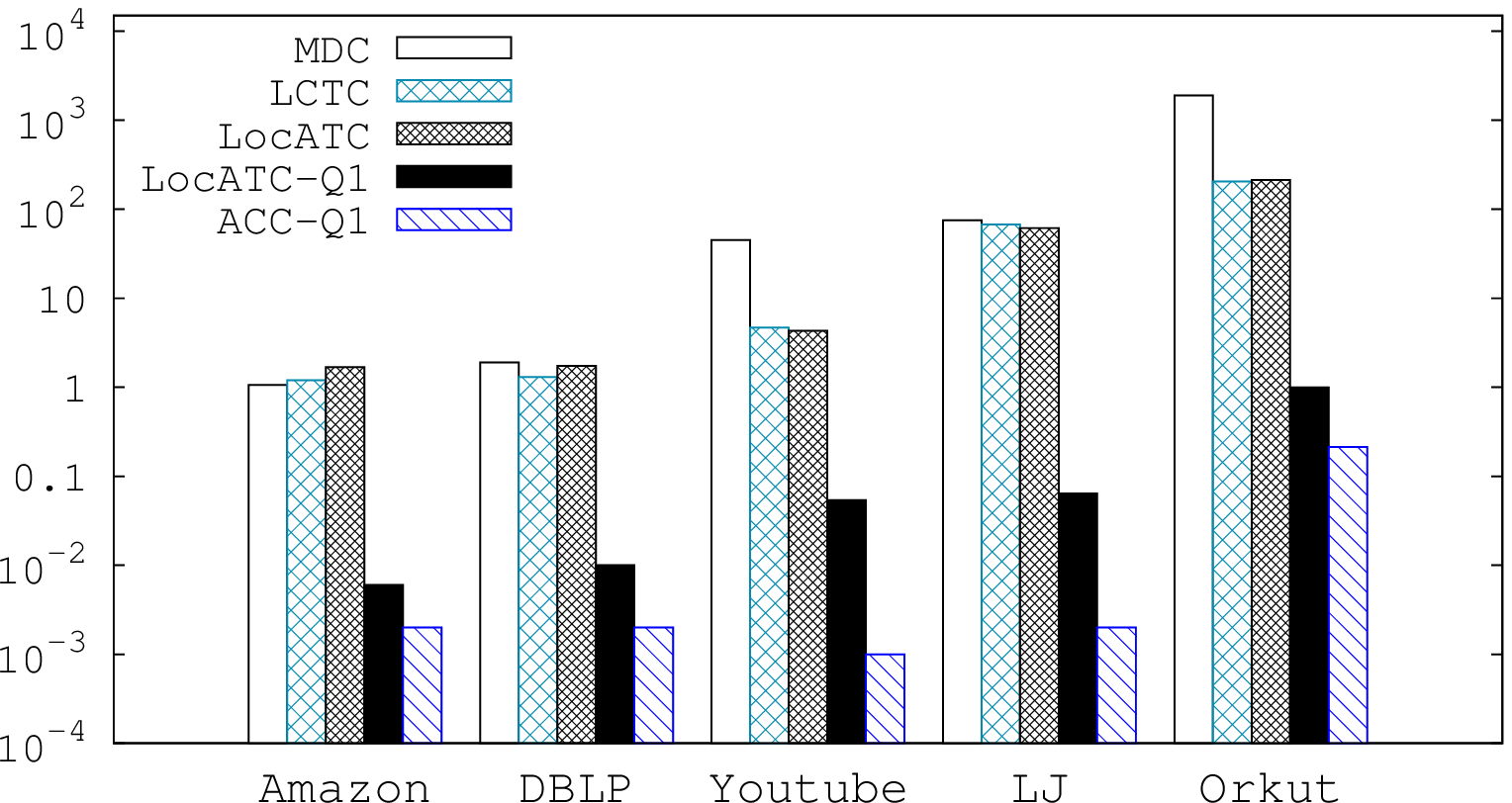}}
}
\vskip -0.2in
\caption{Evaluation on networks with synthetic attributes and ground-truth communities} \label{fig.quality}
\vspace*{-0.3cm}
\end{figure}

To evaluate the effectiveness of different community models, we compare \LATC with
three state-of-the-art methods -- \ACC, \MDC and \LCTC on attributed networks with ground-truth
communities.

\stitle{Networks with real-world attributes.} We experiment with the Krogan network and the 10 Facebook ego-networks, all having real-world attributes. For every ground-truth community, we randomly select a set of query nodes with  size drawn uniformly at random from $[1,16]$. We use 2  representative attributes from the community as query attributes. We choose attributes occurring most frequently in a given community and rarely occurring in other communities as representative attributes. We evaluate the accuracy of detected communities and report the averaged F1-score over all queries on each network.

\new{Figure \ref{fig.fb_F1} shows the F1-score on Krogan, Cornell, Texas, and the 10 Facebook ego-networks.} 
Our method (\LATC) achieves the highest F1-score on most networks, except for facebook ego-networks f104 and f1684. \xin{The reason is that vertices of ground-truth communities in f104 and f1684 are strongly connected in structure, but are not very homogeneous on query attributes.}
\eat{\\ 
****LCTC IS SLIGHTLY BETTER THAN LATC ON THOSE TWO FB NETWORKS. CAN WE OFFER AN EXPLANATION?**** \\
\xin{****See above.****}\\
}\LCTC has the second best performance, and   outperforms \MDC on all networks. We can see that \MDC and \LCTC do not perform as well as \LATC, because those  community models only consider  structure metrics, and ignore  attribute features. Note that for each query with multiple query vertices, the attribute community search method \ACC randomly takes one query vertex as input. We make this explicit and denote it as \ACC-Q1 in Figure \ref{fig.fb_F1}. For comparison, we apply the same query on our method \LATC, and denote it as \LATC-Q1. \LATC-Q1 clearly outperforms \ACC-Q1 in terms of F1-score, showing the superiority of our \atc model. In addition, \LATC achieves higher score than \LATC-Q1, indicating our method can discover more accurate communities with more query vertices. \LL{Furthermore, we also compare the precision and recall of all methods on f414 network in Figure \ref{fig.prec_recall}. \MDC perform the worst on precision, since it considers no query attributes and  includes many nodes that are not in ground-truth communities. \ACC-Q1 is the winner on precision, which is explained by the strict attribute constriantin its definition. On the other hand, in terms of recall, \ACC-Q1 is the worst method as it only identifies a small part of ground-truth communities. Overall, \LATC achieves a good balance between precision and recall. This is also reflected in \LATC achieving  the best F1-score on most datasets (Figure \ref{fig.fb_F1}). }


Figure \ref{fig.fb_time} shows the running time performance of all methods. \new{In terms of supporting multiple query vertices, \LATC runs up to two orders of magnitude faster than \MDC and \LCTC on small ego-networks in Facebook, and \LCTC is the winner on Cornell and Texas  networks. For one query vertex, \ACC-Q1 runs faster than \LATC-Q1, since $k$-cores can be computed quicker than $k$-trusses.}  

\stitle{Networks with synthetic attributes.} In this experiment, we test on 5 large networks -- DBLP,
Amazon, Youtube, LiveJournal, and Orkut, with ground-truth communities and synthetic attributes~\cite{YangL12}. We randomly select 1000 communities from 5000 top-quality ground-truth communities as answers.
For each community, we generate a query $Q=(V_q, W_q)$, where query vertices $V_q$ are randomly selected from this community with a  size randomly drawn from $[1,16]$, and query attributes $W_q$ are the 3 community attributes.
Figure \ref{fig.quality} (a) shows the F1-score.  Our method \LATC achieves the best F1-score among all compared methods on all networks, and \MDC is the worst. The results clearly show the effectiveness and superiority of our \atc model for attributed community search. Moreover, \LATC-Q1 outperforms \ACC-Q1 on most networks. 

Figure \ref{fig.quality} (b) reports the running times of all methods on all networks. As we can see, \LATC runs much faster than \MDC, and is close to \LCTC. This indicates that \LATC can achieve high quality over large networks with an efficiency comparable to \LCTC. Thus, compared to \LCTC, the additional overhead of reasoning with attribute cohesiveness is small while the improvement in quality of communities discovered is significant. In addition, \LATC-Q1 runs much faster than \LATC, which shows the high efficiency of local exploration for one query vertex.

\subsection{Efficiency Evaluation}
We evaluate the various approaches using different queries on ego-facebook-414 (aka f414) and DBLP.

\stitle{Varying query vertex size  $|V_q|$.} We test 5 different values of $|V_q|$, i.e., $\{1, 2, 4, 8, 16\}$ with the default query attribute size $|W_q|=2$. For each value of $|V_q|$, we randomly generate 100 sets of queries, and report the average running time in seconds.  The
results for f414 and DBLP are respectively shown in Figure \ref{fig.qsize_time} (a) and (b). \LATC achieves the best performance, and increases smoothly with the increasing query vertex size.  \bulk is more effective than \basic, thanks to the bulk deletion strategy. Most of the cost  of \bulk and \basic comes from computing the maximal \kdtruss $G_0$. All methods takes less time on f414 than on DBLP network, due to the small graph size of f414. 

\stitle{Varying query attribute size $|W_q|$.}  We test 5 different values of $|W_q|$ from 1 to 5. For each value of $|W_q|$, we randomly generate 100 sets of queries, and report the average running time. We show the result for f414 and DBLP respectively in Figure \ref{fig.wsize_time} (a) and (b). Figure \ref{fig.wsize_time} shows all methods register only a modest increase in running time as $|W_q|$ increases. Again, the local exploration method \LATC significantly outperforms other methods.

\begin{figure}[t]
\vskip -0.4cm
\centering \mbox{
\subfigure[f414]{\includegraphics[width=0.5\linewidth,height=2.5cm]{\EP 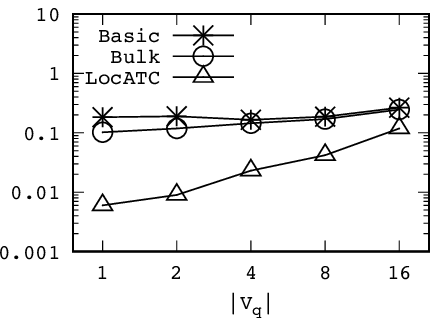}}
\subfigure[DBLP]{\includegraphics[width=0.5\linewidth,height=2.5cm]{\EP 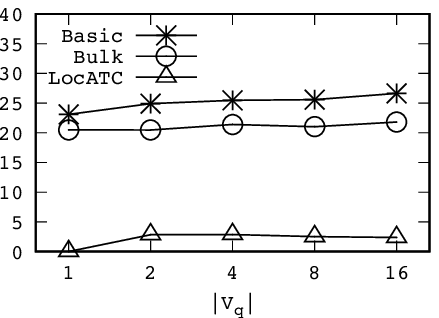}}
} \vskip -0.2in
\caption{Varying query vertex size $|V_q|$: Query Time} \label{fig.qsize_time}
\vspace*{-0.1cm}
\end{figure}

\begin{figure}[t]
\vskip -0.4cm
\centering \mbox{
\subfigure[f414]{\includegraphics[width=0.5\linewidth,height=2.5cm]{\EP 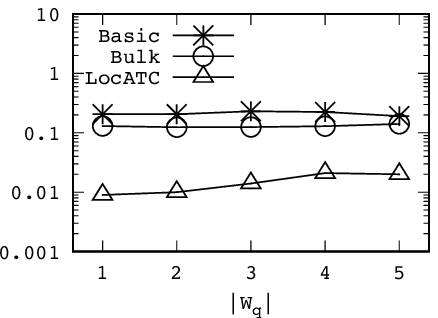}}
\subfigure[DBLP]{\includegraphics[width=0.5\linewidth,height=2.5cm]{\EP 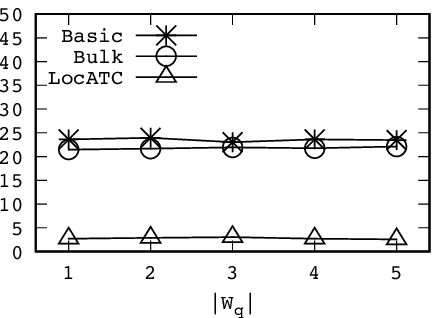}}
} \vskip -0.2in
\caption{Varying query attribute size $|W_q|$: Query Time} \label{fig.wsize_time}
\vspace*{-0.1cm}
\end{figure}

\subsection{Index Construction}


\begin{table}[H]
\begin{center}\vspace*{-0.5cm}
\scriptsize
\caption[]{\textbf{Comparison of index size (in Megabytes) and index construction time (wall-clock time in seconds)}}\label{tab:index}
\begin{tabular}{|c|c| |c|c| |c|c|}
\hline
\multirow{2}{*}{Network} & Graph & \multicolumn{2}{|c||}{Index Size} & \multicolumn{2}{|c|}{Index Time} \\
\cline{3-6} & Size & K-Truss & \ati  & K-Truss & \ati  \\\hline
Krogan & 0.24 & 0.15 & 1.8 & 0.11 & 0.786\\\hline
Amazon &  24  & 19  & 75  & 6.7  & 21.7\\\hline
 DBLP  & 23  & 20  & 57  & 14.2 &  35.2 \\\hline
 Youtube  & 52  & 59  & 105  & 75.6 &  113.6\\\hline
  LiveJournal  & 568  & 666  & 1091  & 2142  & 3556 \\\hline
  Orkut  & 1710 &  2190 &  3451  & 21011 &  26545\\\hline
\end{tabular}\vspace*{-0.6cm}
\end{center}
\end{table}

Table \ref{tab:index} reports the size (MB) and construction time (seconds) of the structural $k$-truss index (K-Truss) and \ati, along with the size of each network. The 10 Facebook ego-networks have similar results, omitted from Table \ref{tab:index} for brevity.  \xin{The size of \ati is comparable to the original graph size and structural $k$-truss index.}
It confirms that the \ati scheme has $O(m+\sum_{w\in \mathcal{A}}|E(G_w)|)$ space complexity. 
\cut{Given $|A|>1000$ on all these networks, it shows the projected attribute graphs are very sparse.} 
The \ati construction time is comparable to $k$-truss index construction and is nearly as efficient. 
\eat{ Index construction for the largest network experimented completes in 8 hours. 
} 
\LL{
It can be seen that query processing efficiency is greatly aided by the \ati. 
For instance, consider the index construction and query processing times on  DBLP network. In Figure \ref{fig.wsize_time}(b) and Figure \ref{fig.qsize_time}(b), the query time of \bulk and \basic without \ati scheme take nearly 20 seconds, while the construction time of \ati is only 35.2 seconds (Table \ref{tab:index}). That is, merely processing two queries more than pays off for the index construction effort. }

\subsection{Parameter Sensitivity Evaluation}

\begin{figure}[t]
\vskip -0.4cm
\centering 
\includegraphics[width=0.5\linewidth,height=2.5cm]{\PP 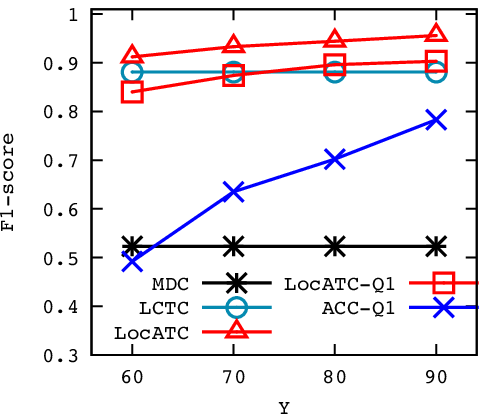}
\vskip -0.2in
\caption{Varying homogeneous attributes on Amazon: F1-score} \label{fig.Amazon_para}
\vspace*{-0.1cm}
\end{figure}

\begin{figure}[t]
\vskip -0.4cm
\centering 
\includegraphics[width=0.5\linewidth,height=2.5cm]{\PP 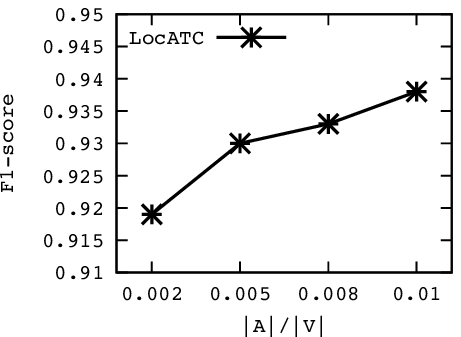}
\caption{Varying parameter $\frac{|\mathcal{A}|}{|V|}$ on Amazon: F1-score} \label{fig.Amazon_AV}
\vspace*{-0.1cm}
\end{figure}

\new{In this experiment, we vary various parameters in used in the synthetic data generation, query generation, and in algorithm definitions, and evaluate the quality and efficiency performance of \LATC. }

\stitle{Varying homogeneous attributes in synthetic datasets. } \new{For each ground-truth community in Amazon, we randomly select 3 attributes, and assign each of these attributes to each of $Z$\% vertices in the community, where $Z$ is a random number in [50, $Y$]. Note that different attributes may have different value of $Z$. The parameter $Y$ is varied from 60 to 90. 
As $Y$ is increased, intuitively the level of homogeneity in the network and in its communities increases. 
The results of F1-socre are shown in Figure \ref{fig.Amazon_para}. As homogeneous attributes in communities increase, \MDC and \LCTC maintain the same F1-socre, while the F1-score of all methods of attributed community search -- \LATC, \LATC-Q1, and \ACC-Q1 -- increases as homogeneity increases. Once again, \LATC is the best method even when the proportion  of homogeneous attributes falls in [50, 60].  \LATC-Q1 beats \ACC-Q1 for all settings of homogeneity.  Similar results can be also observed on other synthetic datasets. }

\stitle{Varying the average number of attribute/vertex $\frac{|\mathcal{A}|}{|V|}$ in synthetic datasets. } \new{In this experiment, we vary the average number of attribute/vertex $\frac{|\mathcal{A}|}{|V|}$ to generate different attribute sets in Amazon. The results are shown in Figure \ref{fig.Amazon_AV}. With the increased $\frac{|\mathcal{A}|}{|V|}$, \LATC performs better. This is because the size of attribute set $\mathcal{A}$ becomes larger, which makes homogeneity of synthetic attributes in different communities more likely. Finally, it bring more challenges to detected accurate communities for a smaller $\frac{|\mathcal{A}|}{|V|}$.    We can obtain similar results on other synthetic datasets. }

\stitle{Varying query vertex size $|V_q|$ and query attribute size $|W_q|$.}  \new{We test the quality performance of \LATC using different queries by varying $|V_q|$ and $|W_q|$. The results are shown in Figure \ref{fig.qw_score} (a) and (b). As we can see, given more information of query vertices and query attributes within communities, our algorithm accordingly performs better.}

\stitle{Varying parameters $\epsilon$, $\gamma$, and $\eta$.} \new{We test the performance of \LATC by varing  $\epsilon$, $\gamma$, and $\eta$.  We used the same query nodes that are selected in Sec. \ref{sec.exp-quality} on f414 network. Similar results can be also observed on other networks with real attributes. The results of F1-socre and query time by varying  $\epsilon$ are respectively reported in Figure \ref{fig.epsilon} (a) and (b). As we can see, \LATC removes a smaller portion of nodes, which achieves a higher F1-score using more query time. In addition, we test different values of  $\gamma$ and report the results in Figure \ref{fig.gamma} (a) and (b). The F1-score remains stable  as $\gamma$ increases from 0.1 to 0.5, and then decreases a little bit for a larger value of $\gamma=1.0$. Thus, the default choice of $\gamma =0.2$ is good at balancing the cohesive structure and homogeneous attributes in an  efficiency way.  Furthermore, we also report the results by varying the parameter $\eta$ in Figure \ref{fig.eta} (a) and (b). As can be seen, the F1-score remains stable with increasing $\eta$, while the running time increases a little with larger $\eta$. The results show that the default setting $\eta =1000$ is large enough for achieving a good balance of efficiency and quality. }

\begin{figure}[t]
\vskip -0.4cm
\centering \mbox{
\subfigure[Varying  $|V_q|$.]{\includegraphics[width=0.5\linewidth,height=2.5cm]{\PP 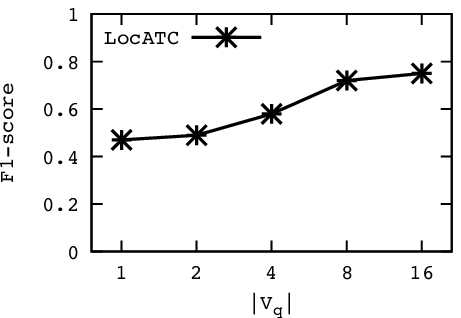}}
\subfigure[Varying $|W_q|$.]{\includegraphics[width=0.5\linewidth,height=2.5cm]{\PP 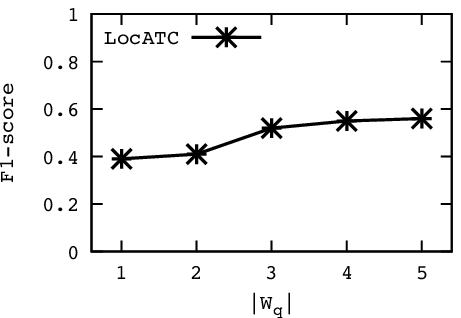}}
} \vskip -0.2in
\caption{Varying queries on f414: F1-score} \label{fig.qw_score}
\vspace*{-0.1cm}
\end{figure}

\begin{figure}[t]
\vskip -0.4cm
\centering \mbox{
\subfigure[F1-score]{\includegraphics[width=0.5\linewidth,height=2.5cm]{\PP 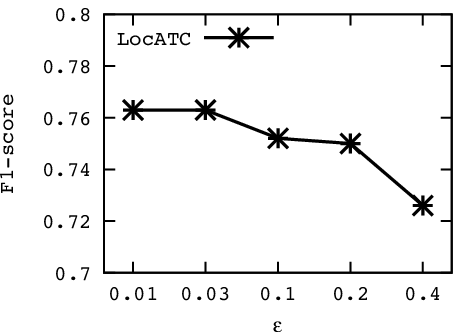}}
\subfigure[Query Time]{\includegraphics[width=0.5\linewidth,height=2.5cm]{\PP 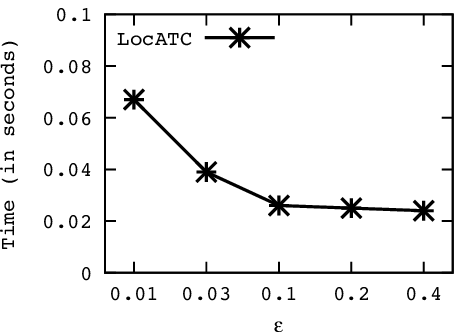}}
} \vskip -0.2in
\caption{Varying $\epsilon$ on f414} \label{fig.epsilon}
\vspace*{-0.1cm}
\end{figure}

\begin{figure}[t]
\vskip -0.4cm
\centering \mbox{
\subfigure[F1-score]{\includegraphics[width=0.5\linewidth,height=2.5cm]{\PP 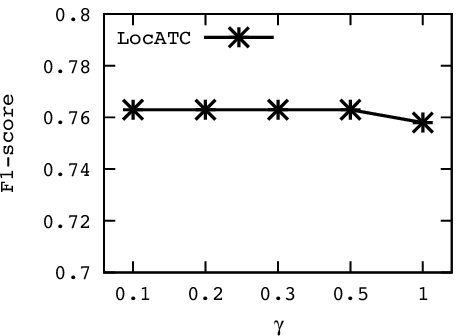}}
\subfigure[Query Time]{\includegraphics[width=0.5\linewidth,height=2.5cm]{\PP 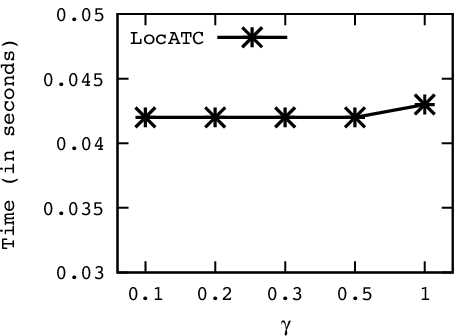}}
} \vskip -0.2in
\caption{Varying $\gamma$ on f414} \label{fig.gamma}
\vspace*{-0.1cm}
\end{figure}

\begin{figure}[t]
\vskip -0.4cm
\centering \mbox{
\subfigure[F1-score]{\includegraphics[width=0.5\linewidth,height=2.5cm]{\PP 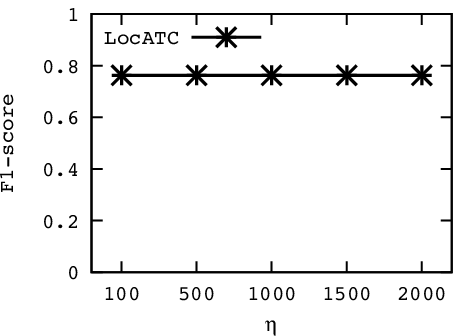}}
\subfigure[Query Time]{\includegraphics[width=0.5\linewidth,height=2.5cm]{\PP 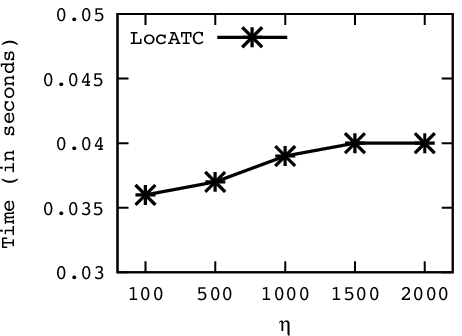}}
} \vskip -0.2in
\caption{Varying $\eta$ on f414} \label{fig.eta}
\vspace*{-0.1cm}
\end{figure}

\subsection{Bad Query Evaluation}\label{sec.badexp}

\begin{figure}[h]
\vskip -0.4cm
\centering \mbox{
\subfigure[Density]{\includegraphics[width=0.5\linewidth,height=2.5cm]{\PP 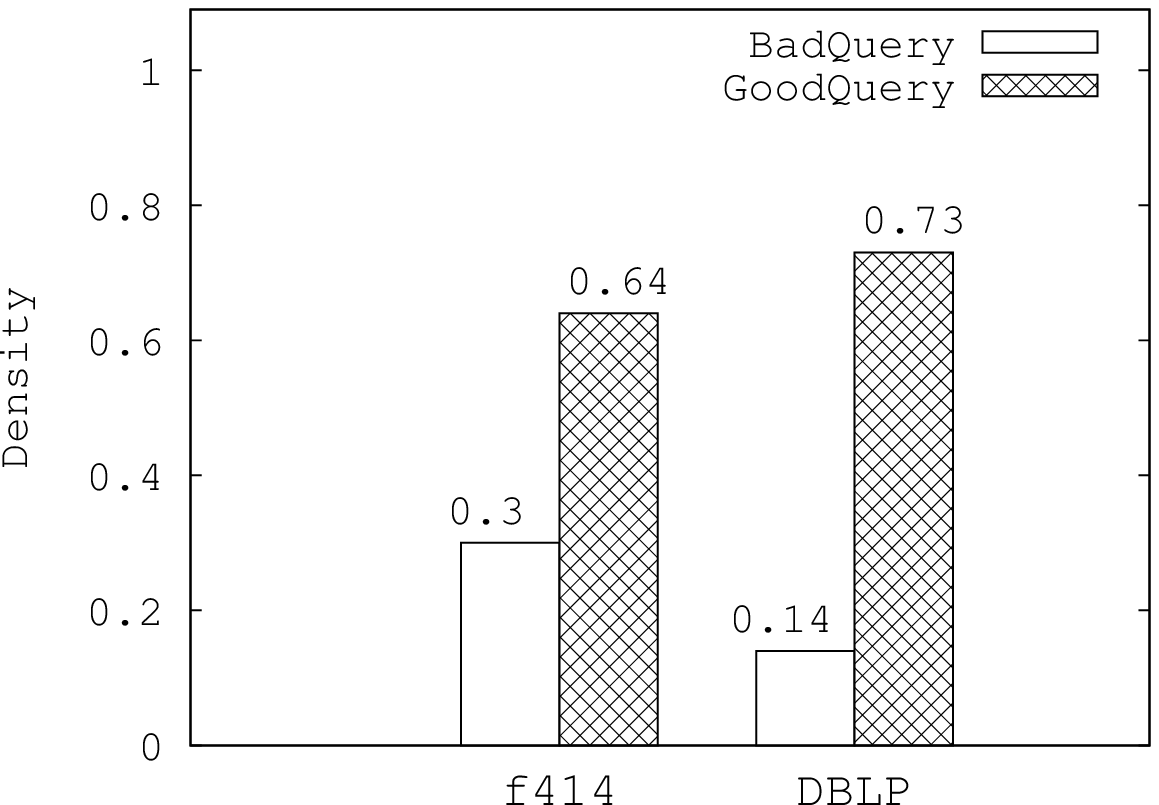}}
\subfigure[Query Time]{\includegraphics[width=0.5\linewidth,height=2.5cm]{\PP 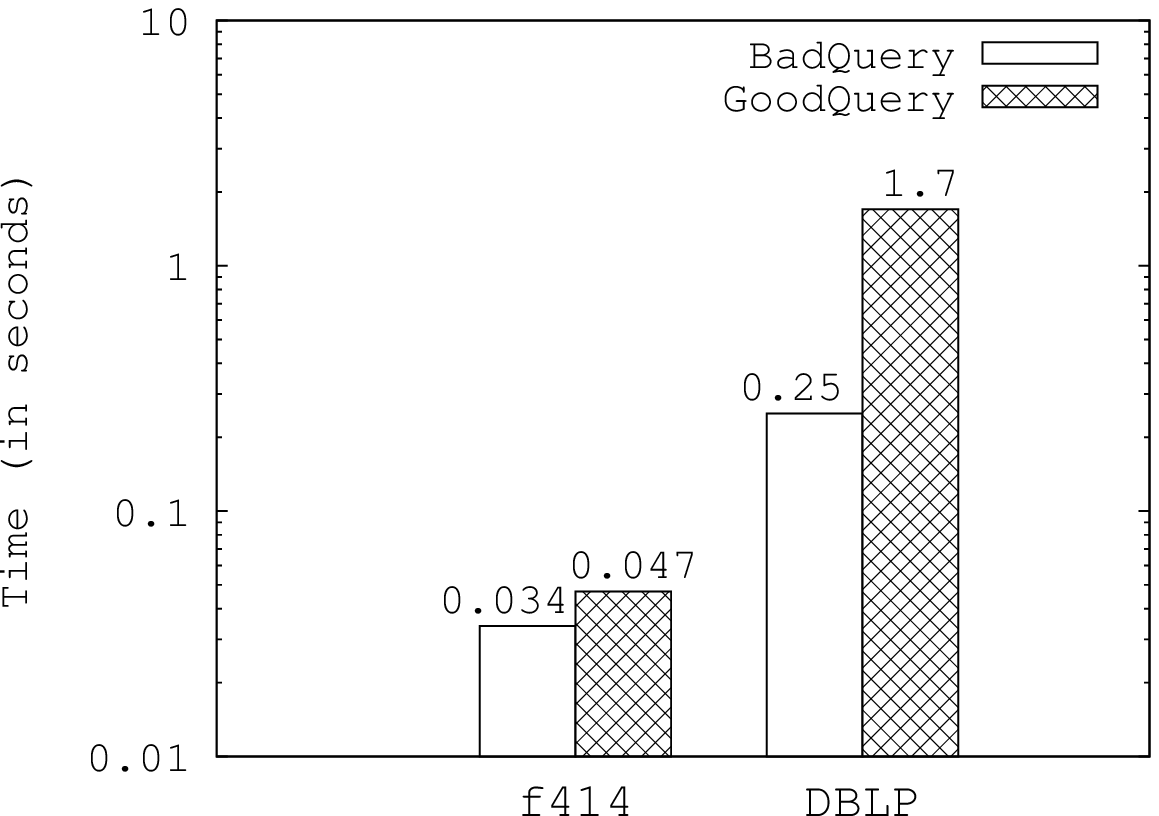}}
} \vskip -0.2in
\caption{Testing Bad Queries on f414 and DBLP} \label{fig.bad}
\vspace*{-0.1cm}
\end{figure}

\new{In this experiment, we use bad queries to test the performance of \LATC on f414 and DBLP networks. 
We generate bad queries by randomly choosing query nodes and query attributes from \emph{different} ground truth communities. We test a set of 100 bad queries generated in this manner.  We also test 100 queries that are selected in Sec.~\ref{sec.exp-quality} as good queries. We intuitively expect bad queries to result in discovered communities with poor density, compared to good queries. We compare the quality of discovered communities $C$ in terms of edge density, i.e.,  $\frac{|E(C)|}{|V(C)|}$, \LL{averaged across the 100 queries.} Figure \ref{fig.bad}(a) shows the results of edge density. Compared with bad queries, \LATC can find communities with larger densities for good queries. Figure \ref{fig.bad}(b) shows the average running times on good and bad queries. \LATC processes bad queries much faster than good queries, which achieves 6.8 times of efficiency improvement on DBLP network. Intuitively, \LATC can quickly return empty answers for bad queries, if the algorithm can determine the weak structure of query nodes and heterogeneous attributes of neighbors in the proximity of query nodes.}



\begin{figure}[h]
\vskip -0.3cm
\centering \mbox{
\subfigure[\atc \& a protein complex]{\includegraphics[width=0.45\linewidth,height=2.5cm]{\PP 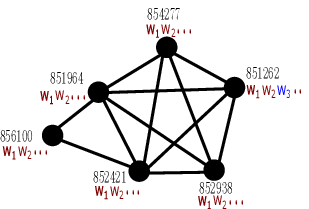}}
\subfigure[(3,3)-truss]{\includegraphics[width=0.55\linewidth,height=2.5cm]{\PP 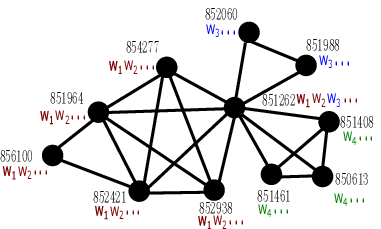}}
} \vskip -0.2in
\caption{$Q=(\{q_1, q_2\}, \{w_1, w_2\})$ where $q_1=854277$, $q_2=856100$ and $w_1=$ ``GO:0001009", $w_2=$``GO:0001041"} \label{fig.case}
\vspace*{-0.5cm}
\end{figure}

\subsection{Case Study on PPI network}
\new{Besides the quality evaluation measured by F1-score, we also apply the \LATC algorithm on the protein-protein interaction (PPI) network Krogan. Recall that the performance of  \LATC handling bad queries has been tested in Section \ref{sec.badexp}, and we test good queries here.
We examine the details of the discovered protein complexes to investigate biologically significant clues, which help us to better understand the protein complexes.  
Figure \ref{fig.case}(a) shows one complex ``transcription factor TFIIIC complex'' in sccharomyces cerevisiae, which is identified by biologists previously. 
The graph contains 6 nodes and 12 edges, with density 0.8 and diameter 2.
We adopt the following procedure for checking whether a protein is present in a complex. Taking gene id  ``854277'' as an example, we can go to the NCBI\footnote{\scriptsize{\url{https://www.ncbi.nlm.nih.gov/}}}, input ``854277'' in the search box, and select the category of ``Gene'', then we  will obtain information related to this gene, from which we can check  whether this gene is one of the proteins in the protein complex\footnote{\scriptsize{\url{http://wodaklab.org/cyc2008/resources/CYC2008_complex.tab}}}. 
Similar with the procedure of good query generation in Sec.~\ref{sec.exp-quality}, we randomly sample a  query as $Q=(V_q, W_q)$ where $V_q=$\{854277, 856100\} and $W_q=$\{``GO:0001009", ``GO:0001041"\},
 and set the parameters $k=3 $ and $d=3$.  
 To illustrate the importance of the consideration of protein attributes in detecting protein complexes, we simply use the structure and find the $(3, 3)$-truss shown in Figure \ref{fig.case}(b). This community contains 11 proteins including 6 proteins of the ground-truth complex of Figure \ref{fig.case}(a). The other 5 proteins not present in the ground-truth complex are associated with no query attributes, but have other attributes $w_3$ and $w_4$, as shown in Figure \ref{fig.case}(b). When we look up the database of Gene Ontology\footnote{\scriptsize{\url{http://geneontology.org/ontology/go-basic.obo}}}, we know that  the attributes of ``biological processes"  as ``GO:0001009" and ``GO:0001041" respectively represent  ``transcription from RNA polymerase III hybrid type promoter" and ``transcription from RNA polymerase III type 2 promoter". 
 Except query attributes, we omitted details of other attributes from Figure \ref{fig.case} for simplicity.
 {\sl \LATC is able to identify all proteins that preform the same biological process of transcription from RNA polymerase.} Overall, \LATC successfully identifies all proteins that constitute the ground-truth complex in Figure \ref{fig.case}(a). Other than these two homogeneous attributes, interestingly, we also  discover another two attributes shared by all proteins in terms of ``molecular functions". 
Specifically, the attributes ``GO:0001003" and ``GO:0001005" respectively perform DNA binding activity  as``RNA polymerase III type 2 promoter sequence-specific DNA binding" and `` RNA polymerase III type 1 promoter sequence-specific DNA binding". Overall, this complex exists in the cell nucleus, according to the same attribute ``cellular components" of ``GO:0005634" in all proteins. } 

\section{Conclusion}
\label{sec:conc} 
In this work, we propose an attributed truss community (\atc) model that allows to find a community containing query nodes with cohesive and tight structure, also sharing homogeneous query attributes. The problem of finding an \atc is \emph{NP}-hard. We also show that the attribute score function is not monotone, submodular, or supermodular, indicating approximation algorithms may not be easy to find. We propose several carefully designed strategies to quickly find high-quality communities. We design an elegant and compact index, \ati, and implement an efficient query processing algorithm, which exploits  local exploration and bulk deletion. Extensive experiments reveal that ground-truth communities and social circles can be accurately found by our \atc model, and that our model and algorithms significantly outperform previous approaches. Several interesting questions remain. Some examples include attributed community search over heterogeneous graphs and edge-weighted graphs, and w.r.t. weighted query attributes.

\bibliographystyle{abbrv}
{\scriptsize
\bibliography{truss}
}

\end{document}